\theoremstyle{problem}
\theoremstyle{plain}
\newtheorem{thm}{Theorem}
\newtheorem{lem}[thm]{Lemma}
\theoremstyle{definition}
\newtheorem{defn}{Definition}
\theoremstyle{definition}
\newtheorem{rem}{Remark}
\theoremstyle{definition}
\theoremstyle{definition}
\theoremstyle{definition}
\newcommand{\ud}{\mathrm{deg}}
\newcommand{\qqed}{\hfill $\Box$}
\begin{document}

\pagestyle{fancy}
\pagestyle{plain}

\begin{CJK}{GBK}{song}

%
\title{Topological Coding and Topological Matrices Toward Network Overall Security}


\author{\IEEEauthorblockN{Bing Yao$^{1,6}$, Meimei  Zhao$^{2}$, ~Xiaohui Zhang$^{3}$, Yarong Mu$^{1}$, Yirong Sun$^{1}$, Mingjun Zhang$^{4}$, Sihua Yang$^{4}$\\
Fei Ma$^{5,\ddagger}$,~Jing Su$^{5}$, ~Xiaomin Wang$^{5}$,~Hongyu Wang$^{5}$, ~Hui Sun$^{5}$}
\IEEEauthorblockA{{1} College of Mathematics and Statistics,
 Northwest Normal University, Lanzhou, 730070, CHINA}
 \IEEEauthorblockA{{2} College of Science, Gansu Agricultural University, Lanzhou 730070, CHINA}
 \IEEEauthorblockA{{3} College of Mathematics and Statistics, Jishou University, Jishou, Hunan, 416000, CHINA}
 \IEEEauthorblockA{{4} School of Information Engineering, Lanzhou University of Finance and Economics, Lanzhou, 730030, CHINA}
 \IEEEauthorblockA{{5} School of Electronics Engineering and Computer Science, Peking University, Beijing, 100871, CHINA}
\IEEEauthorblockA{{6} School of Electronics and Information Engineering, Lanzhou Jiaotong University, Lanzhou, 730070, CHINA\\
$^\dagger$ Corresponding authors: Fei Ma's email: 1337725455@qq.com}

\thanks{Manuscript received June 1, 2016; revised August 26, 2016.
Corresponding author: Bing Yao, email: yybb918@163.com.}}


%


\maketitle

\begin{abstract}
A mathematical topology with matrix is a natural representation of a coding relational structure that is found in many fields of the world. Matrices are very important in computation of real applications, s ce matrices are easy saved in computer and run quickly, as well as matrices are convenient to deal with communities of current networks, such as Laplacian matrices, adjacent matrices in graph theory. Motivated from convenient, useful and powerful matrices used in computation and investigation of today's networks, we have introduced Topcode-matrices, which are matrices of order $3\times  q$ and differ from popular matrices applied in linear algebra and computer science. Topcode-matrices can use numbers, letters, Chinese characters, sets, graphs, algebraic groups \emph{etc.} as their elements. One important thing is that Topcode-matrices of numbers can derive easily number strings, since number strings are text-based passwords used in information security. Topcode-matrices can be used to describe topological graphic passwords (Topsnut-gpws) used in information security and graph connected properties for solving some problems coming in the investigation of Graph Networks and Graph Neural Networks proposed by GoogleBrain and DeepMind. Our topics, in this article, are: Topsnut-matrices, Topcode-matrices, Hanzi-matrices, adjacency ve-value matrices and pan-Topcode-matrices, and some connections between these Topcode-matrices will be proven. We will  discuss algebraic groups obtained from the above matrices,  graph groups, graph networking groups and number string groups for  encrypting different communities of dynamic networks. The operations and results on our matrices help us to set up our overall security mechanism to protect networks. \\[4pt]
\end{abstract}
\textbf{\emph{Keywords---Graph theory; coding matrix; topological matrix; Abelian additive group; network security.}}


%
\IEEEpeerreviewmaketitle

\section{Introduction}

``Since the quantum revolution, we have become increasingly aware that our world is not continuous, but discrete. We should look at the world by algebra of view.'' said by Xiaogang Wen, a member of Academician of the National Academy of Sciences in \cite{Xiaogang-Wen-2018-Sun-Yat-sen-University}. Graphs are natural representations for encoding relational structures that are encountered in many domains (Ref. \cite{Li-Gu-Dullien-Vinyals-Kohli-arXiv2019}). These are our ideas to write this article.

\subsection{Investigation background}

The calculation of graph structure data is widely used in various fields, such as molecular analysis of computational biology and chemistry, analysis of natural language understanding, or graph structure analysis of knowledge map, \emph{etc.} (Ref. \cite{Li-Gu-Dullien-Vinyals-Kohli-arXiv2019}). Cryptography is useful and important to the security of today's networks. Graphical passwords (GPWs) were proposed for a long tim (Ref. \cite{Suo-Zhu-Owen-2005, Biddle-Chiasson-van-Oorschot-2009, Gao-Jia-Ye-Ma-2013}), especially, two-dimension codes are popular and powerful used in the world. Another type of GPWs is Topsnut-gpw (the abbreviation of ``Graphical passwords based on the idea of topological structure plus number theory'') that was proposed first by Wang \emph{et al.} in \cite{Wang-Xu-Yao-2016} and \cite{Wang-Xu-Yao-Key-models-Lock-models-2016}.

Topsnut-gpws differ from the existing GPWs, because of they can be expressed by popular matrices (Ref. \cite{Bondy-2008}). Matrices are useful and powerful for saving graphs and solving many systems of linear equations, as known. However, a remarkable advantage of Topsnut-gpws is easily to produce Text-based passwords (TB-paws), since the application of TB-paws are in fashion of information networks. Six Topsnut-gpws (a)-(f) are shown in Fig.\ref{fig:1-example}.

In general, matrices are very important in computation of real applications, since matrices are easy saved in computer and run quickly, as well as matrices are convenient to deal with communities of current networks. An successful example is Laplacian matrix, also known as admittance matrix, Kirchhoff matrix or discrete Laplacian operator, that is mainly used in graph theory as a matrix representation of a graph.

A $(p,q)$-graph $G$ with vertex set $V(G)=\{u_1,u_2,\dots ,u_p\}$ and $q$ edges has its own \emph{adjacent matrix} $A(G)=(a_{ij})_{p\times p}$ with $a_{ij}=1$ if $u_i$ is adjacent to $u_j$, otherwise $a_{ij}=0$ and $a_{ii}=0$; its \emph{degree matrix} $D_G=(d_{ij})_{p\times p}$, $d_{ij}=0$ if $i\neq j$, otherwise $d_{ii}$ is equal to the \emph{degree} $\textrm{deg}_G(u_{i})$ of vertex $u_{i}$. Then $G$ has its own \emph{Laplacian matrix} as follows
$$
L(G)=(l_{ij})_{p\times p}=D_G-A(G)
$$ with $l_{ij}=d_{ij}-a_{ij}$. Moreover, a \emph{symmetric normalized Laplacian matrix} is defined as
$${
\begin{split}
L^{sym}&=(l^{sym}_{ij})_{p\times p}=(D_G)^{-1/2}L~(D_G)^{-1/2}\\
&=I-(D_G)^{-1/2}A(G)(D_G)^{-1/2}
\end{split}}
$$
with $l^{sym}_{ij}=1$ if $i=j$ and $\textrm{deg}_G(u_{i})\neq 0$; and
$$
l^{sym}_{ij}=-\frac{1}{\sqrt{\textrm{deg}_G(u_{i})\cdot \textrm{deg}_G(u_{j})}}
$$
if $i\neq j$ and $u_i$ is adjacent with $u_j$; otherwise $l^{sym}_{ij}=0$. However, our matrices that will be introduced here differ from Laplacian matrices and adjacent matrices.

Before starting our topic in this article, let us see an example shown in a matrix (\ref{eqa:example}) and Fig.\ref{fig:1-example}. There are six Topsnut-gpws (a)-(f) pictured in Fig.\ref{fig:1-example}, in which each of (b)-(f) Topsnut-gpw admits a \emph{splitting odd-edge-magic coloring}, except (a) that admits an odd-edge-magic total labelling. Clearly, the topological structure of each of these six Topsnut-gpws is not isomorphic to any one of others. However, each of these six Topsnut-gpws can be expressed by the matrix $A$ shown in (\ref{eqa:example}). It is not difficult to see that the Topsnut-gpw (a) in Fig.\ref{fig:1-example} can be split into nine edges, in other word, the matrix $A$ derives disconnected and connected Topsnut-gpws more than six. On the other hands, we can use the matrix $A$ shown in (\ref{eqa:example}) as a \emph{public key}, and users provided six Topsnut-gpws (a)-(f) as \emph{private keys} in real application. Such one-vs-more technique increase the offensive difficulty and huge times of deciphering passwords in order to drive attackers out of networks.

\begin{equation}\label{eqa:example}
\centering
{
\begin{split}
A= \left(
\begin{array}{ccccccccc}
7 & 5 & 7 &1 & 5 & 1 &1 & 1& 1\\
1 & 3 & 5 &7 & 9 & 11 &13& 15& 17\\
18& 18 & 14 &18& 12 & 14 &12& 10& 8
\end{array}
\right)
\end{split}}
\end{equation}

\begin{figure}[h]
\centering
\includegraphics[width=8.6cm]{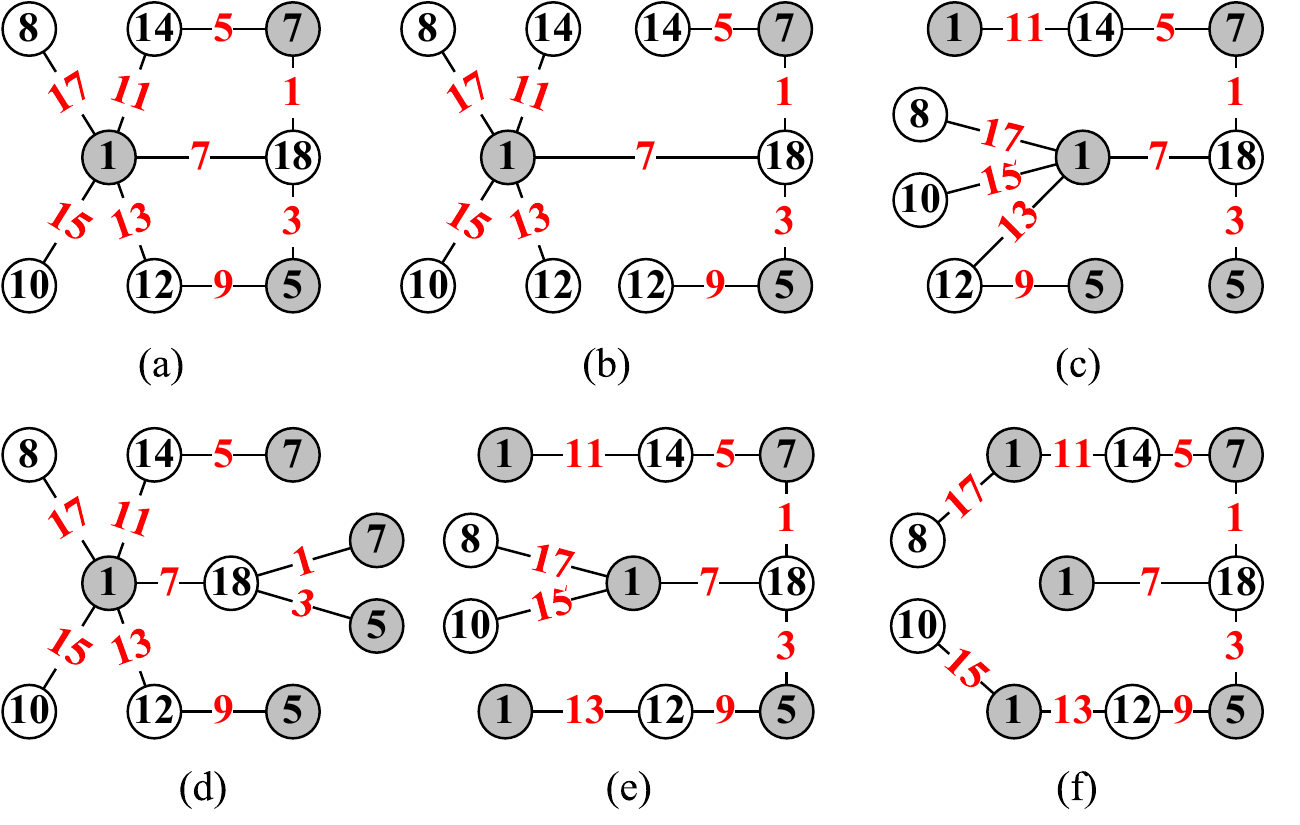}\\
\caption{\label{fig:1-example} {\small (a)-(f) are Topsnut-gpws with (splitting) odd-edge-magic colorings and corresponding the Topcode-matrix $A$, where (a) has no perfect matching, (f) has a perfect matching.}}
\end{figure}

We call the matrix $A$ shown in (\ref{eqa:example}) a \emph{topological coding matrix} (Topcode-matrix) with $q=9$ according to the following Definition \ref{defn:Topcode-matrix}. The above example derives the following questions:

\vskip 0.4cm

\begin{asparaenum}[Que-1. ]
\item What properties do Topcode-matrices have?
\item Does each Topcode-matrix derive a graph?
\item What operations exist on Topcode-matrices?
\item What applications do Topcode-matrices have?
\end{asparaenum}

\vskip 0.4cm

In \cite{Battaglia-27-authors-arXiv1806-01261v2}, the authors ask for: (1) Where do the graphs come from that graph networks operate over? Moreover, many underlying graph structures are much more \emph{sparse} than a fully connected graph, and it is an open question how to induce this \emph{sparsity}. (2) How to adaptively modify graph structures during the course of computation?

So we will do graphical study on various matrices mentioned here for exploring the above problems and ``the interpretability of the behavior of graph networks'' proposed in \cite{Battaglia-27-authors-arXiv1806-01261v2}, in which one type of matrices can be considered as mathematical models of Chinese characters, since ``\emph{If there is a country in the world, every word of her can become a poem, a painting, then there must be Chinese characters}''. Furthermore, we will build up systems of linear equations as mathematical models of Chinese characters here, another mathematical models of Chinese characters was studied in \cite{Yao-Mu-Sun-Sun-Zhang-Wang-Su-Zhang-Yang-Zhao-Wang-Ma-Yao-Yang-Xie2019}. Most of topics here need readers have learned basic knowledge of linear algebra and graph theory.

\subsection{Preliminary}
In this article, we use the following notation and terminology, the others no mentioned can be found in \cite{Bondy-2008}. To be more precise, we define these terms as:

\begin{asparaenum}[\textrm{Term}-1. ]
\item \emph{Particular sets}. The notation $[a,b]$ indicates a set $\{a,a+1,\dots, b\}$ for two integers $a,b$ with respect to range $b>a\geq 0$, $[s,t]^o$ indicates another set $\{s,s+2,\dots, t\}$ for two odd numbers $s,t$ falling into $t>s\geq 1$. Moreover, $S_{k,d}=\{k,k+d,k+2d, \dots ,k+(q-1)d\}$ and $S_{2k,2d}=\{2k+2d,2k+4d,2k+6d, \dots ,2k+2qd\}$ with integers $k,d\geq 1$ and $q\geq 2$.
\item \emph{Networks/graphs.} A network, also, is a graph in mathematics. A $(p,q)$-graph has $p$ vertices and $q$ edges.
\item \emph{Neighbor set and vertex degree.} The symbol $N_{ei}(u)$ stands for the set of all neighbors of a vertex $u$, thus, the number $\ud_G(u)=|N_{ei}(u)|$ is called the degree of $u$, where $|S|$ is the cardinality of elements of s set $S$.
\item \emph{Leaves.} A \emph{leaf} $x$ is a vertex of degree one, also, $x$ is called a \emph{appended vertex} of $y\in N_{ei}(x)=\{y\}$.
\item \emph{Sets of sets}. Let $S$ be a set, The set of all subsets is denoted as $S^2=\{X:~X\subseteq S\}$, and $S^2$ contains no empty set at all. For example, a set $S=\{a,b,c,d,e\}$, so $S^2$ has its own elements $\{a\}$, $\{b\}$, $\{c\}$, $\{d\}$, $\{e\}$, $\{a,b\}$, $\{a,c\}$, $\{a,d\}$, $\{a,e\}$, $\{b,c\}$, $\{b,d\}$, $\{b,e\}$, $\{c, d\}$, $\{c, e\}$, $\{d,e\}$, $\{a,b,c\}$, $\{a,b,d\}$, $\{a,b,e\}$, $\{a,c,d\}$, $\{a,c,e\}$, $\{a,d,e\}$, $\{b,c,d\}$, $\{b,c,e\}$, $\{a,b,c,d\}$, $\{a,b,c,e\}$, $\{a,c,d,e\}$, $\{b,c,d,e\}$ and $\{a,b,c,d,e\}$.
\end{asparaenum}

\begin{defn}\label{defn:set-ordered-odd-graceful-labelling}
(\cite{Gallian2018, Bing-Yao-Cheng-Yao-Zhao2009, Zhou-Yao-Chen-Tao2012}) Suppose that a bipartite $(p,q)$-graph $G$ with partition $(X,Y)$ admits a vertex labelling $f:V(G) \rightarrow [0,2q-1]$ (resp. $[0,q]$), such that every edge $uv$ is labeled as $f(uv)=|f(u)-f(v)|$ holding $f(E(G))=[1, 2q-1]^o$ (resp. $[1,q]$), we call $f$ an \emph{odd-graceful labelling} of $G$. Furthermore, if $f$ holds $\max\{f(x):~x\in X\}< \min\{f(y):~y\in Y\}$ ($f_{\max}(X)<f_{\min}(Y)$ for short), then $f$ is called a \emph{set-ordered odd-graceful labelling} (resp. a set-ordered graceful labelling).\qqed
\end{defn}

\begin{defn}\label{defn:splitting-(odd)graceful-coloring}
\cite{Yao-Mu-Sun-Zhang-Yang-Wang-Wang-Su-Ma-Sun-2019} Suppose that a connected $(p,q)$-graph $G$ admits a coloring $f:V(G) \rightarrow [0,q]$ (resp. $[0,2q-1]$), such that $f(u)=f(v)$ for some pairs of vertices $u,v\in V(G)$, and the edge label set $f(E(G))=\{f(uv)=|f(u)-f(v)|: ~uv\in E(G)\}=[1,~q]$ (resp. $[1,2q-1]^o$), then we call $f$ a \emph{splitting graceful coloring} (resp. splitting odd-graceful coloring). \qqed
\end{defn}

In general, we have the following definition of splitting $\epsilon$-colorings:

\begin{defn}\label{defn:splitting-(odd)graceful-coloring}
$^*$ A connected $(p,q)$-graph $G$ admits a coloring $f:S \rightarrow [a,b]$, where $S\subseteq V(G)\cup E(G)$, and there exist $f(u)=f(v)$ for some distinct vertices $u,v\in V(G)$, and the edge label set $f(E(G))$ holds an $\epsilon$-condition, so we call $f$ a \emph{splitting $\epsilon$-coloring} of $G$. \qqed
\end{defn}

Some connected graphs admitting splitting \emph{odd-edge-magic colorings} are shown in Fig.\ref{fig:1-example} (b)-(f).

\section{Topcode-matrices and other matrices}

In this section, we will work on the following two topics:

(A) Particular matrices: (i) Topsnut-matrices; (ii) Topcode-matrices; (iii) Hanzi-matrices; (iv) Adjacency ve-value matrices.

(B) Groups on the above four classes of matrices, and number string groups.

\subsection{Topcode-matrices}

We are motivated from Topsnut-matrices made by Topsnut-gpws, and define a general concept, called \emph{Topcode-matrices}.

\vskip 0.4cm

\subsubsection{Concept of Topcode-matrices}\quad

\begin{defn}\label{defn:Topcode-matrix}
$^*$ A \emph{Topcode-matrix} (topological coding matrix) is defined as
\begin{equation}\label{eqa:Topcode-matrix}
\centering
{
\begin{split}
T_{code}&= \left(
\begin{array}{ccccc}
x_{1} & x_{2} & \cdots & x_{q}\\
e_{1} & e_{2} & \cdots & e_{q}\\
y_{1} & y_{2} & \cdots & y_{q}
\end{array}
\right)=
\left(\begin{array}{c}
X\\
E\\
Y
\end{array} \right)_{3\times q}\\
&=(X~E~Y)^{T}_{3\times q}
\end{split}}
\end{equation}\\
where \emph{v-vector} $X=(x_1 ~ x_2 ~ \cdots ~x_q)$, \emph{e-vector} $E=(e_1$ ~ $e_2 $ ~ $ \cdots $ ~ $e_q)$, and \emph{v-vector} $Y=(y_1 ~ y_2 ~\cdots ~ y_q)$ consist of integers $e_i$, $x_i$ and $y_i$ for $i\in [1,q]$. We say the Topcode-matrix $T_{code}$ to be \emph{evaluated} if there exists a function $f$ such that $e_i=f(x_i,y_i)$ for $i\in [1,q]$, and call $x_i$ and $y_i$ to be the \emph{ends} of $e_i$.\qqed
\end{defn}

In Definition \ref{defn:Topcode-matrix}, we collect all different elements in two v-vectors $X$ and $Y$ into a set $(XY)^*$, and all of different elements in the e-vector $E$ into a set $E^*$, as well as $x_i\neq y_i$ with $i\in [1,q]$. Moreover, graphs have their own incident matrices like Topcode-matrices defined in Definition \ref{defn:Topcode-matrix} (Ref. \cite{Bondy-2008}).

A similar concept, \emph{Topsnut-matrix}, was introduced in \cite{Sun-Zhang-Zhao-Yao-2017, Yao-Sun-Zhao-Li-Yan-2017, Yao-Zhang-Sun-Mu-Sun-Wang-Wang-Ma-Su-Yang-Yang-Zhang-2018arXiv} for studying Topsnut-gpws and producing TB-paws. Pr\"{u}fer Code is a part of Topcode-matrices and can be used to prove Cayley's formula $\tau(K_n) = n^{n-2}$, where $\tau(K_n)$ is the number of spanning trees in a complete graph $K_n$ (Ref. \cite{Bondy-2008}).

We point that Topcode-matrices mentioned here differ from adjacent matrices of graphs of graph theory. For example, the Topcode-matrix (\ref{eqa:example}) describes two Topsnut-gpws (A) and (B) shown in Fig.\ref{fig:adjacent-no-topo}, but it differs from any one of two adjacent matrices shown in Fig.\ref{fig:adjacent-no-topo}. An adjacent matrix corresponds a unique graph, however a Topcode-matrix may correspond two or more graphs (Topsnut-gpws). It is easy to see that we need more spaces to save adjacent matrices in computer. We observe a fact as follows:

\begin{thm}\label{thm:two-Topsnut-gpws-or-more}
If a Topcode-matrix $T_{code}$ defined in Definition \ref{defn:Topcode-matrix} corresponds a connected Topsnut-gpw not being a tree, then $T_{code}$ corresponds at least two Topsnut-gpws or more.
\end{thm}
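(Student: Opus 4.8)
The plan is to start from the given connected Topsnut-gpw $G$ that realizes $T_{code}$ and manufacture a second, non-isomorphic Topsnut-gpw $G'$ that is still evaluated by the \emph{same} Topcode-matrix, exploiting the fact that a connected graph which is not a tree must contain a cycle. Recall that a Topsnut-gpw corresponding to $T_{code}$ is a graph equipped with a coloring $f$ whose list of edge-triples $\big(f(\text{end}),\,f(\text{edge}),\,f(\text{end})\big)$, read column by column, reproduces $T_{code}$; in particular, by Definition \ref{defn:Topcode-matrix}, this coloring is \emph{evaluated}, i.e. $e_i=f(x_i,y_i)$ for $i\in[1,q]$.

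First I would record the elementary counting fact: if $G$ is a connected $(p,q)$-graph that is not a tree, then $q\geq p$, so $G$ contains at least one cycle $C$. Pick any edge $e=uv$ lying on $C$. Since $e$ lies on a cycle it is not a bridge, hence $G-e$ is still connected.

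Next I would build $G'$ by \emph{splitting off a copy of an endpoint}: adjoin to $G-e$ a brand new vertex $v'$, extend the coloring by setting $f(v'):=f(v)$, and add the single edge $uv'$ carrying the same edge color that $e$ carried in $G$. Because $v'$ is new, $uv'$ duplicates no existing edge, so $G'$ is still a (simple) graph; because $G-e$ is connected and $u\in V(G-e)$, the graph $G'$ is connected; and since $|V(G')|=p+1>p=|V(G)|$, $G'$ is not isomorphic to $G$. Finally I would verify column-by-column invariance: every edge of $G'$ other than $uv'$ keeps both of its endpoints, their colors and its own color, hence its triple; and the triple of $uv'$ is $\big(f(u),\,f(e),\,f(v')\big)=\big(f(u),\,f(e),\,f(v)\big)$, which is exactly the column of $T_{code}$ formerly produced by $e$ in $G$. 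Thus $G'$ is a Topsnut-gpw evaluated by $T_{code}$ and distinct from $G$, which proves the claim. Iterating this move — each time splitting an edge that currently lies on a cycle — produces the whole chain of gpws and, eventually, the fully split matching version, as illustrated by gpw (a) in Fig.\ref{fig:1-example}.

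I expect the main difficulty to be bookkeeping rather than mathematics: one must check that the split operation keeps the object a legitimate Topsnut-gpw with \emph{the same} Topcode-matrix — that the extended $f$ is still a well-defined evaluated coloring, that simplicity is not violated, and that $G'$ genuinely differs from $G$. All three points are handled above by using a fresh vertex $v'$ and the vertex count $p+1>p$; and the hypothesis that $G$ is not a tree is precisely what lets us pick $e$ on a cycle so that connectivity of $G-e$ (and hence of $G'$) survives the deletion.
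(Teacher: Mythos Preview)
The paper does not supply a proof for this theorem; it is stated as an observed ``fact'' immediately before Fig.~\ref{fig:adjacent-no-topo} and left at that. Your argument is correct and fills the gap cleanly. The construction you describe is a special case of what the paper later formalizes as the \emph{vertex-split operation} (Definition~\ref{defn:split-operation-combinatoric}, Op-3) and its matrix counterpart (Definition~\ref{defn:v-splitting-Topcode-matrices}): you split $v$ into $v$ and a fresh leaf $v'$, moving exactly one incident edge onto $v'$. You make explicit the two ingredients the paper leaves implicit --- that a connected non-tree has a non-bridge edge, and that splitting at such an edge preserves the column data of $T_{code}$ while strictly increasing the vertex count, forcing non-isomorphism.

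One remark: the paper's notion of ``$T_{code}$ corresponds to a Topsnut-gpw'' does not require the gpw to be connected (the discussion above Fig.~\ref{fig:1-example} explicitly mentions that $A$ ``derives disconnected and connected Topsnut-gpws''). Hence a cruder construction also works --- detach $e$ entirely as an isolated edge-component --- but your version is stronger, producing a second \emph{connected} gpw, which is a bonus rather than a defect.
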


\begin{figure}[h]
\centering
\includegraphics[width=8.6cm]{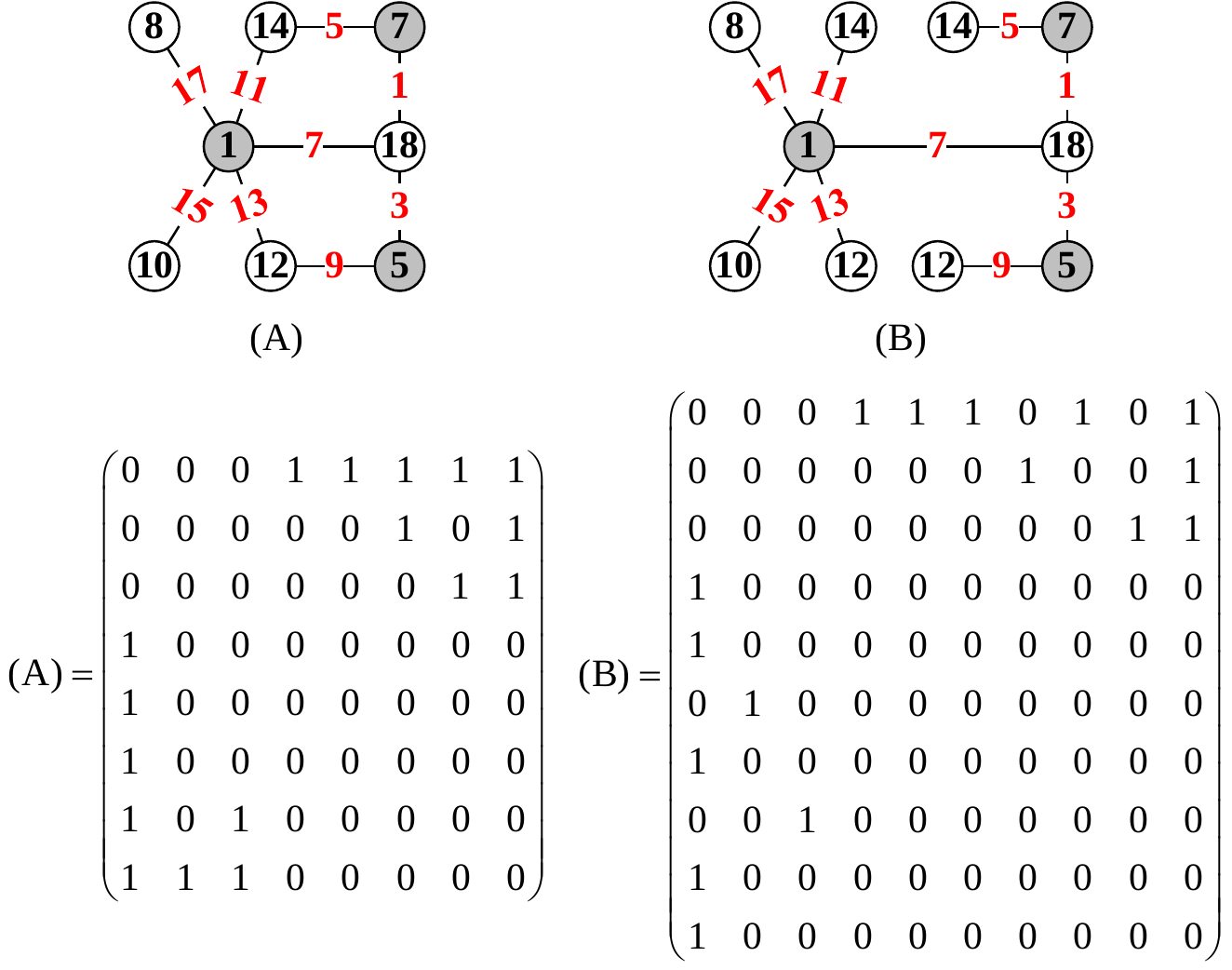}\\
\caption{\label{fig:adjacent-no-topo} {\small Two Topsnut-gpws (A),(B) and their adjacent matrices.}}
\end{figure}
\vskip 0.4cm

\subsubsection{Particular sub-Topcode-matrices}\quad (1) \emph{Perfect matching.} If we have a subset $E_S=\{e_{i_1},e_{i_2},\dots ,e_{i_m}\}\subset E^*$ in a Topcode-matrix $T_{code}$ defined in Definition \ref{defn:Topcode-matrix}, and there exists $w\in (XY)^*$ to be not a common end of $e_{i_j}$ and $e_{i_t}$ of $E_S$, and the set of all ends of $E_S$ is just equal to $(XY)^*$, then we call $E_S$ a \emph{perfect matching} of $T_{code}$, and $(X_S~E_S~Y_S)^{T}$ a \emph{perfect matching sub-Topcode-matrix} of $T_{code}$. The Topcode-matrix $A$ shown in (\ref{eqa:example}) has a perfect matching $E_S=\{5,7,9,15,17\}$, and a perfect matching sub-Topcode-matrix as follows
\begin{equation}\label{eqa:Topcode-matrix}
\centering
A_p= \left(
\begin{array}{ccccc}
1 & 7 & 1 & 5& 1\\
17 & 5 &7 & 9& 15\\
8 & 14 & 18 &12& 10
\end{array}
\right)
\end{equation}

(2) \emph{Complete Topcode-matrices.} If any $x_i\in (XY)^*$ matches with each $y_j\in (XY)^*\setminus \{x_i\}$ such that $x_i,y_j$ are the ends of some $e_{ij}\in E^*$ in a Topcode-matrix $T_{code}$ defined in Definition \ref{defn:Topcode-matrix}, and $q=\frac{p(p-1)}{2}$ with $p=|(XY)^*|$, we say $T_{code}$ a \emph{complete Topcode-matrix}.

(3) \emph{Clique Topcode-matrices.} Let $T'_{code}$ be a proper sub-Topcode-matrix of $T_{code}$ defined in Definition \ref{defn:Topcode-matrix}. If $T'_{code}$ is a complete Topcode-matrix, we call $T'_{code}$ a \emph{clique Topcode-matrix}.

(4) \emph{Paths and cycles in Topcode-matrices.} Notice that $x_i$ and $y_i$ are the \emph{ends} of $e_i$ in $T_{code}$ defined in Definition \ref{defn:Topcode-matrix}. We define two particular phenomenons in Topcode-matrices: If there are $e_{i_1},e_{i_2},\dots ,e_{i_m}\in E^*$ in $T_{code}$, such that $e_{i_j}$ and $e_{i_{j+1}}$ with $j\in [1,m-1]$ have a common end $w_{i_j}$ holding $w_{i_s}\neq w_{i_t}$ for $s\neq t$ and $1\leq s,t\leq m$. Then $T_{code}$ has a \emph{path}, denote this path as
$$P(x'_{i_1}\rightarrow y'_{i_m})=x'_{i_1}e_{i_1}w_{i_1}e_{i_2}w_{i_2}\cdots e_{i_m-1}w_{i_m-1}e_{i_m}y'_{i_m}$$
where $x'_{i_1}$ is an end of $e_{i_1}$, and $y'_{i_m}$ is an end of $e_{i_m}$, $x'_{i_1}\neq y'_{i_m}$, $x'_{i_1}\neq w_{i_j}$ and $y'_{i_m}\neq w_{i_j}$ with $j\in[1,m-1]$; and moreover if $x'_{i_1}= y'_{i_m}=ww_{i_m}$ in $P(x'_{i_1}\rightarrow y'_{i_m})$, we say that $T_{code}$ has a \emph{cycle} $C$, denoted as $$C=w_{i_m}e_{i_1}w_{i_1}e_{i_2}w_{i_2}\cdots e_{i_m-1}w_{i_m-1}e_{i_m}w_{i_m}.$$
We call $T_{code}$ to be \emph{connected} if any pair of distinct numbers $w_i,w_j\in (XY)^*$ is connected by a path $P(w_{i}\rightarrow w_{j})$. Thereby, we have:
\begin{thm}\label{thm:tree-equivalent-Topcode-matrices}
Suppose that $e_i\neq e_j$ for $i\neq j$ and $1\leq i,j\leq m$ in a Topcode-matrix $T_{code}$ defined in Definition \ref{defn:Topcode-matrix}. Then the following assertions are equivalent to each other:
\begin{asparaenum}[\emph{\textrm{Tree}}-1. ]
\item $T_{code}$ corresponds a tree $T$ of $q$ edges.
\item $T_{code}$ is connected and has no cycle.
\item Any pair of $x_i$ and $x_j$ (resp. $x_i$ and $y_j$, or $y_i$ and $x_j$, or $y_i$ and $y_j$) in $T_{code}$ is connected by a unique path $P(x_i\rightarrow x_j)$ (resp. $P(x_i\rightarrow y_j)$, or $P(y_i\rightarrow x_j)$, or $P(y_i\rightarrow y_j)$).
\item $T_{code}$ is connected and $|(XY)^*|=q+1$.
\item $T_{code}$ has no cycle and $|(XY)^*|=q+1$.
\item \emph{\cite{Yao-Zhang-Wang2010}} The number $n_1(T_{code})$ of leaves of $T_{code}$ satisfies
\begin{equation}\label{eqa:tree-leaf-number}
n_1(T_{code})=2+\sum _{w\in (XY)^*,d\geq 2}(d-2)\mu(w),
\end{equation} where $\mu(w)$ for each $w\in (XY)^*$ is the number of times $w$ appeared in $T_{code}$.
\end{asparaenum}
\end{thm}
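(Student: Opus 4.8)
The plan is to reduce Theorem~\ref{thm:tree-equivalent-Topcode-matrices} to the classical characterization of trees in graph theory. First I would make precise the correspondence between a Topcode-matrix $T_{code}$ (with all $e_i$ distinct) and a graph $H(T_{code})$: take vertex set $(XY)^*$, and for each column $i\in[1,q]$ draw an edge $e_i$ joining $x_i$ and $y_i$. The hypothesis $e_i\neq e_j$ for $i\neq j$ guarantees this graph has exactly $q$ edges, and the notions ``connected'', ``path'', ``cycle'', ``leaf'' and ``number of appearances $\mu(w)$'' defined for $T_{code}$ translate verbatim into ``connected'', ``path'', ``cycle'', ``leaf'' and ``degree $\ud_{H}(w)=\mu(w)$'' in $H(T_{code})$ (every incidence of $w$ with an edge corresponds to one occurrence of $w$ in a column of $T_{code}$). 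Thus $T_{code}$ ``corresponds a tree $T$ of $q$ edges'' is literally the statement that $H(T_{code})$ is a tree on $q+1$ vertices.

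Next I would establish the chain of equivalences Tree-1 $\Leftrightarrow$ Tree-2 $\Leftrightarrow$ Tree-3 $\Leftrightarrow$ Tree-4 $\Leftrightarrow$ Tree-5 by invoking the standard equivalent definitions of a tree (see \cite{Bondy-2008}): a graph is a tree iff it is connected and acyclic iff any two vertices are joined by a unique path iff it is connected with one more vertex than edge iff it is acyclic with one more vertex than edge. The only care needed is bookkeeping: $|(XY)^*|=q+1$ in the matrix language is $|V(H)|=|E(H)|+1$, and Tree-3 must be checked for all four flavors of endpoint pairs ($x_i$--$x_j$, $x_i$--$y_j$, $y_i$--$x_j$, $y_i$--$y_j$), but since every element of $(XY)^*$ occurs as \emph{some} $x_k$ or \emph{some} $y_k$, the four cases together say exactly ``every pair of vertices of $H$ is joined by a unique path,'' which is the classical criterion.

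Finally, for Tree-6 I would prove Tree-1 $\Leftrightarrow$ Tree-6 directly via degree counting, as in \cite{Yao-Zhang-Wang2010}. Assuming $H=H(T_{code})$ is a tree on $p=q+1$ vertices, use the handshake identity $\sum_{w\in(XY)^*}\ud_H(w)=2q=2p-2$. Writing $n_1(T_{code})$ for the number of degree-one vertices (leaves) and splitting the sum into leaves and vertices of degree $\geq 2$,
\[
n_1(T_{code})+\sum_{w:\,\ud\geq 2}\ud_H(w)=2p-2=2\Big(n_1(T_{code})+\sum_{w:\,\ud\geq 2}1\Big)-2,
\]
which rearranges to $n_1(T_{code})=2+\sum_{w\in(XY)^*,\,\ud\geq 2}(\ud_H(w)-2)$; since $\ud_H(w)=\mu(w)$, this is exactly \eqref{eqa:tree-leaf-number}. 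Conversely, I expect the main obstacle to be the ``only if'' direction here: formula \eqref{eqa:tree-leaf-number} alone, without knowing $|(XY)^*|=q+1$ or connectedness, does not force a tree (e.g.\ a disjoint union of a path and a cycle can satisfy an analogous count), so the precise statement of Theorem~\ref{thm:tree-equivalent-Topcode-matrices} should be read as adding \eqref{eqa:tree-leaf-number} to the already-established pool of equivalent conditions; I would phrase Tree-6 as ``$T_{code}$ corresponds a tree and its leaf number satisfies \eqref{eqa:tree-leaf-number}'', or simply note that \eqref{eqa:tree-leaf-number} is derived under the hypothesis that the preceding equivalent conditions hold. Handling that subtlety cleanly — rather than the routine graph-theoretic equivalences — is where the real care lies.
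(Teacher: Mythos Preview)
The paper states Theorem~\ref{thm:tree-equivalent-Topcode-matrices} without proof; it is presented as a list of equivalent tree-characterizations translated into the Topcode-matrix language, with Tree-6 attributed to \cite{Yao-Zhang-Wang2010}. Your plan---build the graph $H(T_{code})$ on vertex set $(XY)^*$ with one edge per column, observe that $\mu(w)=\ud_H(w)$, and then invoke the classical equivalent definitions of a tree from \cite{Bondy-2008}---is exactly the intended reduction, and the handshake-lemma derivation of \eqref{eqa:tree-leaf-number} is the standard argument behind the cited result.

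Your caution about Tree-6 is well placed and worth keeping: as written, the identity \eqref{eqa:tree-leaf-number} by itself does not force $H(T_{code})$ to be a tree (any graph in which every component has exactly one cycle, for instance a single cycle, satisfies $n_1=2+\sum_{d\ge2}(d-2)$ trivially with both sides zero, yet is not a tree). The paper does not address this, so the honest reading is the one you give: Tree-6 is a consequence of Tree-1 (and hence sits in the ``equivalent'' list only in the weak sense that it is implied by the others), not an independent characterization. You should state this explicitly rather than paper over it.
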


(5) \emph{Spanning sub-Topcode-matrices.} Let $T'_{code}=(X_S$ ~$E_S$~$Y_S)^{T}$ be a sub-Topcode-matrix of a Topcode-matrix $T_{code}$ defined in Definition \ref{defn:Topcode-matrix}, where $E_S=\{e_{i_1}~e_{i_2}~\dots ~e_{i_m}\}$ with $e_{i_j}\in E^*$. If $(X_SY_S)^*=$ $(XY)^*$, we call $T'_{code}$ a \emph{spanning sub-Topcode-matrix} of $T_{code}$. Moreover, if this spanning sub-Topcode-matrix $T'_{code}$ corresponds a tree, we call it a \emph{spanning tree Topcode-matrix}.

\begin{thm}\label{thm:spanning-tree-matrix}
Suppose a Topcode-matrix $T_{code}$ defined in Definition \ref{defn:Topcode-matrix} is connected and contains no proper spanning sub-Topcode-matrix. Then $T_{code}$ corresponds a tree.
\end{thm}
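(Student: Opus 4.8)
The plan is to show that a connected Topcode-matrix with no proper spanning sub-Topcode-matrix must in fact be acyclic, and then invoke Theorem~\ref{thm:tree-equivalent-Topcode-matrices} (equivalence Tree-1 $\Leftrightarrow$ Tree-2) to conclude that it corresponds to a tree. So the whole argument reduces to a single implication: if $T_{code}$ is connected and contains a cycle, then it has a proper spanning sub-Topcode-matrix.

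First I would set up the contrapositive. Suppose $T_{code}=(X~E~Y)^T_{3\times q}$ is connected and has a cycle $C=w_{i_m}e_{i_1}w_{i_1}e_{i_2}\cdots e_{i_m}w_{i_m}$ in the sense of item~(4) in the excerpt. The key observation is that deleting a single edge of this cycle does not disconnect $T_{code}$, nor does it remove any element of $(XY)^*$. Concretely, pick one cycle edge, say $e_{i_1}$, with ends $w_{i_m}$ and $w_{i_1}$, and let $E_S=E^*\setminus\{e_{i_1}\}$, forming the sub-Topcode-matrix $T'_{code}=(X_S~E_S~Y_S)^T$ obtained by deleting the corresponding column. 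I must check two things: (a) $(X_SY_S)^*=(XY)^*$, so that $T'_{code}$ is spanning, and (b) $T'_{code}$ is a \emph{proper} sub-Topcode-matrix, i.e. $E_S\neq E^*$. Point (b) is immediate since we removed a column and $e_i\neq e_j$ for $i\neq j$ is exactly the hypothesis available in the cycle/tree analysis (and in any case $q\geq 1$ here because a cycle exists). For point (a), every element $w\in(XY)^*$ other than $w_{i_m},w_{i_1}$ is an end of some edge $e_j\neq e_{i_1}$, since it occurred as an end of some column and we deleted only one column; and $w_{i_m},w_{i_1}$ themselves are each still an end of another cycle edge, namely $e_{i_m}$ and $e_{i_2}$ respectively (here I use $m\geq 2$, which holds since $x_i\neq y_i$ forces a cycle to have at least two distinct edges — or more carefully, at least three, but $m\geq 2$ suffices). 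Hence no element of $(XY)^*$ is lost, establishing (a). Therefore $T'_{code}$ is a proper spanning sub-Topcode-matrix, contradicting the hypothesis; so $T_{code}$ has no cycle. Being connected and acyclic, by Tree-2 $\Rightarrow$ Tree-1 of Theorem~\ref{thm:tree-equivalent-Topcode-matrices}, $T_{code}$ corresponds to a tree.

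I anticipate that the main obstacle is purely bookkeeping rather than conceptual: one must verify carefully, using only the combinatorial notions of ``path'' and ``connected'' defined for Topcode-matrices in item~(4), that removing a cycle-column preserves connectivity. The argument is the familiar one from graph theory — any path $P(w_i\rightarrow w_j)$ that used the deleted edge $e_{i_1}$ can be rerouted around the remainder of the cycle $C$, since the rest of $C$ still provides a $w_{i_m}$--$w_{i_1}$ path in $T'_{code}$ — but it needs to be phrased entirely in the matrix language to be rigorous in this setting. A secondary subtlety worth a sentence is the edge-case analysis: one should note that if $T_{code}$ has no cycle at all, then the statement follows directly from Theorem~\ref{thm:tree-equivalent-Topcode-matrices} without needing the deletion argument, and that the hypothesis ``contains no proper spanning sub-Topcode-matrix'' already rules out, e.g., the presence of isolated extra structure, so there is nothing pathological to worry about beyond the cycle case handled above.
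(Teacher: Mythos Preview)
The paper does not actually supply a proof of Theorem~\ref{thm:spanning-tree-matrix}; it is stated and then the text moves immediately on to Euler and Hamilton Topcode-matrices. So there is nothing to compare against directly.

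Your argument is the right one and would be the natural proof to insert: show the contrapositive by deleting one cycle column, check that both ends of that column survive as ends of the neighbouring cycle columns so that $(X_SY_S)^*=(XY)^*$, and then invoke Tree-2 $\Rightarrow$ Tree-1 of Theorem~\ref{thm:tree-equivalent-Topcode-matrices}. The two caveats you flag are exactly the points that deserve a sentence each in a clean write-up: (i) Theorem~\ref{thm:tree-equivalent-Topcode-matrices} carries the standing hypothesis $e_i\neq e_j$ for $i\neq j$, which Theorem~\ref{thm:spanning-tree-matrix} does not restate explicitly, so you should either assume it or note that the cycle/path machinery in item~(4) is being used under that convention; and (ii) strictly speaking you do not need the rerouting/connectivity argument for $T'_{code}$ at all --- the hypothesis only says ``contains no proper spanning sub-Topcode-matrix,'' not ``no proper \emph{connected} spanning sub-Topcode-matrix,'' so exhibiting any proper spanning sub-Topcode-matrix already gives the contradiction. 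You can therefore drop the paragraph about rerouting paths around $C$ and shorten the proof.
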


(6) \emph{Euler's Topcode-matrices and Hamilton Topcode-matrices.} If the number of times each $w\in (XY)^*$ appears in a Topcode-matrix $T_{code}$ defined in Definition \ref{defn:Topcode-matrix} is even, then $T_{code}$ is called an \emph{Euler's Topcode-matrix}. Moreover, if the number of times each $w\in (XY)^*$ appears in a connected Topcode-matrix $T_{code}$ is just two, then we call $T_{code}$ a \emph{Hamilton Topcode-matrix}.

\begin{thm}\label{thm:Hamilton-Euler-matrix}
Suppose that $T_{code}$ defined in Definition \ref{defn:Topcode-matrix} is a connected Euler's Topcode-matrix. Then $T_{code}$ contains a cycle of $q$ length. Moreover, if each $w\in (XY)^*$ appears in $C$ only once, then $T_{code}$ is a Hamilton Topcode-matrix.
\end{thm}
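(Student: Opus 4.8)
The plan is to pass from the Topcode-matrix $T_{code}$ to the (multi)graph $G$ it encodes: $V(G)=(XY)^*$, and the $i$-th column contributes an edge joining the two ends $x_i,y_i$ of $e_i$. Under this correspondence the number of times a number $w\in(XY)^*$ appears in $T_{code}$ equals $\ud_G(w)$ (no loops occur, since $x_i\neq y_i$ by Definition \ref{defn:Topcode-matrix}), and the path/connectedness notions of item (4) are the usual ones for $G$. So ``$T_{code}$ is a connected Euler's Topcode-matrix'' is exactly ``$G$ is connected and every vertex of $G$ has even degree,'' and I may henceforth argue in $G$.

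First I would invoke, or reprove by the standard maximal-trail argument, the classical Euler theorem: a connected graph with all degrees even has a closed trail using every edge exactly once. Concretely, pick a longest trail $W=w_{i_0}e_{i_1}w_{i_1}e_{i_2}\cdots e_{i_m}w_{i_m}$ in $G$. If $w_{i_0}\neq w_{i_m}$, then $w_{i_m}$ meets an odd number of edges of $W$, so by even degree it has an unused incident edge, contradicting maximality; hence $W$ is closed. If $W$ omitted some edge, connectedness would force an omitted edge at a vertex of $W$, and splicing a further closed trail there again contradicts maximality; hence $m=q$ and $W$ traverses all $q$ columns. This is precisely a cycle of length $q$ in $T_{code}$ in the sense of item (4): write $C=w_{i_q}e_{i_1}w_{i_1}e_{i_2}w_{i_2}\cdots e_{i_q}w_{i_q}$ with $w_{i_q}=w_{i_0}$, so that consecutive edges $e_{i_j},e_{i_{j+1}}$ share the common end $w_{i_j}$.

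For the ``moreover'' clause, assume that in this $C$ each $w\in(XY)^*$ occurs only once among $w_{i_0},w_{i_1},\dots,w_{i_{q-1}}$. Then $|(XY)^*|=q$, and $C$ is a vertex-simple closed walk on $q$ vertices already containing all $q$ edges, so $G$ must be the cycle $C_q$. Consequently every vertex has degree exactly $2$ in $G$, i.e. appears exactly twice in $T_{code}$, and $T_{code}$ is connected; this is verbatim the definition of a Hamilton Topcode-matrix.

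The delicate point I expect to wrestle with is the terminology in item (4): there a ``cycle'' is written with pairwise distinct intermediate ends $w_{i_s}\neq w_{i_t}$, whereas an Euler closed trail generally revisits vertices (two triangles sharing one vertex have every degree even but no vertex-simple cycle of length $6$). So in the first assertion I must read ``cycle of $q$ length'' as the edge-simple closed walk supplied by the Euler/Hierholzer argument rather than as a chordless cycle, and I must make that reading explicit; otherwise the statement fails. Checking that this is exactly the object which the ``moreover'' hypothesis then upgrades to a genuine vertex-simple Hamilton cycle is the crux — the remaining steps are the routine Euler-circuit bookkeeping.
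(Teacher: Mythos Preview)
The paper states this theorem without proof, so there is nothing to compare your argument against line by line. Your approach --- pass to the encoded multigraph $G$ on vertex set $(XY)^*$, note that the appearance count of $w$ equals $\ud_G(w)$, and then invoke (or reprove via the maximal-trail/Hierholzer argument) the classical characterization of Eulerian graphs --- is exactly the intended translation and is correct.

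You have also put your finger on a genuine wrinkle in the paper itself. In item~(4) the authors' ``cycle'' explicitly requires the intermediate common ends $w_{i_s}$ to be pairwise distinct, i.e.\ a vertex-simple cycle. Under that literal reading the first assertion of the theorem is false, and your bow-tie example (two triangles sharing a vertex: connected, all degrees even, $q=6$, but no vertex-simple $6$-cycle) is a clean counterexample. So your decision to read ``cycle of $q$ length'' as the Euler closed trail produced by the Hierholzer argument, and to flag this explicitly, is the right call; it is not a defect in your proof but in the paper's phrasing. With that reading in place, your handling of the ``moreover'' clause --- each $w\in(XY)^*$ appearing once on $C$ forces $|(XY)^*|=q$ and $G\cong C_q$, whence every appearance count is exactly two --- matches the paper's definition of a Hamilton Topcode-matrix verbatim and completes the argument.
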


(7) \emph{Neighbor sets of Topcode-matrices.} For each $(x_i$ $e_i$ $y_i)^{T}$ $\in T_{code}$, we call set $N_{ei}(x_i)=\{w_j:$ $(x_i~e_{s}$ $w_j)^{T}\in T_{code}\}$ as \emph{v-neighbor set} of $x_i$, $N_{ei}(y_i)=\{z_t:~(z_t$ $e_{r}~y_i)^{T}\in T_{code}\}$ as \emph{v-neighbor set} of $y_i$ and $N'_{ei}(x_i)=\{e_{s}:$ $(x_i~e_{s}~w_j)^{T}\in T_{code}\}$ as \emph{e-neighbor set} of $e_i$.

\vskip 0.4cm

\subsubsection{Traditional Topcode-matrices}\quad Based on Definition \ref{defn:Topcode-matrix} and $|(XY)^*|=p$, we have the following restrict conditions:
\begin{asparaenum}[\textrm{Cond}-1. ]
\item \label{asparae:no-same} $(XY)^*=[0,p-1]$ with $p\leq q+1$.
\item \label{asparae:full} $|(XY)^*|=p\leq q+1$.
\item \label{asparae:graceful-no-same} $(XY)^*\subset [0,q]$.
\item \label{asparae:odd-no-same} $(XY)^*\subset [0,2q-1]$.
\item \label{asparae:odd-odd-magic} $(XY)^*\subset [0,2q]$.
\item \label{asparae:edge-magic-total-1} $(XY)^*\cup E^*=[1,p+q]$.
\item \label{asparae:only-E-odd} $E^*=[1,2q-1]^o$.
\item \label{asparae:super-edge-magic-total} $(XY)^*=[1,p]$ and $E^*=[p+1,p+q]$.
\item \label{asparae:set-ordered} $T_{code}$ defined in Definition \ref{defn:Topcode-matrix} is a \emph{set-ordered Topcode-matrix} if $\max \{x_i:$ $i\in [1,q]\}<\min \{y_j:~j\in [1,q]\}$.
\item \label{asparae:graceful} $e_i=|x_i-y_i|$, $i\in [1,q]$, $E^*=[1,q]$.
\item \label{asparae:odd-graceful} $e_i=|x_i-y_i|$, $i\in [1,q]$, $E^*=[1,2q-1]^o$.
\item \label{asparae:odd-elegant} $e_i=x_i+y_i~(\bmod~2q)$, $i\in [1,q]$, $E^*=[1,2q-1]^o$.
\item \label{asparae:elegant} $e_i=x_i+y_i~(\bmod~q)$, $i\in [1,q]$, $E^*=[0,q-1]$.
\item \label{asparae:edge-magic-total-2} There exists a constant $k$, such that $x_i+e_i+y_i=k$, $i\in [1,q]$.
\item \label{asparae:edge-vs-difference} There exists a constant $k$, such that $e_i+|x_i-y_i|=k$, $i\in [1,q]$.
\end{asparaenum}

\vskip 0.4cm

Based on the above group of conditional restrictions, we have the following particular Topcode-matrices, $T_{code}$ is defined in Definition \ref{defn:Topcode-matrix}:
\begin{asparaenum}[\textrm{Topmatrix}-1. ]
\item A Topcode-matrix $T_{code}$ is called a \emph{graceful Topcode-matrix} if Cond-\ref{asparae:graceful-no-same} and Cond-\ref{asparae:graceful} hold true.
 \item A Topcode-matrix $T_{code}$ is called a \emph{set-ordered graceful Topcode-matrix} if Cond-\ref{asparae:graceful-no-same}, Cond-\ref{asparae:set-ordered} and Cond-\ref{asparae:graceful} hold true.
\item A Topcode-matrix $T_{code}$ is called an \emph{odd-graceful Topcode-matrix} if Cond-\ref{asparae:odd-no-same} and Cond-\ref{asparae:odd-graceful} hold true.
\item A Topcode-matrix $T_{code}$ is called a \emph{set-ordered odd-graceful Topcode-matrix} if Cond-\ref{asparae:odd-no-same}, Cond-\ref{asparae:set-ordered} and Cond-\ref{asparae:odd-graceful} hold true.
\item A Topcode-matrix $T_{code}$ is called an \emph{elegant Topcode-matrix} if Cond-\ref{asparae:no-same} and Cond-\ref{asparae:elegant} hold true.
\item A Topcode-matrix $T_{code}$ is called an \emph{edge-magic total Topcode-matrix} if Cond-\ref{asparae:edge-magic-total-1} and Cond-\ref{asparae:edge-magic-total-2} hold true.
\item A Topcode-matrix $T_{code}$ is called a \emph{super edge-magic total Topcode-matrix} if Cond-\ref{asparae:super-edge-magic-total} and Cond-\ref{asparae:edge-magic-total-2} hold true.
\item A Topcode-matrix $T_{code}$ is called an \emph{odd-edge-magic total Topcode-matrix} if Cond-\ref{asparae:full}, Cond-\ref{asparae:odd-no-same} and Cond-\ref{asparae:odd-graceful} hold true.
\item A Topcode-matrix $T_{code}$ is called an \emph{edge sum-difference Topcode-matrix} if Cond-\ref{asparae:edge-magic-total-1} and Cond-\ref{asparae:edge-vs-difference} hold true.
\item A Topcode-matrix $T_{code}$ is called an \emph{odd-elegant Topcode-matrix} if Cond-\ref{asparae:odd-no-same} and Cond-\ref{asparae:odd-elegant} hold true.
\item A Topcode-matrix $T_{code}$ is called a \emph{harmonious Topcode-matrix} if Cond-\ref{asparae:graceful-no-same} and Cond-\ref{asparae:elegant} hold true.
\item A Topcode-matrix $T_{code}$ is called a \emph{perfect odd-graceful Topcode-matrix} if Cond-\ref{asparae:odd-no-same}, Cond-\ref{asparae:only-E-odd} and $\{|a-b|:~a,b\in (XY)^*\}=[1,|(XY)^*|]$ hold true.
\end{asparaenum}

\vskip 0.4cm

\begin{figure}[h]
\centering
\includegraphics[width=8.6cm]{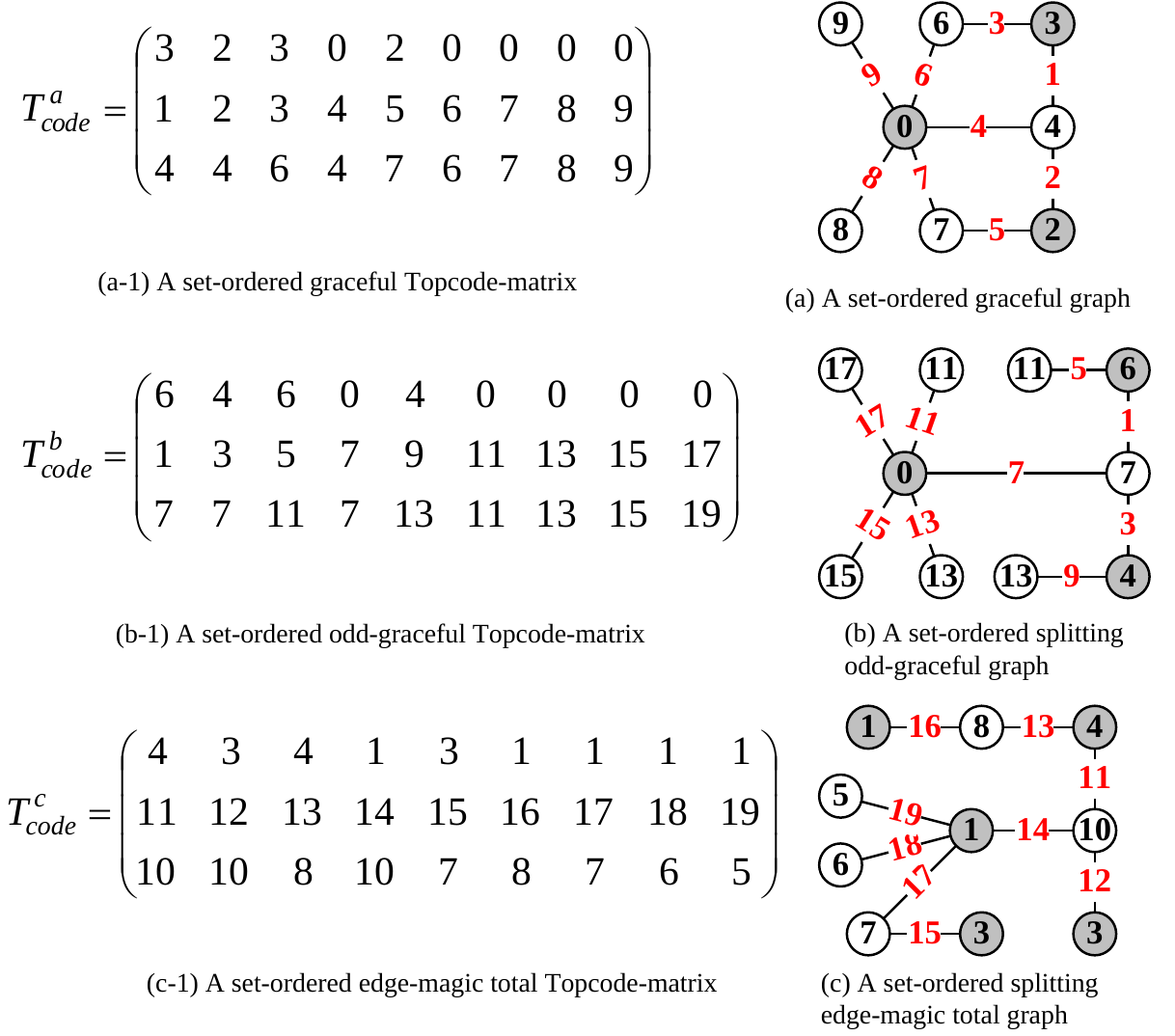}\\
\caption{\label{fig:example-equivalent-1} {\small According to six Topsnut-gpws shown in Fig.\ref{fig:1-example}, here: (a) A set-ordered graceful graph and (a-1) a set-ordered graceful Topcode-matrix; (b) a set-ordered splitting odd-graceful graph and (b-1) a set-ordered odd-graceful Topcode-matrix; (c) a set-ordered splitting edge-magic total graph and (c-1) a set-ordered edge-magic total Topcode-matrix.}}
\end{figure}

\begin{figure}[h]
\centering
\includegraphics[width=8.6cm]{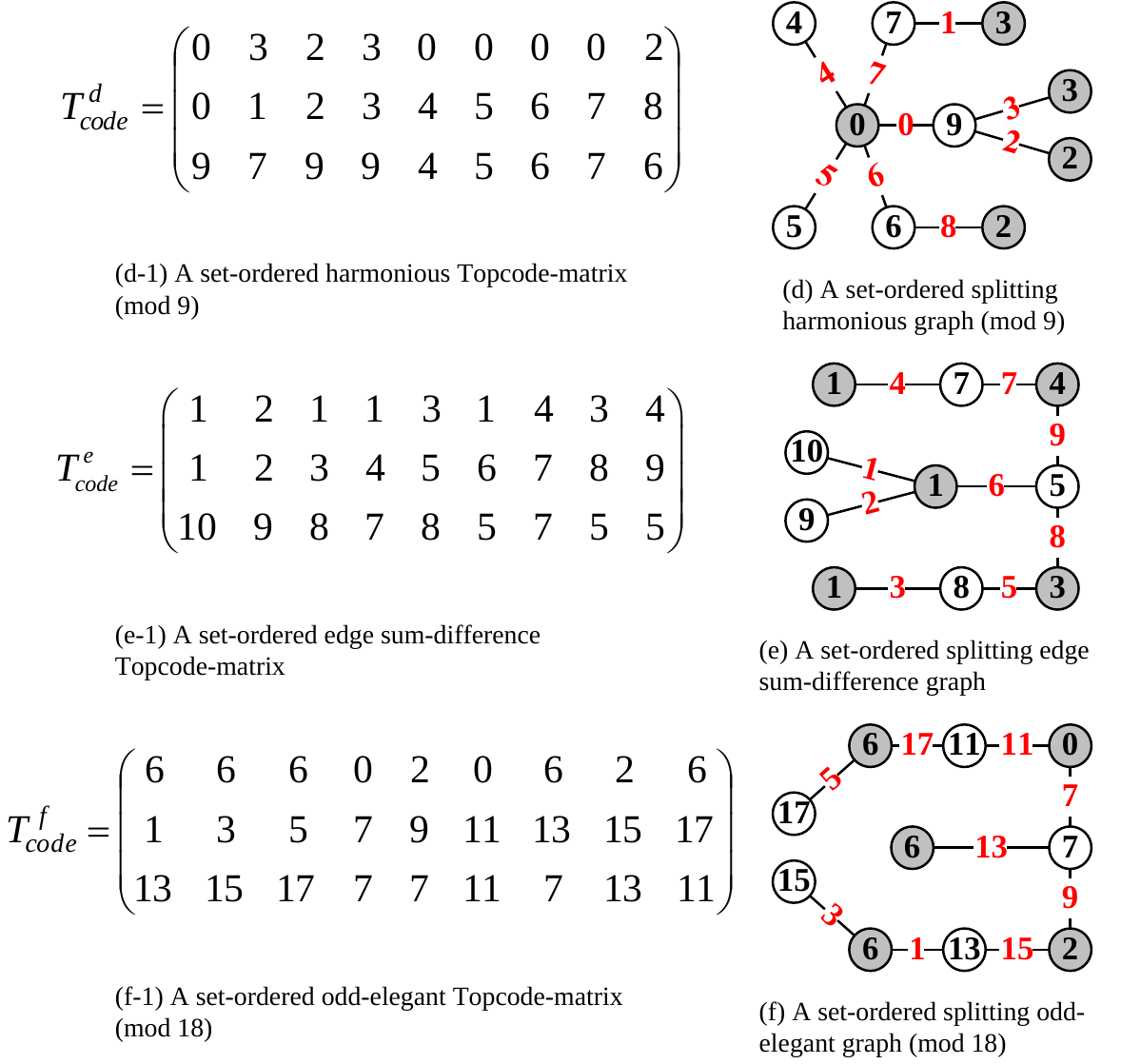}\\
\caption{\label{fig:example-equivalent-2} {\small According to six Topsnut-gpws shown in Fig.\ref{fig:1-example}, here: (d) A set-ordered splitting harmonious graph ($\bmod~9$) and (d-1) a set-ordered harmonious Topcode-matrix ($\bmod~9$); (e) a set-ordered splitting edge sum-difference graph and (e-1) a set-ordered edge sum-difference Topcode-matrix; (f) a set-ordered splitting odd-elegant graph ($\bmod~18$) and (f-1) a set-ordered odd-elegant Topcode-matrix ($\bmod~18$).}}
\end{figure}

In Fig.\ref{fig:example-equivalent-2}, a set-ordered odd-elegant Topcode-matrix ($\bmod~18$) (f-1) induces a TB-paw
$$T^1_b(\textrm{f-1})=6660206261715131197531131517771171311.$$
And we have another TB-paw
$$T^2_b(\textrm{f-1})=6113153665177702971111061371315261711$$
from the set-ordered odd-elegant Topcode-matrix ($\bmod~18$) (f-1) shown in Fig.\ref{fig:example-equivalent-2}.

\vskip 0.4cm

\subsubsection{Topcode-matrices with more restrictions}\quad Here, we show particular Topcode-matrices with more restrictions:

\begin{asparaenum}[Comp-1. ]
\item \label{asparae:strongly-matching} There exists a subset $M(T_{code})=\{e_{k_i}:i\in[1,m],e_{k_i}\in E^*\}\subset E^*$ such that each $x_i$ in $X$ is an end of some $e_{k_i}\in M(T_{code})$, and each $y_j$ in $Y$ is an end of some $e_{k_j}\in M(T_{code})$.
\item \label{asparae:strongly-constant} There exists a constant $k$ such that two ends $x_{k_i},y_{k_i}$ of each $e_{k_i}\in M(T_{code})$ hold $x_{k_i}+y_{k_i}=k$.
\item \label{asparae:(k,d)-11} Each $e_i\in E^*$ is valuated as $e_i=|x_i-y_i|$.
\item \label{asparae:relaxed} $e_i=|x_j-y_j|$ or $e_i=2M-|x_j-y_j|$ for each $i\in [1,q]$ and some $j\in [1,q]$, and a constant $M=\theta(q,|(XY)^*|)$.
\item \label{asparae:ee-difference-00} (ee-difference) Each $e_i$ with ends $x_{i}$ and $y_{i}$ matches with another $e_j$ with ends $x_{j}$ and $y_{j}$ holding $e_i=2q+|x_j-y_j|$, or $e_i=2q-|x_j-y_j|$, $e_i+e_j=2q$ for each $i\in [1,q]$ and some $j\in [1,q]$.
\item \label{asparae:(k,d)-22} Each $e_i\in E^*$ is valuated as $e_i=x_i+y_i~(\bmod~q)$.
\item \label{asparae:edge-magic-graceful} There exists a constant $k$ such that $|x_i+y_i-e_i|=k$ for each $i\in [1,q]$.
\item \label{asparae:total-graceful} $|(XY)^*|=p$, $|E^*|=q$ and $(XY)^*\cup E^*=[1,p+q]$.
\item \label{asparae:total-graceful-k-d} $|(XY)^*|=p$, $|E^*|=q$ and $(XY)^*\cup E^*\subseteq [0,k+(q-1)d]$.
\item \label{asparae:edge-plus-difference} (e-magic) There exists a constant $k$ such that $e_i+|x_i-y_i|=k$ for $i\in [1,q]$.
\item \label{asparae:edge-magic-total-graceful} (total magic) There exists a constant $k$ such that $x_i+e_i+y_i=k$ with $i\in [1,q]$.
\item \label{asparae:EV-ordered} (EV-ordered) $\min (XY)^*>\max E^*$, or $\max (XY)^*$ $<\min E^*$, or $(XY)^*\subseteq E^*$, or $E^*\subseteq(XY)^*$, or $(XY)^*$ is an odd-set and $E^*$ is an even-set.
\item \label{asparae:ee-balanced} (ee-balanced) Let $s(e_i)=|x_i-y_i|-e_i$ for $i\in [1,q]$. There exists a constant $k'$ such that each $e_i$ with $i\in [1,q]$ matches with another $e_j$ holding $s(e_i)+s(e_j)=k'$ (or $2(q+|(XY)^*|)+s(e_i)+s(e_j)=k'$, or $(|(XY)^*|+q+1)+s(e_i)+s(e_j)=k'$) true;

\item \label{asparae:ve-matching} (ve-matching) there exists a constant $k''$ such that each $e_i\in E^*$ matches with $w\in (XY)^*$, $e_i+w=k''$, and each vertex $z\in (XY)^*$ matches with $e_t\in E^*$ such that $z+e_t=k''$, except the \emph{singularity} $w=\lfloor \frac{|(XY)^*|+q+1}{2}\rfloor $.
\item \label{asparae:(k,d)-felicitous} $e_i+k=[x_i+y_i-k]~(\bmod~qd)$ for $i\in [1,q]$.
\item \label{asparae:(k,d)-00} $E^*=S_{k,d}$.
\item \label{asparae:(k,d)-33} $x_i+e_i+y_i\in S_{2k,2d}$, $i\in [1,q]$.
\item \label{asparae:(k,d)-arithmetic} $e_i=x_i+y_i\in S_{k,d}$, $i\in [1,q]$.
\item \label{asparae:v-general-e-odd} $(XY)^*\subset [0,q-1]$, $E^*=[1,2q-1]^o$.
\item \label{asparae:odd-6C} Each $e_i$ is odd, $(XY)^*\cup E^*\subset [1,4q-1]$.
\item \label{asparae:total-consecutive} $\{x_i+e_i+y_i:~i\in [1,q]\}=[a,b]$, $q=b-a+1$.
\end{asparaenum}

\vskip 0.2cm

By the above group of restrictions, we can define the following particular Topcode-matrices:
\begin{asparaenum}[\textrm{Parameter}-1. ]
\item A graceful Topcode-matrix $T_{code}$ is called a \emph{strongly graceful Topcode-matrix} if Comp-\ref{asparae:strongly-matching} and Comp-\ref{asparae:strongly-constant} hold true.
\item An odd-graceful Topcode-matrix $T_{code}$ is called a \emph{strongly odd-graceful Topcode-matrix} if Comp-\ref{asparae:strongly-matching} and Comp-\ref{asparae:strongly-constant} hold true.

\item A Topcode-matrix $T_{code}$ is called a \emph{$(k,d)$-graceful Topcode-matrix} if Comp-\ref{asparae:(k,d)-11} and Comp-\ref{asparae:(k,d)-00} hold true, see Fig.\ref{fig:k-d-4-topcode-matrices} (a).
\item A Topcode-matrix $T_{code}$ is called a \emph{$(k,d)$-felicitous Topcode-matrix} if Comp-\ref{asparae:(k,d)-felicitous} and Comp-\ref{asparae:(k,d)-00} hold true, see Fig.\ref{fig:k-d-4-topcode-matrices} (b).
\item $^*$ A Topcode-matrix $T_{code}$ is called a \emph{$(k,d)$-edge-magic total Topcode-matrix} if Comp-\ref{asparae:edge-magic-total-graceful}, Comp-\ref{asparae:ve-matching} and Comp-\ref{asparae:(k,d)-00} hold true, see Fig.\ref{fig:k-d-4-topcode-matrices} (c).

\item A Topcode-matrix $T_{code}$ is called a \emph{$(k,d)$-edge antimagic total Topcode-matrix} if it holds Comp-\ref{asparae:total-graceful-k-d}, Comp-\ref{asparae:(k,d)-00} and Comp-\ref{asparae:(k,d)-33} true, see Fig.\ref{fig:k-d-4-topcode-matrices} (d).

\begin{figure}[h]
\centering
\includegraphics[width=8.6cm]{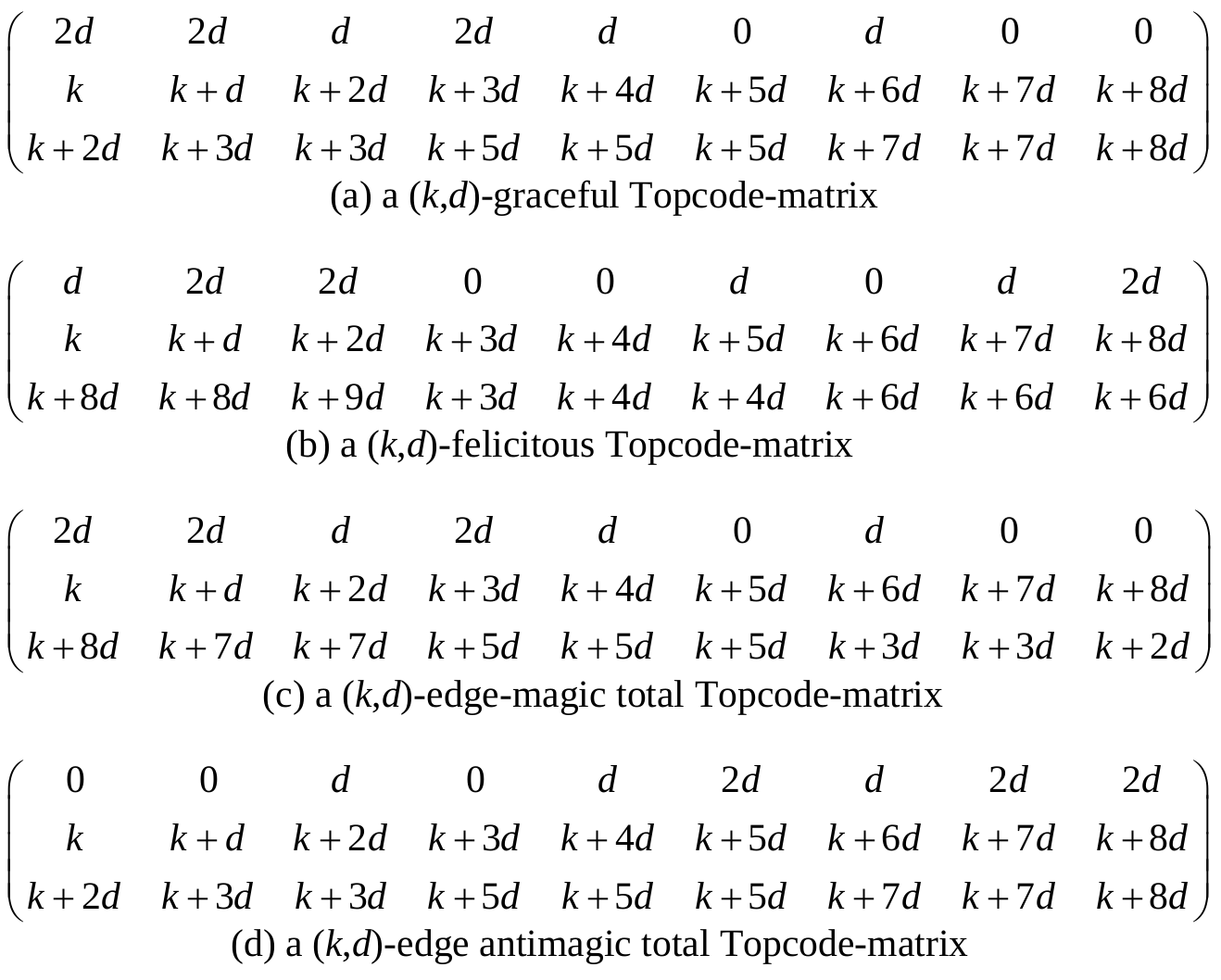}\\
\caption{\label{fig:k-d-4-topcode-matrices} {\small (a) A $(k,d)$-graceful Topcode-matrix; (b) a $(k,d)$-felicitous Topcode-matrix; (c) a $(k,d)$-edge-magic total Topcode-matrix; (d) a $(k,d)$-edge antimagic Topcode-matrix.}}
\end{figure}

\item A Topcode-matrix $T_{code}$ is called a \emph{total graceful Topcode-matrix} if Comp-\ref{asparae:(k,d)-11} and Comp-\ref{asparae:total-graceful} hold true, see Fig.\ref{fig:3-topcode-matrices} (a).
\item A Topcode-matrix $T_{code}$ is called a \emph{ve-magic total graceful Topcode-matrix} if Comp-\ref{asparae:total-graceful} and Comp-\ref{asparae:edge-plus-difference} hold true, see Fig.\ref{fig:3-topcode-matrices} (b).
\item A Topcode-matrix $T_{code}$ is called a \emph{relaxed edge-magic total Topcode-matrix} if it holds Comp-\ref{asparae:total-graceful}, Comp-\ref{asparae:edge-magic-total-graceful} and Comp-\ref{asparae:relaxed} true, see Fig.\ref{fig:3-topcode-matrices} (c).

\begin{figure}[h]
\centering
\includegraphics[width=7.6cm]{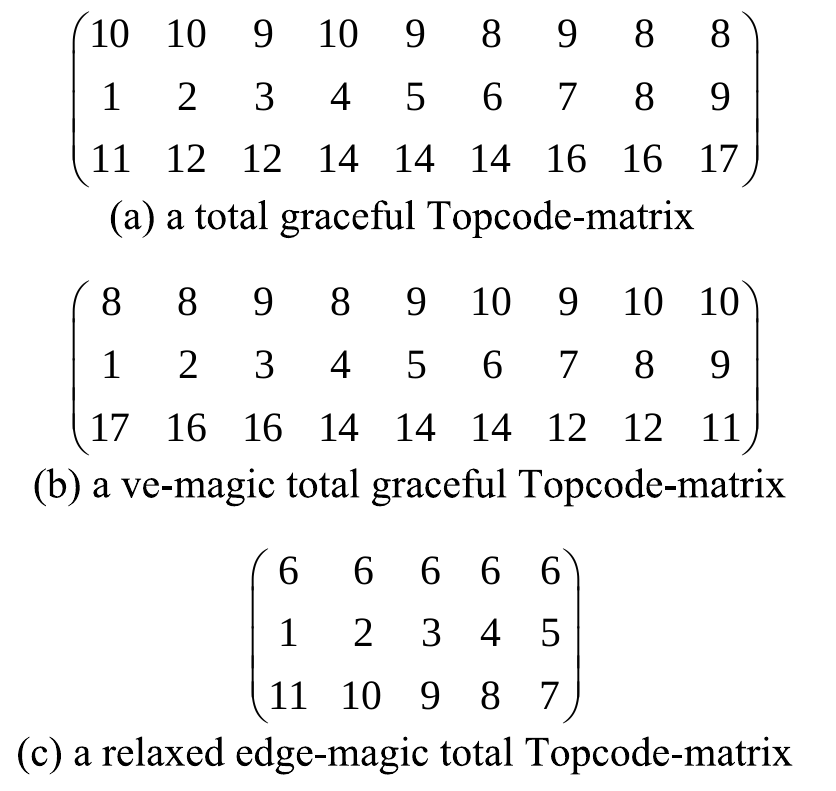}\\
\caption{\label{fig:3-topcode-matrices} {\small (a) A total graceful Topcode-matrix; (b) a ve-magic total graceful Topcode-matrix; (c) a relaxed edge-magic total Topcode-matrix.}}
\end{figure}

\item A Topcode-matrix $T_{code}$ is called an \emph{edge-magic graceful Topcode-matrix} if it holds Comp-\ref{asparae:edge-magic-graceful} and Comp-\ref{asparae:total-graceful} true.

\item A Topcode-matrix $T_{code}$ is called a \emph{6C-Topcode-matrix} if it holds Comp-\ref{asparae:relaxed}, Comp-\ref{asparae:total-graceful}, Cond-\ref{asparae:set-ordered}, Comp-\ref{asparae:edge-plus-difference}, Comp-\ref{asparae:EV-ordered}, Comp-\ref{asparae:ee-balanced} and Comp-\ref{asparae:ve-matching} true.

\vskip 0.2cm

\quad A 6C-Topcode-matrix $A$ is shown in Fig.\ref{fig:6C-matching-matrix-1}, and $A$ matches with its \emph{dual Topcode-matrix} $A^{*}$, since the sum of each element of $A$ and its corresponding element of $A^{*}$ is just 26. According to the definition of a 6C-Topcode-matrix, the Topcode-matrix $A=(X~W~Y)^{T}$ holds a \emph{6C-restriction}: (i) $e_i+|x_i-y_i|=13$; (ii) $e_i=|x_j-y_j|$; (iii) $(|x_i-y_i|-e_i)+(|x_j-y_j|-e_j)=0$; (iv) $\min (X\cup Y)>\max W$; (v) $e_i+x_s=26$ or $e_i+y_t=26$; (vi) $\min X>\max Y$. However, the dual Topcode-matrix $A^{*}=(X'~W'~Y')^{T}$ holds $e'_i-|x'_i-y'_i|=13$ and $\min (X'\cup Y')< \max W'$ only.

\begin{figure}[h]
\centering
\includegraphics[width=8.6cm]{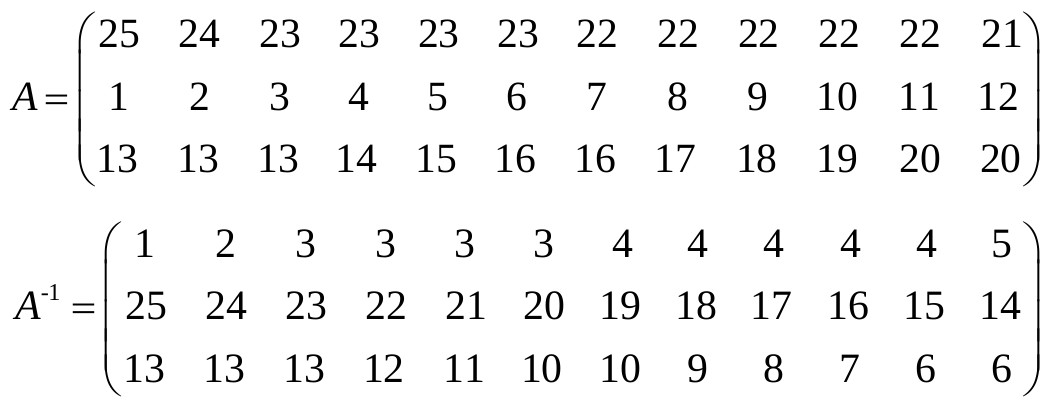}
\caption{\label{fig:6C-matching-matrix-1}{\small A 6C-Topcode-matrix $A$ and its dual Topcode-matrix $A^{*}$.}}
\end{figure}
\item A Topcode-matrix $T_{code}$ is called an \emph{odd-6C Topcode-matrix} if it holds $\{|a-b|:a,b\in (XY)^*\}=[1,2q-1]^o$, Comp-\ref{asparae:ee-difference-00}, Cond-\ref{asparae:set-ordered}, Comp-\ref{asparae:edge-plus-difference}, Comp-\ref{asparae:EV-ordered}, Comp-\ref{asparae:odd-6C}, Comp-\ref{asparae:ee-balanced}and there are two constants $k_1,k_2$ such that each $e_i$ matches with $w\in (XY)^*$ such that $e_i+w=k_1~(\textrm{or }k_2)$ true (see Fig.\ref{fig:6C-matching-matrix-odd}).
\begin{figure}[h]
\centering
\includegraphics[width=8.6cm]{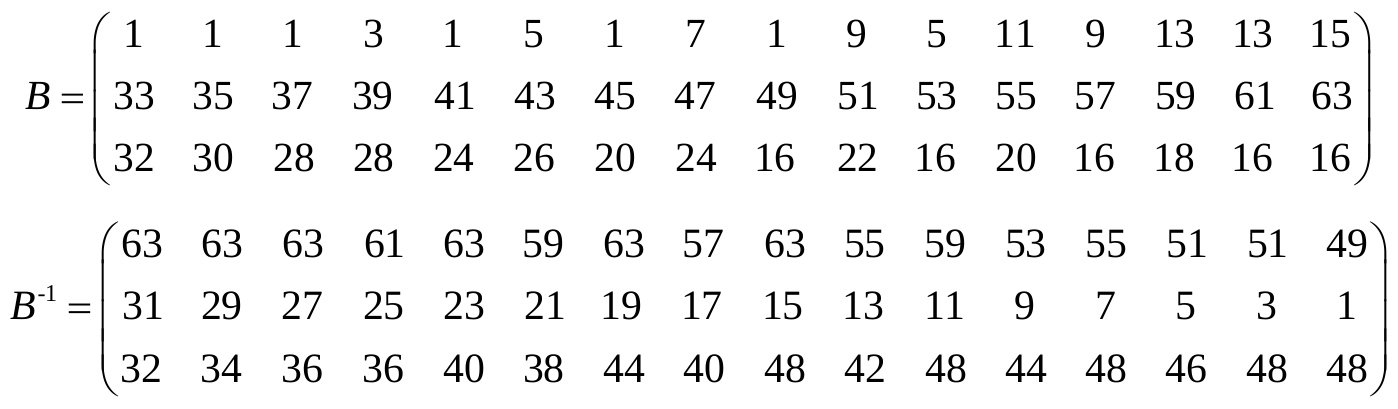}
\caption{\label{fig:6C-matching-matrix-odd}{\small An odd-6C-Topcode-matrix $B$ and its dual Topcode-matrix $B^{-1}$.}}
\end{figure}
\item A Topcode-matrix $T_{code}$ is called an \emph{ee-difference odd-edge-magic matching Topcode-matrix} if Comp-\ref{asparae:relaxed}, Comp-\ref{asparae:edge-plus-difference}, Comp-\ref{asparae:ee-balanced} and Comp-\ref{asparae:v-general-e-odd} hold true.
\item A Topcode-matrix $T_{code}$ is called an \emph{odd-edge-magic matching Topcode-matrix} if Comp-\ref{asparae:edge-magic-total-graceful} and Comp-\ref{asparae:v-general-e-odd} hold true.
\item A Topcode-matrix $T_{code}$ is called an \emph{edge-odd-graceful total Topcode-matrix} if Comp-\ref{asparae:v-general-e-odd} and Comp-\ref{asparae:total-consecutive} hold true.
\item A Topcode-matrix $T_{code}$ is called an \emph{multiple edge-meaning vertex Topcode-matrix} if $(XY)^*=[0,p-1]$ with $p=|(XY)^*|$, and there are three constants $k,k'$ and $k''$ such that (1) $E^*=[1,q]$ and $x_i+e_i+y_i=k$; (2) $E^*=[p,p+q-1]$ and $x_i+e_i+y_i=k'$; (3) $E^*=[0,q-1]$ and $e_i=x_i+y_i~(\bmod~q)$; (4) $E^*=[1,q]$ and $|x_i+y_i-e_i|=k''$; (5) $E^*=[1,2q-1]^o$, and $\{x_i+e_i+y_i:~i\in [1,q]\}=[a,b]$ with $b-a+1=q$.
\end{asparaenum}

\vskip 0.4cm

\subsubsection{Matching Topcode-matrices}\quad We show some connections between two Topcode-matrices here. Let $X_0=\{0,d,2d, \dots ,(q-1)d\}$. The set $S^*$ contains all different elements in a vector (or collection) $S$, $T_{code}$ is defined in Definition \ref{defn:Topcode-matrix} hereafter.

\begin{asparaenum}[\textrm{Matching}-1 ]
\item If two Topcode-matrices $T_{code}=(X~E~Y)^{T}$ and $\overline{T}_{code}=(U~W~V)^{T}$ hold $(X_0\cup S_{k,d})\setminus (XY)^*\cup E^*=(UV)^*\cup W^*$ true, then $\overline{T}_{code}$ is called a \emph{complementary $(k,d)$-Topcode-matrix} of $T_{code}$, and $(T_{code},\overline{T}_{code})$ is called a \emph{matching of twin $(k,d)$-Topcode-matrices}.
\item Two Topcode-matrices $T^t_{code}=(X^t$ $E^t$ $Y^t)^{T}$ with $t=1,2$ correspond a $(p,q)$-graph $G$, and satisfy that $(X^tY^t)^*\subset X_0\cup S_{k,d}$, and $e^t_i-k=[x^t_i+y^t_i-k~(\textrm{mod}~qd)]$ with $t=1,2$ and $i\in[1,q]$. If $e^1_i+e^2_i=2k+(q-1)d$ with $i\in[1,q]$, then $(T^1_{code},T^2_{code})$ is called a \emph{matching of $(k,d)$-harmonious image Topcode-matrices}.
\item Two Topcode-matrices $T^t_{code}=(X^t$ $E^t$ $Y^t)^{T}$ with $t=1,2$ correspond a $(p,q)$-graph $G$, and $e^t_i=|x^t_i-y^t_i|$ with $t=1,2$ and $i\in[1,q]$. If there exists a positive constant $k$ such that $e^1_i+e^2_i=k$ with $i\in[1,q]$, then $(T^1_{code},T^2_{code})$ is called a matching of image Topcode-matrices, and $T^t_{code}$ a \emph{mirror-image} of $T^{3-t}_{code}$ with $t=1,2$.
\item A Topcode-matrix $T_{code}=(X~E~Y)^{T}=\uplus^m_{k=1} T^k_{code}$ with $m\geq 2$, where $T^k_{code}=(X^k~E^k~Y^k)^{T}$. If $(XY)^*=[0, |(XY)^*|-1]$, and each $T^k_{code}$ is a $\varepsilon$-Topcode-matrix, then $T_{code}$ is called an \emph{multiple matching Topcode-matrix}.
\end{asparaenum}

\vskip 0.4cm

\subsubsection{Directed Topcode-matrices}\quad In Fig.\ref{fig:directed-topcode-matrix}, $\overrightarrow{T}$ is a \emph{directed Topsnut-gpw}, and it corresponds a \emph{directed Topcode-matrix} $A(\overrightarrow{T})$. In directed graph theory, the out-degree is denoted by ``$+$'', and the in-degree is denoted by ``$-$''. So, $\overrightarrow{T}$ has $d^+(22)=5$, $d^-(22)=0$, $d^-(13)=3$ and $d^+(13)=0$.

\begin{figure}[h]
\centering
\includegraphics[width=8.6cm]{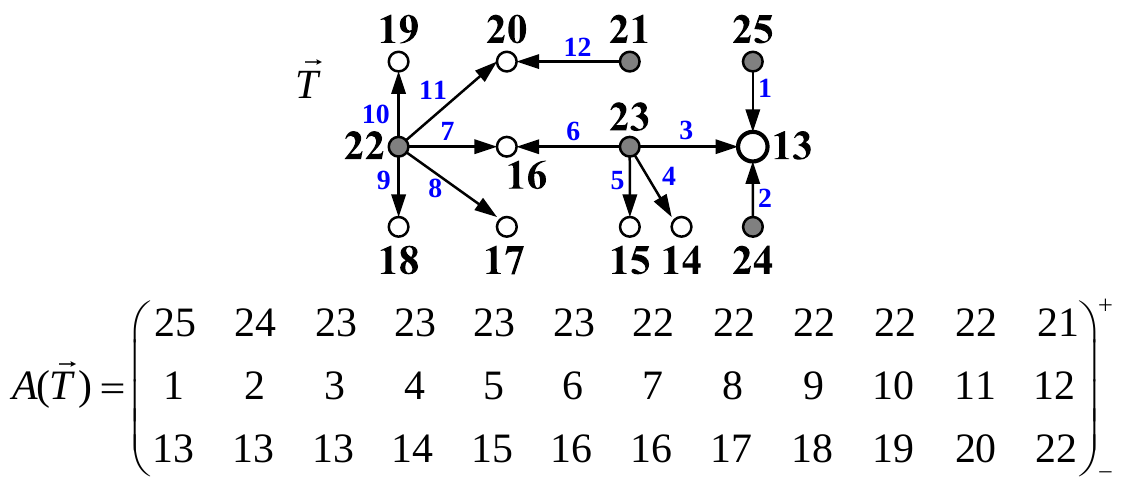}
\caption{\label{fig:directed-topcode-matrix}{\small A directed Topsnut-gpw with a directed Topcode-matrix.}}
\end{figure}

We show the definition of a directed Topcode-matrix as follows:
\begin{defn}\label{defn:directed-Topcode-matrix}
$^*$ A \emph{directed Topcode-matrix} is defined as
\begin{equation}\label{eqa:Topcode-dimatrix}
\centering
{
\begin{split}
\overrightarrow{T}_{code}&= \left(
\begin{array}{ccccc}
x_{1} & x_{2} & \cdots & x_{q}\\
e_{1} & e_{2} & \cdots & e_{q}\\
y_{1} & y_{2} & \cdots & y_{q}
\end{array}
\right)^{+}_{-}=
\left(\begin{array}{c}
X\\
\overrightarrow{E}\\
Y
\end{array} \right)^{+}_{-}\\
&=[(X~\overrightarrow{E}~Y)^{+}_{-}]^{-1}
\end{split}}
\end{equation}\\
where \emph{v-vector} $X=(x_1 ~ x_2 ~ \cdots ~x_q)$, \emph{v-vector} $Y=(y_1 $ ~ $y_2$ ~ $\cdots $ ~ $y_q)$ and \emph{di-e-vector} $\overrightarrow{E}=(e_1$ ~ $e_2 $ ~ $ \cdots $ ~ $e_q)$, such that each $e_i$ has its head $x_i$ and its tail $y_i$ with $i\in [1,q]$.\qqed
\end{defn}

The study of directed graphs is not more than that of non-directed graphs, although directed graphs are useful and powerful in real applications (Ref. \cite{Bang-Jensen-Gutin-digraphs-2007}). It may be interesting to apply directed Topcode-matrices to Graph Networks and Graph Neural Networks.

\vskip 0.4cm

\subsubsection{Pan-Topcode-matrices with elements are not numbers}\quad We consider some particular Topcode-matrices having elements being sets, Hanzis, graphs, groups and so on. Let $T_{pcode}=(P_X~P_E~P_Y~)^{T}$ be a pan-Topcode-matrix, where three vectors $P_X=(\alpha_1~\alpha_2~\dots ~\alpha_q)$, $P_E=(\gamma_1~\gamma_2~\dots ~\gamma_q)$ and $P_Y=(\beta_1~\beta_2~\dots ~\beta_q)$. We set $(P_XP_Y)^*$ to be the set of different elements of the union set $P_X\cup P_Y$, $P^*_E$ to be the set of different elements of the vector $P_E$.

\vskip 0.4cm

\textbf{Sets:}

\begin{asparaenum}[(\textrm{Set}-1) ]
\item A pan-Topcode-matrix $T_{pcode}$ has $(P_XP_Y)^*\subseteq [1,q]^2$ and $P^*_E\subseteq [1,q]^2$, and $\gamma_i=\alpha_i\cap \beta_i$. If we can select a \emph{representative} $a_{i}\in \gamma_i$ such that the representative set $\{a_{i}:~\gamma_i\in P^*_E\}=[1,q]$, then $T_{pcode}$ is called a \emph{graceful intersection total set-Topcode-matrix}.
\item A pan-Topcode-matrix $T_{pcode}$ has $(P_XP_Y)^*\subseteq [1,q]^2$ and $P^*_E\subseteq [1,q]^2$, and $\gamma_i=\alpha_i\cap \beta_i$, $|\gamma_i|=|\{e_i\}|=1$ and $\bigcup ^q_{i=1}\gamma_i=[1,q]$ (resp. $[1,2q-1]^o$), we call $T_{pcode}$ a \emph{set-graceful pan-Topcode-matrix} (resp. a \emph{set-odd-graceful Topcode-matrix}).
\item A pan-Topcode-matrix $T_{pcode}$ has $(P_XP_Y)^*\subseteq [1,q]^2$ and $P^*_E\subseteq [1,q]^2$, and $\gamma_i=\alpha_i\setminus \beta_i$ or $\gamma_i=\beta_i\setminus \alpha_i$, $\gamma_i\cap \gamma_j=\emptyset$ for $i\neq j$ and $\bigcup ^q_{i=1}\gamma_i=[a,b]$, we call $T_{pcode}$ a \emph{set-subtraction graceful pan-Topcode-matrix}.

\item A pan-Topcode-matrix $T_{pcode}$ has $(P_XP_Y)^*\subseteq [1,2q]^2$ and $P^*_E\subseteq [1,2q-1]^2$, and $\gamma_i=\alpha_i\cap \beta_i$. If we can select a \emph{representative} $a_{i}\in \gamma_i$ such that the representative set $\{a_{i}:~\gamma_i\in P^*_E\}=[1,2q-1]^o$, then $T_{pcode}$ is called an \emph{odd-graceful intersection total set-Topcode-matrix}.
\item A \emph{set-intersecting rainbow Topcode-matrix} $C$ is shown in Fig.\ref{fig:rainbow-matrix}, and it corresponds a tree $H$ with a \emph{set coloring} $h: V(H)\rightarrow S$, where $S=\{[1,1],[1,2],\dots ,[1,13]\}$, such that $h(uv)=h(u)\cap h(v)$ for each edge $uv\in E(H)$, and $h(E(H))=\{[1,1],[1,2],\dots ,[1,12]\}$. Similarly, we can have a set-union rainbow Topcode-matrix by defining $h(uv)=h(u)\cup h(v)$.

\begin{figure}[h]
\centering
\includegraphics[width=8.6cm]{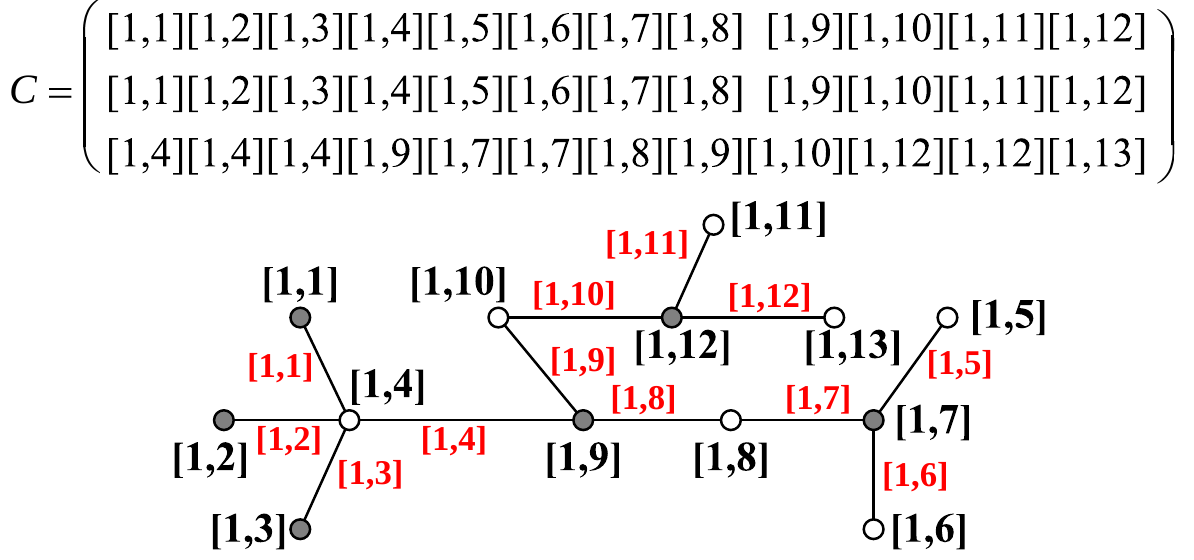}
\caption{\label{fig:rainbow-matrix}{\small A set-intersecting rainbow Topcode-matrix $C$.}}
\end{figure}

\item We call $T_{code}$ defined in Definition \ref{defn:Topcode-matrix} a \emph{v-distinguishing Topcode-matrix} if $N_{ei}(x_i)\neq N_{ei}(w_j)$ for any pair of distinct $x_i,w_j\in (XY)^*$. For each $(x_i$ $e_i$ $y_i)^{T}$ $\in T_{code}$, $T_{code}$ is called an \emph{adjacent v-distinguishing Topcode-matrix} if $N_{ei}(x_i)\neq N_{ei}(y_i)$; $T_{code}$ is called an \emph{adjacent e-distinguishing Topcode-matrix} if $N'_{ei}(x_i)\neq N'_{ei}(y_i)$; $T_{code}$ is called an \emph{adjacent total-distinguishing Topcode-matrix} if $N_{ei}(x_i)\cup N'_{ei}(x_i)\neq N_{ei}(y_i)\cup N'_{ei}(y_i)$ for each $e_i$ with $i\in [1,q]$. It is easy to see such examples, for instance, all Topcode-matrices of Topmatrix-$i$, Parameter-$j$ and Matching-$k$ are distinguishing.
\end{asparaenum}

\vskip 0.4cm

\textbf{Groups: } Topcode$^+$-matrix groups and Topcode$^-$-matrix groups are defined by the additive v-operation ``$\oplus$'' defined in (\ref{eqa:group-operation-1}) and (\ref{eqa:group-operation-2}) and the subtractive v-operation ``$\ominus$'' defined in (\ref{eqa:subtraction-group-operation-1}) and (\ref{eqa:subtraction-group-operation-2}), respectively. See a Topcode$^+$-matrix group shown in Fig.\ref{fig:topcode-matrix-group}, and an every-zero graphic group shown in Fig.\ref{fig:graph-group}.

Suppose that $\{F_{n}(H);\oplus\}$ is an every-zero graphic group based on $F_{n}(H)=\{H_i:i\in [1,n]\}$ with $n\geq q$ ($n\geq 2q-1$) and the additive operation ``$\oplus$'', and moreover $\{F_{n}(H);\ominus\}$ is an every-zero graphic group based on the subtractive operation ``$\ominus$''.

\begin{asparaenum}[\textrm{Gcond}-1. ]
\item \label{asparae:vertex-restriction} $(P_XP_Y)^*\subseteq F_{n}(H)$.
\item \label{asparae:edge-restriction} $P^*_E\subseteq F_{n}(H)$.
\item \label{asparae:edge-restriction-grace} $P^*_E=\{H_1,H_2,\dots ,H_q\}$.
\item \label{asparae:edge-restriction-odd-grace} $P^*_E=\{H_1,H_3,\dots ,H_{2q-1}\}$.
\item \label{asparae:additive-operation} $\gamma_i=\alpha_i\oplus \beta_i$.
\item \label{asparae:subtractive-operation} $\gamma_i=\alpha_i\ominus \beta_i$.
\item \label{asparae:vertex-grace} $(P_XP_Y)^*=\{H_1,H_2,\dots ,H_q\}$.
\item \label{asparae:vertex-odd-grace} $(P_XP_Y)^*=\{H_1,H_3,\dots ,H_{2q-1}\}$.
\item \label{asparae:vertex-set-ordered} $\max\{i:\alpha_i=H_i\}<\min\{j:\beta_j=H_j\}$.
\end{asparaenum}

\vskip 0.4cm

We have

\begin{asparaenum}[(\textrm{Group}-1) ]
\item A pan-Topcode-matrix $T_{pcode}$ is called an \emph{e-graceful group Topcode$^+$-matrix} if Gcond-\ref{asparae:vertex-restriction}, Gcond-\ref{asparae:edge-restriction-grace} and Gcond-\ref{asparae:additive-operation} hold true.
\item A pan-Topcode-matrix $T_{pcode}$ is called an \emph{e-odd-graceful group Topcode$^+$-matrix} if Gcond-\ref{asparae:vertex-restriction}, Gcond-\ref{asparae:edge-restriction-odd-grace} and Gcond-\ref{asparae:additive-operation} hold true.

\item A pan-Topcode-matrix $T_{pcode}$ is called an \emph{e-graceful group Topcode$^-$-matrix} if Gcond-\ref{asparae:vertex-restriction}, Gcond-\ref{asparae:edge-restriction-grace} and Gcond-\ref{asparae:subtractive-operation} hold true.
\item A pan-Topcode-matrix $T_{pcode}$ is called an \emph{e-odd-graceful group Topcode$^-$-matrix} if Gcond-\ref{asparae:vertex-restriction}, Gcond-\ref{asparae:edge-restriction-odd-grace} and Gcond-\ref{asparae:subtractive-operation} hold true.

\item An e-graceful group Topcode$^+$-matrix $T_{pcode}$ is called a \emph{ve-graceful group Topcode$^+$-matrix} if Gcond-\ref{asparae:vertex-grace} holds true.
\item An e-odd-graceful group Topcode$^+$-matrix $T_{pcode}$ is called a \emph{ve-odd-graceful group Topcode$^+$-matrix} if Gcond-\ref{asparae:vertex-odd-grace} holds true.

\item An e-graceful group Topcode$^-$-matrix $T_{pcode}$ is called a \emph{ve-graceful group Topcode$^-$-matrix} if Gcond-\ref{asparae:vertex-grace} holds true.
\item An e-odd-graceful group Topcode$^-$-matrix $T_{pcode}$ is called a \emph{ve-odd-graceful group Topcode$^-$-matrix} if Gcond-\ref{asparae:vertex-odd-grace} holds true.

\item A Topcode-matrix $T_{pcode}$ is called a \emph{set-ordered group Topcode-matrix} if Gcond-\ref{asparae:vertex-restriction}, Gcond-\ref{asparae:edge-restriction} and Gcond-\ref{asparae:vertex-set-ordered} hold true.

\item An $\epsilon$-matrix $T_{pcode}$ is called a \emph{set-ordered $\varepsilon$-matrix} if Gcond-\ref{asparae:vertex-set-ordered} holds true, where $\varepsilon\in \{$e-graceful group Topcode$^+$, e-odd-graceful group Topcode$^+$, ve-graceful group Topcode$^+$, ve-odd-graceful group Topcode$^+$, e-graceful group Topcode$^-$, e-odd-graceful group Topcode$^-$, ve-graceful group Topcode$^-$, ve-odd-graceful group Topcode$^-$$\}$.
\end{asparaenum}

\begin{rem}\label{eqa:group-Topcode-matrix}
(1) Since each every-zero graphic group $\{F_{n}(H);\oplus\}$ corresponds an every-zero Topcode-matrix group $\{F_{n}(T_{code}(H));\oplus\}$, we can define various $\varepsilon$-matrix $T_{pcode}$ based on every-zero Topcode-matrix groups, where $\varepsilon\in \{$e-graceful matrix-group Topcode$^+$, e-odd-graceful matrix-group Topcode$^+$, ve-graceful matrix-group Topcode$^+$, ve-odd-graceful matrix-group Topcode$^+$, e-graceful matrix-group Topcode$^-$, e-odd-graceful matrix-group Topcode$^-$, ve-graceful matrix-group Topcode$^-$, ve-odd-graceful matrix-group Topcode$^-$$\}$.

(2) Since each every-zero number string group can be derived from Topcode-matrices of Topsnut-gpws, similarly, we can make various $\varepsilon$-matrix $T_{pcode}$ based on every-zero number string groups, where $\varepsilon\in \{$e-graceful string-group Topcode$^+$, e-odd-graceful string-group Topcode$^+$, ve-graceful string-group Topcode$^+$, ve-odd-graceful string-group Topcode$^+$, e-graceful string-group Topcode$^-$, e-odd-graceful string-group Topcode$^-$, ve-graceful string-group Topcode$^-$, ve-odd-graceful string-group Topcode$^-$$\}$.

(3) It is possible to define more pan-Topcode-matrices $T_{pcode}$ by simulating labelling/coloring of graph theory.
\end{rem}

\begin{rem}\label{eqa:Graphicable-Topcode-matrix}
Definition \ref{defn:Topcode-matrix} enables us to define the following concepts:
\begin{asparaenum}[Def-1. ]
\item A \emph{Topsnut-gpw} is defined by a no-colored $(p,q)$-graph $G$ and an evaluated Topcode-matrix $T_{code}$ defined in Definition \ref{defn:Topcode-matrix} such that $G$ can be evaluated by $T_{code}$.
\item If a colored $(p,q)$-graph $G$ with a coloring $f$ has its own Topsnut-matrix $A_{vev}(G)$ holding $A_{vev}(G)=T_{code}$ true with each $e_i=f(x_i,y_i)$ for $i\in [1,q]$, then we say the Topcode-matrix $T_{code}$ to be \emph{graphicable}.
\item A Topcode-matrix $T_{code}$ defined in Definition \ref{defn:Topcode-matrix} is an $\varepsilon$-Topcode-matrix and graphicable, and $G$ is a graph derived from $T_{code}$. If the vertex number of $G$ holds $|V(G)|> |(XY)^*|$ true, we say $T_{code}$ is a \emph{splitting $\varepsilon$-Topcode-matrix}. For example, $T_{code}$ is a splitting graceful Topcode-matrix, or a splitting odd-edge-magic total Topcode-matrix, \emph{etc.}.
\end{asparaenum}
\end{rem}

The problem of a Topcode-matrix $T_{code}$ defined in Definition \ref{defn:Topcode-matrix} to be graphicable is one of \emph{degree sequence} of graph theory. In particular, we have the following graphicable results:
\begin{thm}\label{thm:chapter1-degree-sequence-th-0}
A Topcode-matrix $T_{code}$ defined in Definition \ref{defn:Topcode-matrix} is graphicable if and only if $2q=\sum_{x\in X^*} \alpha(x)+\sum _{y\in Y^*} \alpha(y)$, where $\alpha(x)$ (resp. $\alpha(y)$) is the number of $x$ (resp. $y$) appeared in $X$ (resp. $Y$).
\end{thm}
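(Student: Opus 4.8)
The plan is to read ``graphicable'' exactly as in Def-2 of Remark~\ref{eqa:Graphicable-Topcode-matrix}: $T_{code}$ is graphicable iff it coincides with the Topsnut-matrix $A_{vev}(G)$ of some colored $(p,q)$-graph $G$ whose $i$-th edge joins a vertex colored $x_i$ to a vertex colored $y_i$ and is itself colored $e_i$. With this reading the claimed equality is just the degree-sum (handshake) relation written into the rows of the matrix, so I would prove necessity by a one-line count and spend the real effort on an explicit construction for sufficiency.

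\emph{Necessity.} Suppose $T_{code}=A_{vev}(G)$ with $|E(G)|=q$. Each of the $q$ edges of $G$ contributes precisely one entry to the first row $X$ (a color of one of its ends) and precisely one entry to the third row $Y$ (the color of the other end). Hence $\sum_{x\in X^*}\alpha(x)$ is just the number of entries of $X$, namely $q$, and likewise $\sum_{y\in Y^*}\alpha(y)=q$, so the sum is $2q$. Equivalently, the left side counts every vertex--edge incidence of $G$ exactly once, so it equals $\sum_{u\in V(G)}\ud_G(u)=2|E(G)|=2q$; either phrasing uses only that a realizing $G$ exists.

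\emph{Sufficiency.} Assume $2q=\sum_{x\in X^*}\alpha(x)+\sum_{y\in Y^*}\alpha(y)$; by the count above this identity is automatic for every $3\times q$ Topcode-matrix, so the task is really to exhibit a realizing graph. For each column $i\in[1,q]$ take a fresh pair of vertices $u_i,v_i$, join them by an edge, and define $f$ by $f(u_i)=x_i$, $f(v_i)=y_i$, $f(u_iv_i)=e_i$. No loop appears because $x_i\neq y_i$ holds in every Topcode-matrix, and the graph is simple because the pairs $\{u_i,v_i\}$ are pairwise disjoint; reading off its Topsnut-matrix column by column returns $T_{code}$, so $T_{code}$ is graphicable. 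For an economical realization one instead starts from vertex set $(XY)^*$ with edges $x_iy_i$, and splits off a new vertex only when identifying two ends would create a loop or a parallel edge; rewriting the hypothesis as $\sum_{w\in(XY)^*}\mu(w)=2q$ with $\mu$ as in Theorem~\ref{thm:tree-equivalent-Topcode-matrices} shows the resulting graph still has exactly $q$ edges.

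The only delicate point --- the ``hard part'' --- is bookkeeping about which graphs the framework admits. If simple graphs are required, repeated or ``parallel'' columns must be absorbed using the splitting phenomenon (distinct vertices may carry equal colors, exactly as in the splitting $\epsilon$-colorings of Definition~\ref{defn:splitting-(odd)graceful-coloring}); if loopless multigraphs are allowed, the construction directly on $(XY)^*$ works verbatim. A secondary routine point is to fix the convention that a Topcode-matrix is taken up to permutation of its columns, matching the fact that $E(G)$ is unordered; granting that, verifying $A_{vev}(G)=T_{code}$ is an entry-by-entry check.
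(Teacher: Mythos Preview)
Your argument is correct. The paper itself offers no proof of this theorem: it is stated and then followed immediately by the Erd\H{o}s--Gallai theorem, with no argument given for the equivalence claimed here. So there is nothing in the paper to compare against, and your write-up actually supplies what the paper omits.

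You also notice something the paper does not make explicit: the condition $2q=\sum_{x\in X^*}\alpha(x)+\sum_{y\in Y^*}\alpha(y)$ is automatically satisfied by every $3\times q$ Topcode-matrix, since each sum is just the length $q$ of the corresponding row. With that observation, the theorem reduces to ``every Topcode-matrix is graphicable,'' and your disjoint-edges construction (one fresh edge per column, vertices colored $x_i,y_i$, edge colored $e_i$) realizes it directly; Def-3 in Remark~\ref{eqa:Graphicable-Topcode-matrix} confirms that realizing graphs with $|V(G)|>|(XY)^*|$ are permitted, so no extra identification step is needed for the bare statement. Your discussion of the bookkeeping for simple versus multigraph realizations is a reasonable elaboration but not required for the theorem as stated.
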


Let $(XY)^*=\{u_{1},u_{2},\dots ,u_{p}\}$, and each number $u_i$ appears $d_i$ times in $T_{code}$. We present the famous Erd\"{o}s-Gallai degree sequence theorem (Ref. \cite{Bondy-2008}) as follows:
\begin{thm}\label{thm:chapter1-degree-sequence-iff}
A sequence $\textbf{d}=(d_1,d_2,\dots ,d_{p})$ to be the
degree sequence of a certain graph $G$ of order $p$ if $\sum
^p_{i=1}d_i$ is even and satisfies the following inequality
\begin{equation}\label{Erdos}
\sum ^k_{i=1}d_i\leq k(k-1)+\sum ^p_{i=k+1}\min\{k,d_i\},\ i\in [1,
p].
\end{equation}
\end{thm}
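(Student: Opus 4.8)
The plan is to treat the two implications separately; since the statement as written asserts only the ``if'' direction, the substantive work is proving that the conditions are \emph{sufficient}, while the converse is a short counting argument that I would include for completeness. In both parts one may assume without loss of generality that the sequence is already sorted, $d_1\ge d_2\ge\cdots\ge d_p$, because realizability of a degree sequence and the stated inequality depend only on the underlying multiset. For necessity, given any graph $G$ on vertices $v_1,\dots,v_p$ with $\ud_G(v_i)=d_i$ and any $k\in[1,p]$, I would split $\sum_{i=1}^k d_i$ into the contribution of edges lying inside $\{v_1,\dots,v_k\}$ (each counted twice, with at most $\binom{k}{2}$ such edges, hence at most $k(k-1)$) plus the contribution of edges from $\{v_1,\dots,v_k\}$ to the rest (each outside vertex $v_j$, $j>k$, meets at most $\min\{k,d_j\}$ of the first $k$ vertices); adding these gives the Erd\H{o}s--Gallai inequality, and evenness of $\sum_i d_i$ is the handshake lemma.

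For sufficiency I would argue by induction on $\sigma=\tfrac12\sum_i d_i$. If $d_1=0$ then every $d_i=0$ and the edgeless graph realizes $\mathbf d$. Otherwise form a new sequence $\mathbf d'$ by deleting $d_1$, subtracting $1$ from each of the $d_1$ largest of the remaining entries $d_2,\dots,d_p$, and re-sorting (the Havel--Hakimi/Kleitman--Wang ``lay-off'' of the largest vertex, equivalently Choudum's inductive reduction). One must check: (i) $\mathbf d'$ is nonincreasing with nonnegative terms and $\sum\mathbf d'=2(\sigma-d_1)$ is even; (ii) $\mathbf d'$ again satisfies the Erd\H{o}s--Gallai inequality for every $k$. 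Granting (i) and (ii), the induction hypothesis produces a graph $G'$ with degree sequence $\mathbf d'$; re-introducing a vertex $v_1$ joined exactly to the $d_1$ vertices whose degrees were decremented yields a graph $G$ with degree sequence $\mathbf d$.

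The hard part is step (ii): decrementing the ``top $d_1$'' entries lowers the left-hand side $\sum_{i\le k}d_i$ (helpful) but can also lower some of the terms $\min\{k,d_j\}$ on the right (harmful), so for each $k$ one needs a careful count of how many decremented indices lie at or below $k$ versus above it, balanced against the loss on each side; this bookkeeping is the technical core of the proof and is where I expect the effort to concentrate. I would organize it through the extremal reformulation: among all graphs on $\{v_1,\dots,v_p\}$ with $\ud_G(v_i)\le d_i$ for all $i$, pick one with $\sum_i\ud_G(v_i)$ maximum; if $\ud_G(v_i)<d_i$ for some $i$, an alternating-path / edge-switch argument (invoking the Erd\H{o}s--Gallai inequality for the relevant $k$ to guarantee the switch exists) strictly increases the degree sum, a contradiction, so the extremal graph realizes $\mathbf d$ exactly. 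As an alternative that sidesteps the index juggling, the theorem also follows from a maximum-flow / Gale--Ryser bipartite argument or from Tutte's $f$-factor theorem, but I would write out the self-contained inductive version above.
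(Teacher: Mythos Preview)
Your necessity argument is exactly what the paper does: it partitions the edges incident to $\{v_1,\dots,v_k\}$ into internal edges (contributing at most $k(k-1)$) and edges to $\{v_{k+1},\dots,v_p\}$ (each such $v_j$ contributing at most $\min\{k,d_j\}$), and reads off the inequality. On that direction you and the paper coincide.

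Where you diverge is on sufficiency. The paper does not prove it: after the short necessity paragraph it simply remarks that Erd\H{o}s and Gallai (1960) showed the condition is also sufficient and refers the reader to Chapter~6 of Harary for the argument. Your plan --- an inductive Havel--Hakimi/Choudum reduction, with the technical verification that the Erd\H{o}s--Gallai inequalities survive the lay-off, and an alternative extremal edge-switch formulation --- therefore goes well beyond the paper's own proof. What you gain is a self-contained argument; what the paper gains is brevity by citation. Your identification of step~(ii) as the crux is accurate, and either of your two suggested routes (Choudum-style bookkeeping or the maximal-graph/alternating-path argument) would close the gap that the paper leaves to the literature.
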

\begin{proof}
Let $\ud_G(u_i)=d_i$ for $u_i\in V(G)=\{u_i: i\in [1, n]\}$. The set $V_1=\{u_1, u_2, \dots, u_k\}$ distributes $k(k-1)$ in
$\sum^k_{i=1}d_i$. Each $u_j\in V_2=\{u_{k+1},u_{k+2},\dots ,u_{n}\}$ is adjacent to at most $d_i$ or $k$ vertices of $V_1$.
Therefore, we get the desired inequality (\ref{Erdos}).

Furthermore, Erd\"{o}s and Gallai, in 1960, have shown that this necessary condition is also sufficient for the sequence \emph{\textbf{d}} to be graphic
\cite{Bondy-2008}. The sufficient proof can be found in Chapter Six of ``\emph{Graph Theory}'' written by Harary
\cite{Harary-book}.
\end{proof}

\begin{thm}\label{thm:chapter1-degree-sequence-th-1}
If there is no $x_j=x_i$ and $y_r=x_i$ for $j\neq i$ and $r\neq i$ in $T_{code}$ defined in Definition \ref{defn:Topcode-matrix}, we call $x_i$ as a \emph{leaf} of $T_{code}$, and we delete $x_i,e_i,y_i$ from $T_{code}$ to obtain $T^1_{code}$ which is a $3\times (q-1)$-order Topcode-matrix; and do such deletion operation to a leaf $x^1_i$ of $T^1_{code}$ to obtain $T^2_{code}$ which is a $3\times (q-2)$-order Topcode-matrix, until, we get $T^k_{code}$ such that there is no leaf in $T^k_{code}$ which is a $3\times (q-k)$-order Topcode-matrix. If the deletion of all leaves of $T_{code}$ produces a Topcode-matrix $T^*_{code}$ corresponding a \emph{path} of graph theory, so $T_{code}$ is called a \emph{caterpillar Topcode-matrix}.
\end{thm}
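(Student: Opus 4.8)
The statement is really a constructive definition, so I would read the proof as the job of showing that the leaf-peeling construction is well posed, that it terminates, and that it faithfully reflects the associated graph, so that the name \emph{caterpillar Topcode-matrix} agrees with the usual notion of a caterpillar tree. The plan is to check four things, in this order: (i) each $T^{t+1}_{code}$ gotten by striking a leaf column is again a Topcode-matrix as in Definition~\ref{defn:Topcode-matrix}, of order $3\times(q-t-1)$; (ii) the peeling terminates; (iii) a leaf of $T_{code}$ is the exact matrix counterpart of a leaf (degree-one vertex) of any graph $G$ that $T_{code}$ corresponds to, and striking that column is the counterpart of deleting that leaf together with its incident edge; and (iv) when $T_{code}$ corresponds to a tree, $T^{*}_{code}$ corresponds to a path if and only if that tree is a caterpillar.

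For (i): striking column $i$ leaves every other entry untouched, so $x_j\neq y_j$ is inherited for each surviving column $j$, and if $T_{code}$ is evaluated by a function $f$ with $e_j=f(x_j,y_j)$ then the same $f$ evaluates $T^{1}_{code}$; hence $T^{1}_{code}$ is a genuine $3\times(q-1)$-order Topcode-matrix, and by induction so is every $T^{t}_{code}$. For (ii): every deletion lowers the number of columns by exactly one and the number of columns is a nonnegative integer, so after at most $q$ steps one reaches a matrix $T^{k}_{code}$ with $0\le k\le q$ having no leaf; I would also remark that a nonempty Topcode-matrix corresponding to a forest always has a leaf, so on such inputs the peeling runs all the way down to the empty matrix.

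For (iii): by the definition of a leaf recorded in the Preliminary, ``$x_i$ is a leaf of $T_{code}$'' says precisely that the value $x_i$ occurs in no other column and in no end-position elsewhere, i.e. that $x_i$ has multiplicity one among the $2q$ entries of $X$ and $Y$; and by the very way a Topcode-matrix corresponds a graph — each column being an edge whose ends are its top and bottom entries (cf. the graphicability criterion of Theorem~\ref{thm:chapter1-degree-sequence-th-0}) — the multiplicity of a value equals the degree of the corresponding vertex. So a leaf of $T_{code}$ is exactly a degree-one vertex of $G$, and striking the column $(x_i~e_i~y_i)^{T}$ realizes the deletion of that vertex with its one incident edge. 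I would then note that if $x_i,x_j$ are both leaves with $i\neq j$ then $x_j\neq x_i$ and $x_j\neq y_i$ (the latter since $x_j$, being a leaf, is the end of no edge but $e_j$), so striking column $i$ does not change the multiplicity of $x_j$ and $x_j$ stays a leaf: leaves are never destroyed while peeling other leaves, so the construction goes through in any order. The hardest point, I expect, is the \emph{confluence} of the iterated peeling, i.e. that the terminal leafless $T^{k}_{code}$ does not depend on which leaf is chosen at each stage; I would obtain this from a Newman-type diamond argument built on the ``leaves stay leaves'' observation, or more directly by identifying $T^{k}_{code}$ intrinsically as the sub-Topcode-matrix whose underlying graph is $G$ stripped of all its iteratively removable leaves, a well-defined invariant of $G$.

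Finally, for (iv): applying the correspondence of (iii) one round at a time, the matrix $T^{*}_{code}$ obtained by deleting all current leaves of $T_{code}$ corresponds to $G$ with all its leaves removed. When $T_{code}$ corresponds to a tree $T$ — a situation recognized via Theorem~\ref{thm:tree-equivalent-Topcode-matrices} — the matrix $T^{*}_{code}$ is again connected and acyclic, and ``$T^{*}_{code}$ corresponds to a path'' is, by Theorem~\ref{thm:tree-equivalent-Topcode-matrices} applied to $T^{*}_{code}$ (a connected acyclic Topcode-matrix with exactly two leaves corresponds to a path), precisely the classical requirement that $T$ become a path once its leaves are deleted, i.e. that $T$ be a caterpillar. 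This confirms the terminology is consistent and closes the argument.\qqed
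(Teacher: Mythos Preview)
The paper gives no proof of this statement at all: despite the \texttt{thm} environment, it functions purely as a definition, introducing the leaf-peeling construction and the name \emph{caterpillar Topcode-matrix} without any accompanying argument. Your proposal is therefore not a different route to the paper's proof but rather a genuine addition --- you supply the well-definedness checks (closure under column deletion, termination, confluence, agreement with the graph-theoretic degree-one notion and with the classical caterpillar definition) that the paper simply omits.

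Your argument is sound. Two small remarks. First, the confluence discussion in (iii), while correct, is more than the statement strictly needs: the theorem only asserts the existence of \emph{some} terminal leafless $T^{k}_{code}$, and the subsequent definition of $T^{*}_{code}$ refers to the one-round removal of all current leaves, not to the iterated limit; you already keep these distinct in (iv), so the Newman-style detour could be dropped or relegated to a remark. Second, your appeal to Theorem~\ref{thm:tree-equivalent-Topcode-matrices} to recognise a path-Topcode-matrix is apt but slightly informal --- that theorem characterises tree-Topcode-matrices, not paths specifically; the cleanest phrasing is just that $T^{*}_{code}$ corresponds to a path precisely when the derived tree of $T_{code}$, once stripped of its leaves, is a path, which is the textbook caterpillar condition you invoke.
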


\begin{thm}\label{thm:chapter1-degree-sequence-th-3}
If the Topcode-matrix $T^*_{code}$ obtained from $T_{code}$ defined in Definition \ref{defn:Topcode-matrix} by deleting all leaves of $T_{code}$, and $T^*_{code}$ corresponds a caterpillar. Then we claim that $T_{code}$ corresponds a graph, called a \emph{lobster} in graph theory.
\end{thm}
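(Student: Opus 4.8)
The plan is first to pin down the reading of ``deleting all leaves'': as in Theorem~\ref{thm:chapter1-degree-sequence-th-1} we remove from $T_{code}$ every column $(x_i~e_i~y_i)^{T}$ having a \emph{leaf} end (a single round, carried out one column at a time), so that $T^{*}_{code}$ is the matrix analogue of the derived tree $T'=T\setminus\{\textrm{leaves of }T\}$; this is the only reading under which ``caterpillar'' and ``lobster'' keep their standard graph-theoretic meaning. We also assume $T_{code}$ is connected (if it had an isolated $K_2$ component the hypothesis would still hold while $T_{code}$ would correspond to no connected graph) and that the $e$-values of $T_{code}$ are pairwise distinct, so that Theorem~\ref{thm:tree-equivalent-Topcode-matrices} applies. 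Recalling that in graph theory a tree is a \emph{lobster} precisely when its derived tree is a \emph{caterpillar}, the strategy is: (i) exhibit a graph $T$ with $A_{vev}(T)=T_{code}$ by re-inserting the deleted leaf-columns into the caterpillar realised by $T^{*}_{code}$; (ii) verify $T$ is a tree; (iii) verify that deleting all leaves of $T$ returns exactly that caterpillar; (iv) read off that $T$ is a lobster.

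For (i) and (ii): by hypothesis $T^{*}_{code}$ corresponds to a caterpillar $T^{*}$, which is in particular a tree, so by Theorem~\ref{thm:tree-equivalent-Topcode-matrices} (the equivalence Tree-1 $\Leftrightarrow$ Tree-4) $T^{*}_{code}$ is connected, is acyclic, and has exactly one more distinct vertex-label than it has columns; in particular all its vertex-labels are pairwise distinct. Now undo the deletions of Theorem~\ref{thm:chapter1-degree-sequence-th-1} in reverse order: when the column $c$ deleted last is put back, its leaf end has label-multiplicity $1$ in the matrix $T^{*}_{code}+c$ (this matrix being precisely the stage $T^{k-1}_{code}$ from which $c$ had been removed), so at the graph level this is the attachment of a new pendant vertex to $T^{*}$; iterating, every re-insertion is a pendant-vertex attachment, the final graph $T$ satisfies $A_{vev}(T)=T_{code}$ (so $T_{code}$ is graphicable, in agreement with Theorem~\ref{thm:chapter1-degree-sequence-th-0}), and since a pendant attachment preserves connectedness and acyclicity, $T$ is a tree.

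For (iii) and (iv): in $T$ the degree of a vertex equals the number of columns of $T_{code}$ in which its label occurs, so the leaves of $T$ are exactly the ends of those columns of $T_{code}$ having label-multiplicity $1$, i.e. exactly the leaf-columns that were removed to form $T^{*}_{code}$. Deleting all leaves of $T$ therefore erases precisely those edges together with their pendant vertices, and what remains is the graph realised by $T^{*}_{code}$, namely the caterpillar $T^{*}$. Thus the derived tree of $T$ is a caterpillar, which is the defining property of a lobster; hence $T$, and with it $T_{code}$, corresponds to a lobster, as claimed.

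The main point to watch — essentially the only content beyond unwinding definitions — is that the \emph{freshness} of the leaf-labels (no deleted leaf shares its label with anything in $T^{*}_{code}$ or with another deleted leaf, since a leaf occurs in $T_{code}$ only once) is exactly what makes matrix-level leaf-deletion mirror graph-level leaf-deletion, together with the need to exclude degenerate disconnected inputs. If one prefers not to assume the $e$-values distinct, one replaces the appeal to Theorem~\ref{thm:tree-equivalent-Topcode-matrices} by a direct check that $T_{code}$ is connected and acyclic: any cycle in $T_{code}$ uses no pendant edge and hence survives into $T^{*}_{code}$, contradicting that $T^{*}_{code}$ realises a caterpillar, while every deleted leaf is joined to the connected $T^{*}_{code}$ by its own pendant edge.
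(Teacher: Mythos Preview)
The paper states this theorem without proof, so there is no argument to compare against; your proposal is in fact supplying the missing justification. Your reading of ``deleting all leaves'' as a single round (the only one under which the caterpillar/lobster language is sensible), and your observation that the statement needs the standing hypothesis that $T_{code}$ be connected (to rule out stray $K_2$ components), are both well taken and are not addressed in the paper.

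There is one small point you should make explicit. In the re-insertion step you verify that the \emph{leaf} end of each deleted column is fresh, but you do not check that the \emph{other} end $y$ already lies in $(X^{*}Y^{*})^{*}$; without this the ``pendant attachment'' could create a new isolated edge rather than extend $T^{*}$. The verification is short under your connectedness assumption: if every column through a non-leaf vertex $y$ had a leaf at the opposite end, then every neighbour of $y$ would be a leaf, and connectedness would force $T_{code}$ to be a star, making $T^{*}_{code}$ empty and contradicting the hypothesis. Hence $y$ occurs in at least one column with both ends of multiplicity $\geq 2$, so $y\in (X^{*}Y^{*})^{*}$ and the attachment is genuine. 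With this line inserted, your argument is complete and correct.
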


\vskip 0.4cm

\subsection{Operations on Topcode-matrices}\quad We show the following operations on Topcode-matrices:

\emph{(i) $^*$ Dual operation.} Let $M_v=\max (XY)^*$ and $m_v=\min (XY)^*$ in an evaluated Topcode-matrix $T_{code}$ defined in Definition \ref{defn:Topcode-matrix}, $T_{code}$ admits a function $f$ such that $e_i=f(x_i,y_i)$ for $i\in [1,q]$. We make two vectors $\overline{X}=(\overline{x}_1~\overline{x}_2~\dots ,~\overline{x}_q)$ with $\overline{x}_i=M_v+m_v-x_i$ for $i\in [1,q]$ and $\overline{Y}=(\overline{y}_1~\overline{y}_2~\dots ~\overline{y}_q)$ with $\overline{y}_j=M_v+m_v-y_j$ for $j\in [1,q]$, as well as $\overline{E}$ having each element $\overline{e}_i=f(M_v+m_v-\overline{x}_i,M_v+m_v-\overline{y}_i)$ for $i\in [1,q]$. The Topcode-matrix $\overline{T}_{code}=(\overline{X}\quad \overline{E}\quad \overline{Y})^{T}_{3\times q}$ is called the \emph{dual Topcode-matrix} of the Topcode-matrix $T_{code}$.

\emph{(ii) Column-exchanging operation \cite{Yao-Sun-Zhao-Li-Yan-2017}.} We exchange the positions of two columns $(x_i~e_i~y_i)^{T}$ and $(x_j~e_j~y_j)^{T}$ in $T_{code}$ defined in Definition \ref{defn:Topcode-matrix}, so we get another Topcode-matrix $T'_{code}$. In mathematical symbol, the \emph{column-exchanging operation} $c_{(i,j)}(T_{code})=T'_{code}$ is defined by
$${
\begin{split}
&\quad c_{(i,j)}(x_1 ~ x_2 ~ \cdots ~\textcolor[rgb]{0.00,0.00,1.00}{x_i}~ \cdots ~\textcolor[rgb]{0.00,0.00,1.00}{x_j}~ \cdots ~x_q)\\
&=(x_1 ~ x_2 ~ \cdots ~\textcolor[rgb]{0.00,0.00,1.00}{x_j}~ \cdots ~\textcolor[rgb]{0.00,0.00,1.00}{x_i}~ \cdots ~x_q),
\end{split}}$$
$$
{
\begin{split}
&\quad c_{(i,j)}(e_1 ~ e_2 ~ \cdots ~\textcolor[rgb]{0.00,0.00,1.00}{e_i}~ \cdots ~\textcolor[rgb]{0.00,0.00,1.00}{e_j}~ \cdots ~e_q)\\
&=(e_1 ~ e_2 ~ \cdots ~\textcolor[rgb]{0.00,0.00,1.00}{e_j}~ \cdots ~\textcolor[rgb]{0.00,0.00,1.00}{e_i}~ \cdots ~e_q),
\end{split}}
$$

and
$$
{
\begin{split}
&\quad c_{(i,j)}(y_1 ~ y_2 ~ \cdots ~\textcolor[rgb]{0.00,0.00,1.00}{y_i}~ \cdots ~\textcolor[rgb]{0.00,0.00,1.00}{y_j}~ \cdots ~y_q)\\
&=(y_1 ~ y_2 ~ \cdots ~\textcolor[rgb]{0.00,0.00,1.00}{y_j}~ \cdots ~\textcolor[rgb]{0.00,0.00,1.00}{y_i}~ \cdots ~y_q).
\end{split}}
$$

\emph{(iii) XY-exchanging operation \cite{Yao-Sun-Zhao-Li-Yan-2017}.} We exchange the positions of $x_i$ and $y_i$ of the $i$th column of $T_{code}$ defined in Definition \ref{defn:Topcode-matrix} by an \emph{XY-exchanging operation} $l_{(i)}$ defined as:
$${
\begin{split}
&\quad l_{(i)}(x_1 ~ x_2 ~ \cdots x_{i-1}~\textcolor[rgb]{0.00,0.00,1.00}{x_i}~x_{i+1} \cdots ~x_q)\\
&=(x_1 ~ x_2 ~ \cdots x_{i-1}~\textcolor[rgb]{0.00,0.00,1.00}{y_i}~x_{i+1} \cdots ~x_q)
\end{split}}
$$
and
$${
\begin{split}
&\quad l_{(i)}(y_1 ~ y_2 ~ \cdots y_{i-1}~\textcolor[rgb]{0.00,0.00,1.00}{y_i}~y_{i+1} \cdots ~y_q)\\
&=(y_1 ~ y_2 ~ \cdots y_{i-1}~\textcolor[rgb]{0.00,0.00,1.00}{x_i}~y_{i+1} \cdots ~y_q),
\end{split}}
$$
the resultant matrix is denoted as $l_{(i)}(T_{code})$.

Now, we do a series of column-exchanging operations $c_{(i_k,j_k)}$ with $k\in [1,a]$, and a series of XY-exchanging operations $l_{(i_s)}$ with $s\in [1,b]$ to a Topcode-matrix $T_{code}$ defined in Definition \ref{defn:Topcode-matrix}, the resultant Topcode-matrix is written by $C_{(c,l)(a,b)}(T_{code})$.
\begin{lem}\label{thm:grapgicable-no-isomorphic}
Suppose $T_{code}$ and $T'_{code}$ are defined in Definition \ref{defn:Topcode-matrix} and grapgicable, a graph $G$ corresponds to $T_{code}$ and another graph $H$ corresponds to $T'_{code}$. If
\begin{equation}\label{eqa:graphs-isomorphic-Topcode-matrices}
C_{(c,l)(a,b)}(T_{code}(G))=T'_{code}(H),
\end{equation}
then two graphs $G$ and $H$ may be or not be isomorphic to each other.
\end{lem}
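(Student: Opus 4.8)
The plan is to establish the lemma by exhibiting one instance of each outcome, since the assertion is exactly that hypothesis (\ref{eqa:graphs-isomorphic-Topcode-matrices}) forces neither $G\cong H$ nor $G\not\cong H$; so the ``proof'' is really a pair of witnessing constructions.

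First I would dispose of the ``may be isomorphic'' alternative. A column-exchanging operation $c_{(i,j)}$ only permutes columns of a Topcode-matrix, and an XY-exchanging operation $l_{(i)}$ only interchanges the two recorded ends of a single edge; neither alters which pairs of vertices are joined, so whenever a graph $G$ corresponds to $T_{code}(G)$ it also corresponds to $C_{(c,l)(a,b)}(T_{code}(G))$ for every choice of the two series of operations. Thus, taking $H$ to be an isomorphic copy of $G$ and declaring $T'_{code}(H):=C_{(c,l)(a,b)}(T_{code}(G))$ yields an instance of (\ref{eqa:graphs-isomorphic-Topcode-matrices}) with $G\cong H$; even the trivial choice $a=b=0$ suffices here.

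Next I would produce the ``may not be isomorphic'' alternative from Theorem \ref{thm:two-Topsnut-gpws-or-more}: any Topcode-matrix corresponding to a connected non-tree Topsnut-gpw also corresponds to at least two non-isomorphic ones. Concretely, take the Topcode-matrix $A$ of (\ref{eqa:example}) and two of the six non-isomorphic Topsnut-gpws of Fig.~\ref{fig:1-example}, say $G=$ (a) and $H=$ (b). Each of $T_{code}(G)$ and $T_{code}(H)$ has, up to reordering its columns and up to swapping ends inside some columns, exactly the columns of $A$; since the operations $c_{(i,j)}$ generate the full symmetric group on the $q$ columns while the $l_{(i)}$ independently toggle the end-order of each column, a suitable composite $C_{(c,l)(a,b)}$ carries $T_{code}(G)$ onto $T_{code}(H)$ exactly. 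Then (\ref{eqa:graphs-isomorphic-Topcode-matrices}) holds while $G\not\cong H$, and the two instances together prove the lemma.

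The step needing the most care is the verification in the last paragraph that $T_{code}(G)$ and $T_{code}(H)$ really are column-permutation-plus-XY-flip variants of one another --- equivalently, that the multiset of unordered labeled end-pairs read off the two Topsnut-gpws coincides. This uses that both graphs are evaluated by the \emph{same} function $f$ and carry the common labeled edge data encoded by $A$; granting that, the remainder is the routine generation fact about $c_{(i,j)}$ and $l_{(i)}$. If one prefers not to rely on the figure, the argument is unchanged for any connected non-tree Topcode-matrix furnished by Theorem \ref{thm:two-Topsnut-gpws-or-more}.
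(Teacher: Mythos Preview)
Your proposal is correct and follows essentially the same approach as the paper: the paper does not give a formal argument but simply points to the six non-isomorphic Topsnut-gpws (a)--(f) of Fig.~\ref{fig:1-example}, all sharing the Topcode-matrix $A$ of (\ref{eqa:example}), as witnesses for the ``may not be isomorphic'' case. Your write-up is in fact more complete than the paper's, since you also explicitly dispatch the ``may be isomorphic'' alternative and spell out why the $c_{(i,j)}$ and $l_{(i)}$ operations suffice to carry one realization of $A$ to another.
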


See some examples shown in Fig.\ref{fig:1-example} for understanding Lemma \ref{thm:grapgicable-no-isomorphic}, in which six Topsnut-gpws (a)-(f) correspond the same Topcode-matrix, although they are not isomorphic to each other.

Under the XY-exchanging operation and the column-exchanging operation, a colored graph $G$ has its own \emph{standard Topcode-matrix} (representative Topcode-matrix) $A_{vev}(G)=(X~E~Y)^{T}$ such that $e_i\leq e_{i+1}$ with $i\in[1,q-1]$, $x_j<y_j$ with $j\in[1,q]$, since there exists no case $x_j=y_j$. Thereby, this graph $G$ has $m$ standard Topcode-matrices if it admits $m$ different colorings/labellings of graphs in graph theory.

\emph{(iv) $^*$ Union-addition operation.} A \emph{single matrix} $X$ is defined as
\begin{equation}\label{eqa:one-columme-matrix}
\centering
{
\begin{split}
X&= \left(
\begin{array}{ccccc}
x_{1}\\
x_{2}\\
\cdots \\
x_{n}
\end{array}
\right)=(x_{1}~x_{2}~\cdots ~x_{n})^{T}_{n\times 1}
\end{split}}
\end{equation}
Thereby, we define an operation, called \emph{union-addition operation} and denoted as $\uplus$, between matrices $X_i=(x_{i,1}~x_{i,2}~\cdots ~x_{i,n})^{T}_{n\times 1}$ with $i\in [1,m]$ in the following way:
\begin{equation}\label{eqa:one-columme-matrix}
{
\begin{split}
\uplus ^m_{i=1}X_i&=X_1\uplus X_2\uplus \cdots \uplus X_m\\
&=\left(\begin{array}{ccccc}
x_{1,1} & x_{2,1} & \cdots & x_{m,1}\\
x_{1,2} & x_{2,2} & \cdots & x_{m,2}\\
\cdots & \cdots & \cdots & \cdots\\
x_{1,n} & x_{2,n} & \cdots & x_{m,n}
\end{array}
\right)_{n\times m}
\end{split}}
\end{equation}

Moreover, let $T^i_{code}=(X_i~E_i~Y_i)^{T}_{3\times q_i}$ defined in Definition \ref{defn:Topcode-matrix}, where v-vector $X_i=(x_{i,1} ~ x_{i,2} ~ \cdots ~x_{i,q_i})$, e-vector $E_i=(e_{i,1}~e_{i,2} ~\cdots ~e_{i,q_i})$, and v-vector $Y_i=(y_{i,1} ~ y_{i,2} ~\cdots ~ y_{i,q_i})$ for $i\in[1,m]$. We have a Topcode-matrix union as follows
\begin{equation}\label{eqa:matrix-operation}
{
\begin{split}
T_{code}&=\uplus ^m_{i=1}T^i_{code}=T^1_{code}\uplus T^2_{code}\uplus \cdots \uplus T^m_{code}\\
&=\left(\begin{array}{ccccc}
X_1 & X_2 & \cdots & X_m\\
E_1 & E_2 & \cdots & E_m\\
Y_1 & Y_2 & \cdots & Y_m
\end{array}
\right)_{3\times mA}
\end{split}}
\end{equation}
by the union-addition operation, where $A=\sum ^m_{i=1}q_i$. Clearly,
\begin{thm}\label{thm:Topcode-matrix-union-graphicable}
If each Topcode-matrix $T^i_{code}$ with $i\in[1,m]$ is graphicable, so is the Topcode-matrix union $\uplus ^m_{i=1}T^i_{code}$.
\end{thm}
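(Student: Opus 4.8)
The plan is to reduce the statement to the degree-sequence criterion of Theorem \ref{thm:chapter1-degree-sequence-th-0}. Recall that a Topcode-matrix $T_{code}=(X~E~Y)^{T}_{3\times q}$ is graphicable if and only if $2q=\sum_{x\in X^*}\alpha(x)+\sum_{y\in Y^*}\alpha(y)$, where $\alpha(\cdot)$ counts the number of occurrences of an element in the corresponding $v$-vector. So the first step is simply to write down this identity for each block: since $T^i_{code}=(X_i~E_i~Y_i)^{T}_{3\times q_i}$ is graphicable, we have $2q_i=\sum_{x\in X_i^*}\alpha_i(x)+\sum_{y\in Y_i^*}\alpha_i(y)$ for every $i\in[1,m]$.

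Second, I would add these $m$ identities together. On the left one obtains $2\sum_{i=1}^m q_i=2A$, which is exactly twice the number of columns of the union $T_{code}=\uplus_{i=1}^m T^i_{code}$. On the right, the point to verify is that occurrence-counts are additive under the union-addition operation: the $v$-vector $X$ of $T_{code}$ is the concatenation $X_1X_2\cdots X_m$, so for every value $x\in X^*$ its multiplicity $\alpha(x)$ equals $\sum_{i:\,x\in X_i^*}\alpha_i(x)$, and likewise for $Y$. Summing over all values and all blocks gives $\sum_{i=1}^m\sum_{x\in X_i^*}\alpha_i(x)=\sum_{x\in X^*}\alpha(x)$ and similarly for $Y$, whence $2A=\sum_{x\in X^*}\alpha(x)+\sum_{y\in Y^*}\alpha(y)$. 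Theorem \ref{thm:chapter1-degree-sequence-th-0} then yields that $\uplus_{i=1}^m T^i_{code}$ is graphicable.

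Alternatively, and perhaps more transparently, one can argue constructively: let $G_i$ be a graph realizing $T^i_{code}$ with coloring $f_i$, let $G=\bigcup_{i=1}^m G_i$ be the disjoint union with the induced coloring, and observe that a Topsnut-matrix of $G$ is obtained by listing the columns of the $A_{vev}(G_i)$ side by side, i.e. $A_{vev}(G)=\uplus_{i=1}^m A_{vev}(G_i)=\uplus_{i=1}^m T^i_{code}$ after a reordering of columns. The only delicate point here is that the column order is immaterial: by the column-exchanging operation, two Topcode-matrices that differ by a permutation of columns correspond to the same graph, so the rearrangement is harmless and $T_{code}$ is graphicable in the sense of Remark \ref{eqa:Graphicable-Topcode-matrix}.

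The main obstacle is the bookkeeping in the second step when the value sets $(X_iY_i)^*$ overlap for distinct $i$: in the union these shared values collapse to single entries of $(XY)^*$ with larger multiplicity, so one must be careful that it is the raw occurrence totals, not the cardinalities of the distinct-value sets, that are being summed — which is precisely what the criterion of Theorem \ref{thm:chapter1-degree-sequence-th-0} requires. (If one insisted that the realizing graph be simple rather than a multigraph, overlapping blocks producing a repeated column would create parallel edges; but the stated criterion already tolerates this situation, so no extra hypothesis is needed.)
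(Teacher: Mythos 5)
Your proposal is correct, and it is worth knowing that the paper itself supplies essentially no argument: Theorem \ref{thm:Topcode-matrix-union-graphicable} is asserted with a bare ``Clearly'' immediately after the union-addition operation is defined, and the intended justification is evidently your second, constructive route --- take colored graphs $G_i$ with $A_{vev}(G_i)=T^i_{code}$, form the disjoint union $G=\bigcup_{i=1}^m G_i$ with the inherited colorings, and read off $A_{vev}(G)=\uplus_{i=1}^m T^i_{code}$, column order being immaterial by the column-exchanging operation. Your primary route through Theorem \ref{thm:chapter1-degree-sequence-th-0} is formally sound but buys less than it appears to: with $\alpha$ as defined there, $\sum_{x\in X^*}\alpha(x)=q$ and $\sum_{y\in Y^*}\alpha(y)=q$ hold automatically for every Topcode-matrix (each sum merely re-totals the $q$ entries of the corresponding v-vector), so the criterion is satisfied vacuously and your additivity bookkeeping, while correct, is not where the content lies; the constructive argument is the substantive one and is what the paper's ``Clearly'' gestures at. Your treatment of the only genuine subtleties in that constructive argument is right: values shared between blocks simply collapse to elements of $(XY)^*$ of higher multiplicity, and since graphicability (Remark \ref{eqa:Graphicable-Topcode-matrix}, Def-2) only asks for \emph{some} colored graph whose Topsnut-matrix equals the given Topcode-matrix, the disjoint union --- which never identifies equally-colored vertices --- realizes $\uplus_{i=1}^m T^i_{code}$ without producing parallel edges.
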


\vskip 0.2cm

Theorem \ref{thm:Topcode-matrix-union-graphicable} provides us techniques for constructing new Topsnut-gpws and new \emph{combinatorial labellings}.

(i) If a graph $G$ having the union $\uplus ^m_{i=1}T^i_{code}$ of Topcode-matrices $T^i_{code}$ with $i\in[1,m]$ is connected, then $G$ is a Topsnut-gpw and an authentication for the private keys $T^1_{code},T^2_{code},\dots, T^m_{code}$.

(ii) Suppose that each Topsnut-gpw $G_i$ having the Topcode-matrix $T^i_{code}$ admits a graph labelling/coloring $f_i$ with $i\in[1,m]$, then $G$ admits a graph labelling/coloring made by a combinatorial coloring/labelling $f=\uplus ^m_{i=1}f_i$.

\vskip 0.2cm

We present an example shown in Fig.\ref{fig:combinatorial-labellings} for illustrating the above techniques, where $G=\bigcup ^4_{i=1}G_i$, where $G_1$ admits a \emph{pan-graceful labelling} $f_1$ making a pan-graceful Topcode-matrix $T_{code}(G_1)$; $G_2$ admits a \emph{graceful labelling} $f_2$ making a graceful Topcode-matrix $T_{code}(G_2)$; $G_3$ admits an \emph{odd-graceful labelling} $f_3$ making an odd-graceful Topcode-matrix $T_{code}(G_3)$; $G_4$ admits an \emph{odd-edge-magic total labelling} $f_4$ making an odd-edge-magic total Topcode-matrix $T_{code}(G_4)$. Thereby, $G$ admits a combinatorial labelling $f=\uplus ^4_{i=1}f_i$, and corresponds a Topcode-matrix $\uplus ^4_{i=1}T_{code}(G_i)$.

Observe the Topsnut-gpw $G$ shown in Fig.\ref{fig:combinatorial-labellings} carefully, we can discover a difficult problem: Splitting $G$ into the original Topsnut-gpws $G_1,G_3,G_3,G_4$ is not easy, even impossible as if the size of $G$ is quite large, that is, $G$ has thousand and thousand vertices and edges. So, it is good for producing Topsnut-gpws having high-level security as desired, but it is very difficult for attacking our Topsnut-gpws, in other words, our Topsnut-gpws are certainly \emph{computational security}.

\begin{figure}[h]
\centering
\includegraphics[width=8.6cm]{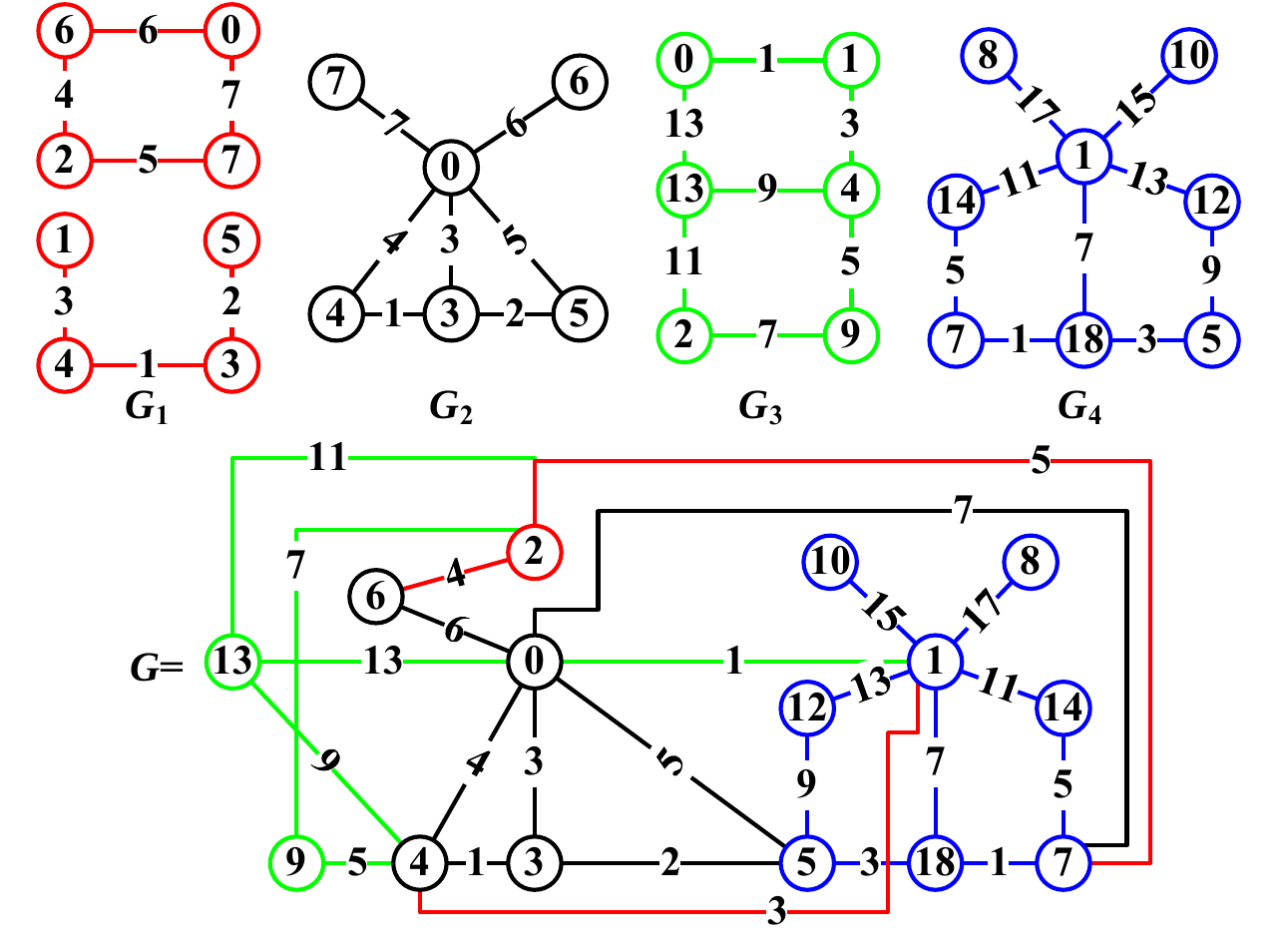}\\
\caption{\label{fig:combinatorial-labellings} {\small A Topsnut-gpw $G=\bigcup ^4_{i=1}G_i$ is connected, and each $G_i$ admits a graph labelling $f_i$ with $i\in [1,4]$.}}
\end{figure}

\subsubsection{Popular operations from sets}\quad For a Topcode-matrix $A_i=(X_i~E_i~Y_i)$ with $(x_i~e_i~y_i)^{T}$ for $i\in [1,m]$, and another Topcode-matrix $B_j=(X_j~E_j~Y_j)$ with $(x_j~e_j~y_j)^{T}$ for $j\in [1,n]$, as well as $W_{ij}=\{w_s=(x_s~e_s~y_s)^{T}:$ $w_s \textrm{ is in both $A_i$ and $B_j$}\}$, we can define the subtractive operation as $A_i\setminus B_j$ containing no any $w_s\in W_{ij}$ and no any $(x_j~e_j~y_j)^{T}$ in $B_j$; similarly, the subtractive operation as $B_j\setminus A_i$ containing no any $w_s\in W_{ij}$ and no any $(x_i~e_i~y_i)^{T}$ in $A_i$; the union operation as $A_i\cup B_j$ containing each element of $W_{ij}$, $A_i\setminus B_j$ and $B_j\setminus A_i$; and the intersection operation $A_i\cap B_j=W_{ij}$. For example,

{\footnotesize
\begin{equation}\label{eqa:subtractive-union-intersection-1}
\centering
A_i= \left(
\begin{array}{ccccccccc}
7 & 5 & 7 &1\\
1 & 3 & 5 &7 \\
18& 18 & 14 &18
\end{array}
\right)~
B_j= \left(
\begin{array}{ccccccccc}
7 &1 & 5 \\
1 &7 & 9\\
18 &18& 12
\end{array}
\right)
\end{equation}

\begin{equation}\label{eqa:subtractive-union-intersection-2}
\centering
A_i\cup B_j= \left(
\begin{array}{ccccccccc}
7 & 5 & 7 &1& 5\\
1 & 3 & 5 &7& 9 \\
18& 18 & 14 &18& 12
\end{array}
\right)~B_j\setminus A_i= \left(
\begin{array}{ccccccccc}
 5 \\
9 \\
12
\end{array}
\right)
\end{equation}

\begin{equation}\label{eqa:subtractive-union-intersection-3}
\centering
A_i\cap B_j= \left(
\begin{array}{ccccccccc}
7 & 1\\
1 &7 \\
18& 18
\end{array}
\right)~A_i\setminus B_j= \left(
\begin{array}{ccccccccc}
 5 & 7 \\
 3 & 5 \\
18 & 14
\end{array}
\right)
\end{equation}
}

\subsection{Splitting operations}

For investigation of splitting $G$ into the original Topsnut-gpws $G_1,G_3,G_3,G_4$ above Fig.\ref{fig:combinatorial-labellings}, we present a group of splitting operations, coincident operations and contracting operations as follows:

\begin{defn}\label{defn:split-operation-combinatoric}
\cite{Yao-Mu-Sun-Sun-Zhang-Wang-Su-Zhang-Yang-Zhao-Wang-Ma-Yao-Yang-Xie2019} Let $xw$ be an edge of a $(p,q)$-graph $G$, such that neighbor sets $N_{ei}(x)=\{w,u_1,u_2,\dots ,u_i, v_1,v_2,\dots ,v_j\}$ and $N_{ei}(w)=\{x,w_1,w_2,\dots ,w_m\}$, and $xw\in E(G)$.
\begin{asparaenum}[Op-1. ]
\item A \emph{half-edge split operation} is defined by deleting the edge $xw$, and then splitting the vertex $x$ into two vertices $x',x''$ and joining $x'$ with these vertices $w,u_1,u_2,\dots ,u_i$, and finally joining $x''$ with these vertices $w,v_1,v_2,\dots ,v_j$. The resultant graph is denoted as $G \wedge ^{1/2}xw$, named as a \emph{half-edge split graph}, and $N_{ei}(x')\cap N_{ei}(x'')=\{w\}$ in $H$. (see Fig.\ref{fig:split-operation-half-edge})

\begin{figure}[h]
\centering
\includegraphics[height=2.6cm]{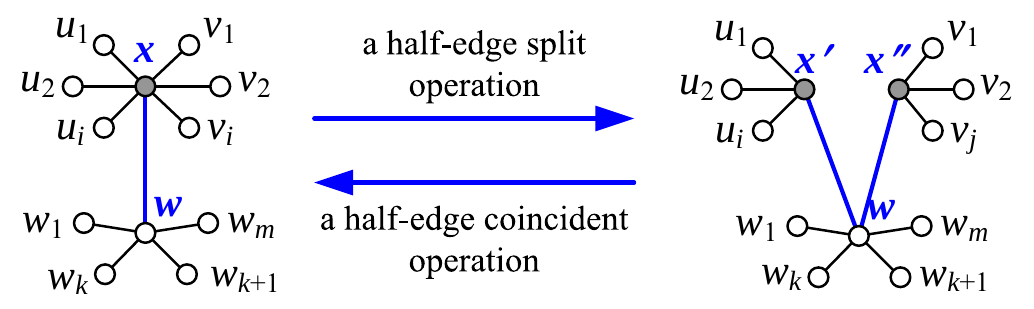}\\
\caption{\label{fig:split-operation-half-edge} {\small A scheme for a half-edge split
operation from left to right, and a half-edge coincident operation from right to left.}}
\end{figure}
\item A \emph{half-edge coincident operation} is defined as: Suppose that $N_{ei}(x')\cap N_{ei}(x'')=\{w\}$, we coincide $x'$ with $x''$ into one, denoted as $x=(x',x'')$, such that $N_{ei}(x)=N_{ei}(x')\cap N_{ei}(x'')$, that is, delete one multiple edge. The resultant graph is denoted as $G(x'w\odot x'w)$, called a \emph{half-edge coincident graph}. (see Fig.\ref{fig:split-operation-half-edge})
\item \cite{Yao-Sun-Zhang-Mu-Sun-Wang-Su-Zhang-Yang-Yang-2018arXiv} A \emph{vertex-split operation} is defined in the way: Split $x$ into two vertices $x',x''$ such that $N_{ei}(x')=\{w,u_1,u_2,\dots ,u_i\}$ and $N_{ei}(x'')=\{v_1,v_2,\dots ,v_j\}$ with $N_{ei}(x')$ $\cap $ $N_{ei}(x'')=\emptyset$; the resultant graph is written as $G\wedge x$, named as a \emph{vertex-split graph}. (see Fig.\ref{fig:split-operation-vertex-edge} from (b) to (a))
\item \cite{Yao-Sun-Zhang-Mu-Sun-Wang-Su-Zhang-Yang-Yang-2018arXiv} A \emph{vertex-coincident operation} is defined by coinciding two vertices $x'$ and $x''$ in to one $x=(x',x'')$ such that $N_{ei}(x)=N_{ei}(x')\cup N_{ei}(x'')$; the resultant graph is written as $G(x'\odot x'')$, called a \emph{vertex-coincident graph}. (see Fig.\ref{fig:split-operation-vertex-edge} from (b) to (a))

\begin{figure}[h]
\centering
\includegraphics[height=5.6cm]{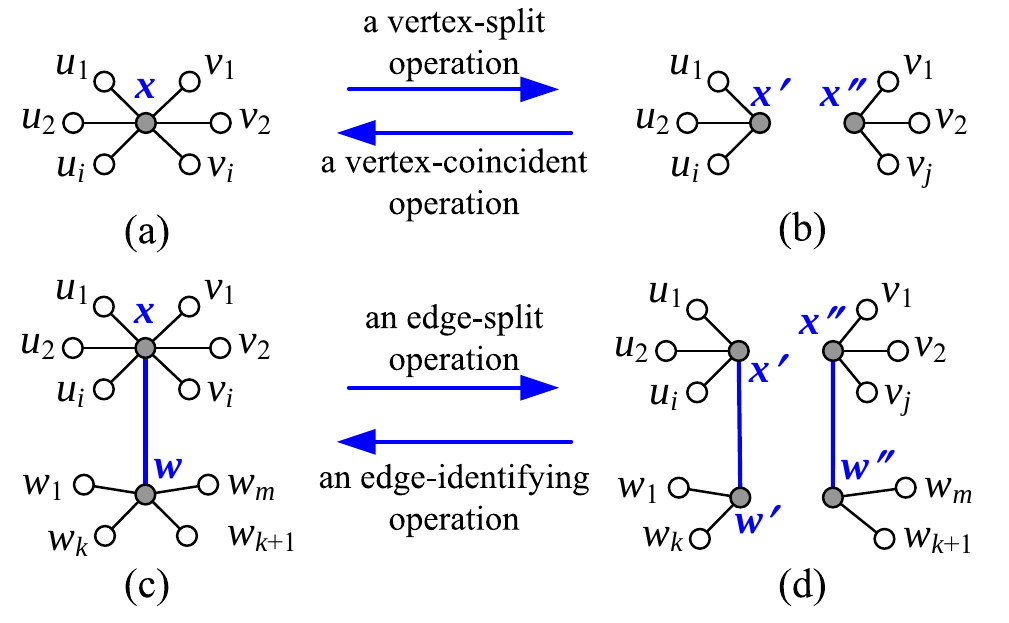}\\
\caption{\label{fig:split-operation-vertex-edge} {\small (a) A \emph{vertex-coincident graph} $G(x'\odot x'')$ obtained by a vertex-coincident operation from (b) to (a); (b) a \emph{vertex-split graph} $G\wedge x$ obtained by a vertex-split
operation from (a) to (b); (c) an \emph{edge-coincident graph} $G(x'w'\odot x''w'')$ obtained by an edge-coincident operation from (d) to (c); (d) an \emph{edge-split graph} $G\wedge xw$ obtained by an edge-split operation from (c) to (d).}}
\end{figure}

\item \cite{Yao-Sun-Zhang-Mu-Sun-Wang-Su-Zhang-Yang-Yang-2018arXiv} An \emph{edge-split operation} is defined as: Split the edge $xw$ into two edges $x'w'$ and $x''w''$ such that $N_{ei}(x')=\{w',u_1,u_2,\dots ,u_i\}$ and $N_{ei}(x'')=\{w''$, $v_1,v_2,\dots ,v_j\}$, $N_{ei}(w')=\{x',w_1,w_2,\dots ,w_k\}$ and $N_{ei}(w'')=$ $\{x'',w_{k+1},w_{k+2},\dots ,w_{m}\}$; the resultant graph is written as $G\wedge xw$, named as an \emph{edge-split graph}, with $N_{ei}(w')\cap N_{ei}(x')=\emptyset$, $N_{ei}(w')\cap N_{ei}(x'')=\emptyset$, and $N_{ei}(x')\cap N_{ei}(w')=\emptyset$, $N_{ei}(x')\cap N_{ei}(w'')=\emptyset$. (see Fig.\ref{fig:split-operation-vertex-edge} from (c) to (d))
\item \cite{Yao-Sun-Zhang-Mu-Sun-Wang-Su-Zhang-Yang-Yang-2018arXiv} An \emph{edge-coincident operation} is defined by coinciding two edges $x'w'$ and $x''w''$ into one edge $xw$ such that $N_{ei}(x)=N_{ei}(x')\cup N_{ei}(x'')\cup \{w=(w',w'')\}$ and $N_{ei}(w)=N_{ei}(w')\cup N_{ei}(w'')\cup \{x=(x',x'')\}$; the resultant graph is written as $G(x'w'\odot x''w'')$, called an \emph{edge-coincident graph}. (see Fig.\ref{fig:split-operation-vertex-edge} from (d) to (c))
\item \cite{Bondy-2008} In Fig.\ref{fig:split-operation-contract-subdivision}, an \emph{edge-contracting operation} is shown as: Delete the edge $xy$ first, and then coincide $x$ with $y$ into one vertex $w=(x,y)$ such that $N_{ei}(w)=[N_{ei}(x)\setminus \{y\}]\cup [N_{ei}(y)\setminus \{x\}]$. The resultant graph is denoted as $G\triangleleft xy$, named as an \emph{edge-contracted graph}.
 \item \cite{Bondy-2008} In Fig.\ref{fig:split-operation-contract-subdivision}, an \emph{edge-subdivided operation} is defined in the way: Split the vertex $w$ into two vertices $x,y$, and join $x$ with $y$ by a new edge $xy$, such that $N_{ei}(x)\cap N_{ei}(y)=\emptyset$, $y\in N_{ei}(x)$, $x\in N_{ei}(y)$ and $N_{ei}(w)=[N_{ei}(x)\setminus \{y\}]\cup [N_{ei}(y)\setminus \{x\}]$. The resultant graph is denoted as $G\triangleright w$, called an \emph{edge-subdivided graph}.\qqed

\begin{figure}[h]
\centering
\includegraphics[height=1.8cm]{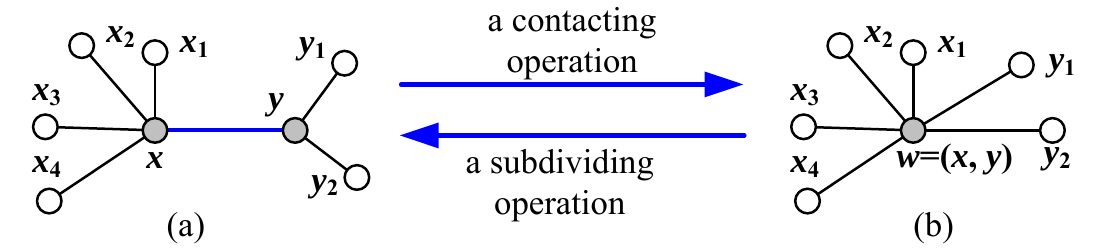}\\
\caption{\label{fig:split-operation-contract-subdivision} {\small (a) An \emph{edge-subdivided graph} $G\triangleright w$ obtained by subdividing a vertex $w=(x,y)$ into an edge $xy$ from (b) to (a); (b) an \emph{edge-contracted graph} $G\triangleleft xy$ obtained by contracting an edge $xy$ to a vertex $(x,y)$ from (a) to (b).}}
\end{figure}
\end{asparaenum}
\end{defn}

We, for understanding vertex-splitting Topcode-matrices, show examples shown in Fig.\ref{fig:v-split-matrix-1}. The Topcode-matrix $A$ shown in (\ref{eqa:example}) can be v-split into $A=A_{11}\uplus A_{12}$ (see Fig.\ref{fig:v-split-matrix-1} (a)), and $A=A_{21}\uplus A_{22}$ (see Fig.\ref{fig:v-split-matrix-1} (b)), where (a-1) and (b-1) are graphical illustration of doing vertex-splitting operations on the Topcode-matrix $A$.

\begin{figure}[h]
\centering
\includegraphics[width=8.6cm]{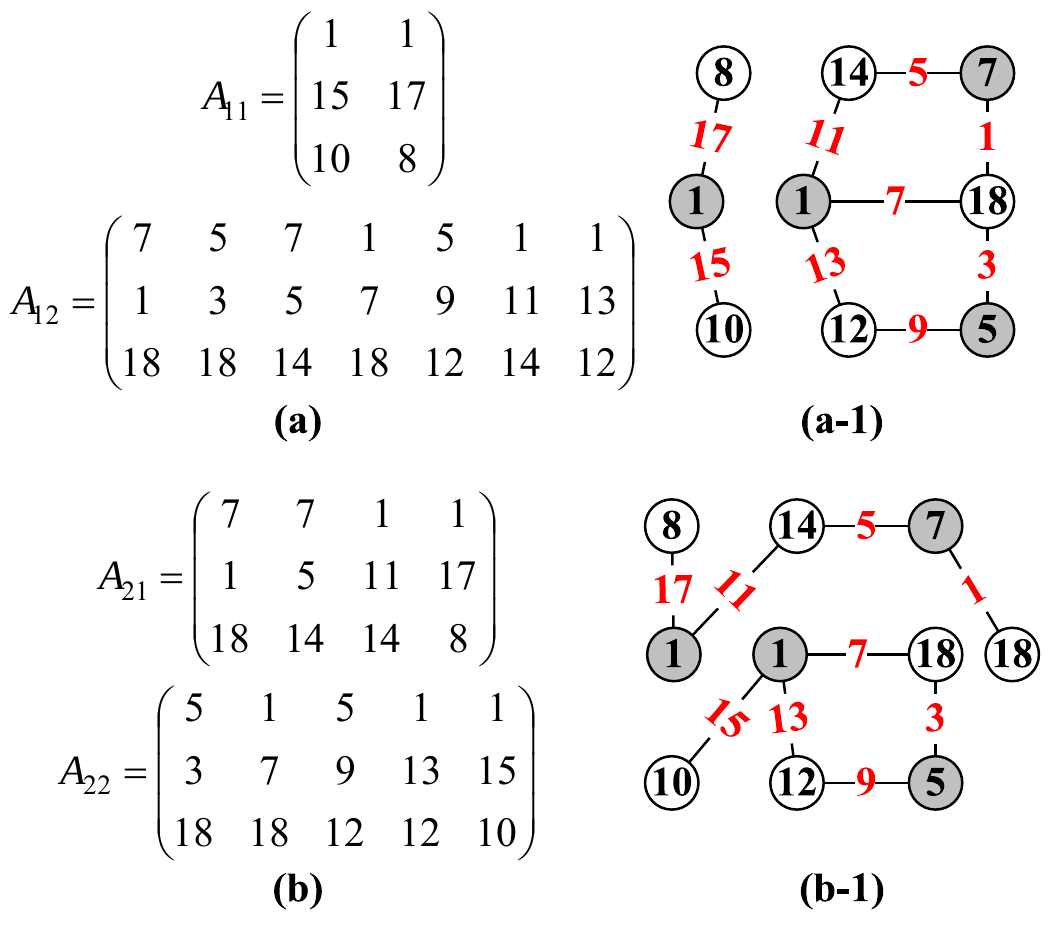}\\
\caption{\label{fig:v-split-matrix-1} {\small Two examples for understanding vertex-splitting Topcode-matrices.}}
\end{figure}

Again we consider do edge-splitting operations on Topcode-matrices. In Fig.\ref{fig:e-split-matrix-1}, we implement an half-edge-splitting operation to the $e=7$ of the Topcode-matrix $A$ shown in (\ref{eqa:example}), the resultant Topcode-matrix is denoted as $A\wedge^{1/2} \{7\}=A_{31}$ (see Fig.\ref{fig:e-split-matrix-1} (c)), and moreover we do an edge-splitting operation to the $e=7$ of the Topcode-matrix $A$ shown in (\ref{eqa:example}), the resultant T[opcode-matrix is denoted as $A\wedge \{7\}=A_{41}\uplus A_{42}$ (see Fig.\ref{fig:e-split-matrix-1} (d)). We use two Topsnut-gpws (c-1) and (d-1) to explain the edge-splitting operation to the Topcode-matrix $A$.

We are ready to present the following definitions of connectivity on Topcode-matrices:

\begin{defn}\label{defn:v-splitting-Topcode-matrices}
$^*$ Let $T_{code}$ be a Topcode-matrix defined in Definition \ref{defn:Topcode-matrix}. If there exists a number $w_i$ that is the common end of $e_{i_1},e_{i_2},\dots ,e_{i_{m}}$ of $E$ with $i_m\geq 2$ such that there are two sub-Topcode-matrices $T'_{code}$ and $T''_{code}$ of $T_{code}$ hold $e_{i_1},e_{i_2},\dots ,e_{i_k}$ are in $T'_{code}$ with $i_k\geq 1$, and $e_{i_k+1},e_{i_k+2},\dots ,e_{i_{m}}$ are in $T''_{code}$ with $i_m-(i_k+1)\geq 1$, and $T_{code}=T'_{code}\uplus T''_{code}$. We call the process of obtaining $T_{code}=T'_{code}\uplus T''_{code}$ as a \emph{v-splitting operation} on Topcode-matrices, and particularly write
\begin{equation}\label{eqa:v-splitting-Topcode-matrices-0}
T_{code}\wedge \{w_i\}=T'_{code}\uplus T''_{code}.
\end{equation}
\end{defn}

Let $\{w_1,w_2,\dots ,w_n\}=\{w_i\}^n_1\subset (XY)^*$ of $T_{code}$ defined in Definition \ref{defn:Topcode-matrix}, and $w_i$ be the common end of $e_{i_1},e_{i_2},\dots ,e_{i_{m}}$ in $E$ with $i_m\geq 2$ for $i\in [1,n]$. Based on Definition \ref{defn:v-splitting-Topcode-matrices}, we do a series of v-splitting operations to each of $\{w_i\}^n_1$, the resultant Topcode-matrix is denoted as:
\begin{equation}\label{eqa:v-splitting-Topcode-matrices-1}
T_{code}\wedge \{w_i\}^n_1=\uplus^m_{j=1} T^j_{code}.
\end{equation}
If $T_{code}$ and each $T^j_{code}$ with $j\in [1,m]$ are connected, we call $T_{code}\wedge \{w_i\}^n_1$ to be \emph{$n$-connected}, and moreover, the smallest number $k$ of $n$ for which $T_{code}\wedge \{w_i\}^n_1$ is $n$-connected is denoted as $\kappa(T_{code})=k$, and we say that $T_{code}$ is the \emph{v-$k$-code connectivity}.

\begin{defn}\label{defn:half-e-splitting-Topcode-matrices}
$^*$ Let $T_{code}$ be a Topcode-matrix defined in Definition \ref{defn:Topcode-matrix}, and let $e_i\in E^*$. Doing a v-splitting operation to one end $x_i$ of $e_i$ and adding $(x_i~e'_i~y_i)^{T}$ produces
\begin{equation}\label{eqa:v-splitting-Topcode-matrices-3}
T_{code}\wedge ^{1/2}\{e_i\}=T_{code}\uplus (x_i~e'_i~y_i)^{T}.
\end{equation}
We call this process a \emph{half-e-splitting operation}, it splits $e_i$ into $e_i$ and $e'_i$ such that both $e_i$ and $e'_i$ have a common end $y_i$.\qqed
\end{defn}

See an example of the half-e-splitting operation shown in Fig.\ref{fig:e-split-matrix-1} (c) and (c-1).

\begin{defn}\label{defn:e-splitting-Topcode-matrices}
$^*$ Let $T_{code}$ be a Topcode-matrix defined in Definition \ref{defn:Topcode-matrix}, and let $e_i\in E^*$. Doing two v-splitting operations to the ends $x_i,y_i$ of $e_i$ produces two sub-Topcode-matrices $T^1_{code}$ and $T^2_{code}$ of $T_{code}$ such that $T_{code}=T^1_{code}\uplus T^2_{code}$, and $e_i$ is in $T^1_{code}$ and not in $T^2_{code}$, but $x_i,y_i$ are in both $T^1_{code}$ and $T^2_{code}$. We add $(x_i~e'_i~y_i)^{T}$ to $T^2_{code}$ to form a new Topcode-matrix $T^*_{code}=T^2_{code}\uplus (x_i~e'_i~y_i)^{T}$. This process is called an \emph{e-splitting operation}, it splits $e_i$ into $e_i$ and $e'_i$ such that
\begin{equation}\label{eqa:v-splitting-Topcode-matrices-2}
T_{code}\wedge \{e_i\}=T^1_{code}\uplus \big [T^2_{code}\uplus (x_i~e'_i~y_i)^{T}\big ].
\end{equation}
\end{defn}

An example for understanding e-splitting operation is shown in Fig.\ref{fig:e-split-matrix-1} (d) and (d-1). Let $\{e_{i_1},e_{i_2},\dots ,e_{i_n}\}=\{e_{i_j}\}^n_1\subset E^*$ of $T_{code}$, we do a series of e-splitting operations to each of $\{e_{i_j}\}^n_1$, the resultant Topcode-matrix is denoted as:
\begin{equation}\label{eqa:v-splitting-Topcode-matrices-1}
T_{code}\wedge \{e_{i_j}\}^n_1=\uplus^m_{s=1} T^s_{code}.
\end{equation}
If $T_{code}$ and each $T^s_{code}$ with $s\in [1,r]$ are connected, we call $T_{code}\wedge\{e_{i_j}\}^n_1$ to be \emph{$m$-e-connected}, and moreover, the smallest number $k$ of $m$ for which $T_{code}\wedge \{e_{i_j}\}^n_1$ is $k$-e-connected is denoted as $\kappa'(T_{code})=k$, and we say that $T_{code}$ is the \emph{e-$k$-code connectivity}.

\begin{figure}[h]
\centering
\includegraphics[width=8.6cm]{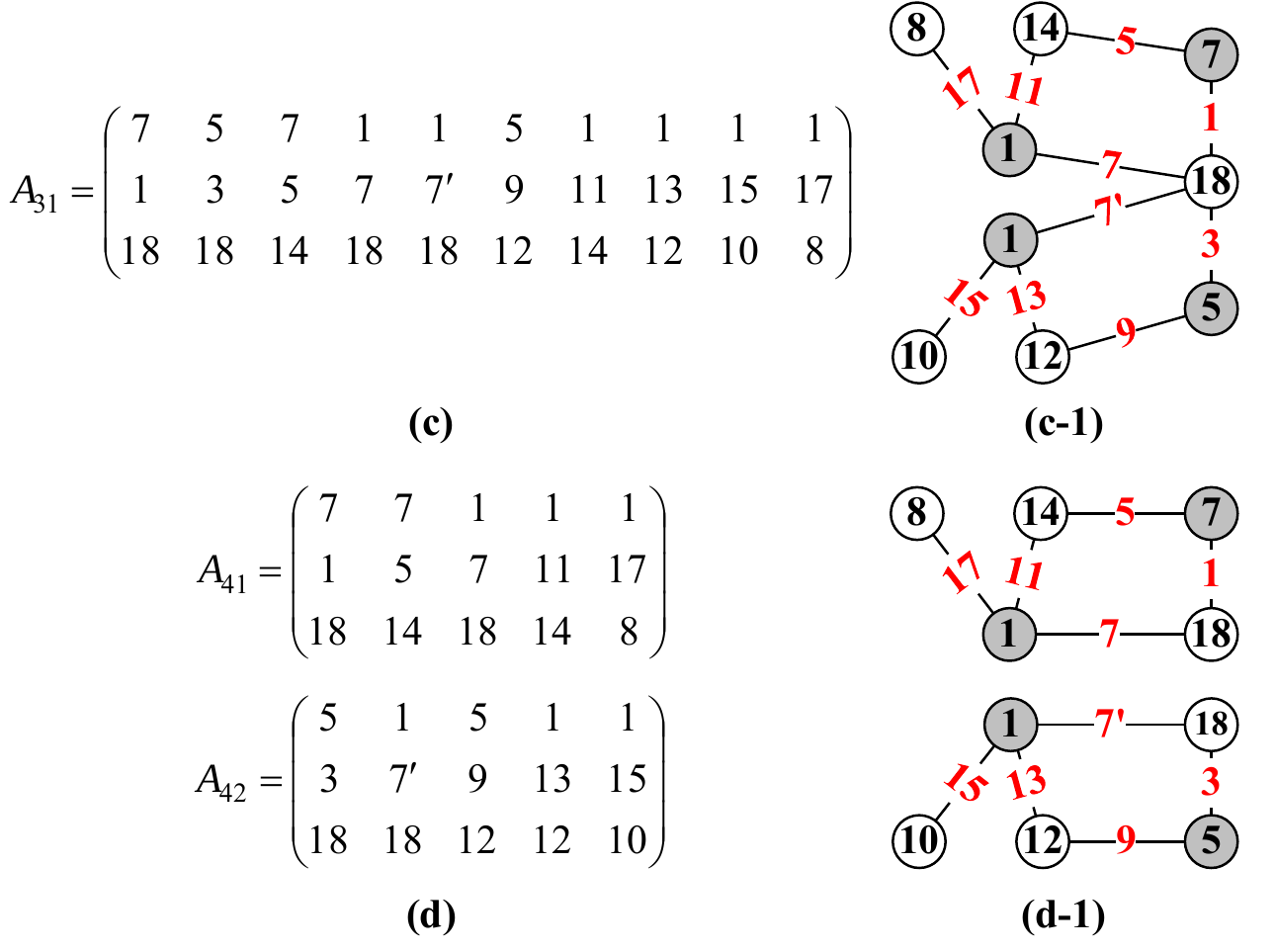}\\
\caption{\label{fig:e-split-matrix-1} {\small Two Topcode-matrices for understanding edge-splitting Topcode-matrices.}}
\end{figure}

\subsection{Topcode$^+$-matrix groups based on the additive v-operation}
We define another operation between the Topcode-matrices $T^1_{code},T^2_{code},\dots, T^m_{code}$ in this subsection. For a fixed positive integer $k$, if there exists a constant $M$, such that
\begin{equation}\label{eqa:group-operation-1}
(x_{i,r}+x_{j,r}-x_{k,r})~(\bmod~ M)=x_{\lambda,r}\in X_{\lambda}
\end{equation}
\begin{equation}\label{eqa:group-operation-2}
(y_{i,r}+y_{j,r}-y_{k,r})~(\bmod~ M)=y_{\lambda,r}\in Y_{\lambda}
\end{equation}
where $\lambda=i+j-k~(\bmod~ M)\in [1,m]$ and, $T^{\lambda}_{code}=(X_{\lambda}$ $E_{\lambda}$ $Y_{\lambda})^{T}$. Let $F_m=\{T^1_{code},T^2_{code},\dots, T^m_{code}\}$. Then we say (\ref{eqa:group-operation-1})+(\ref{eqa:group-operation-2}) to be an \emph{additive v-operation} on $F_m$, and we write this operation by ``$\oplus$'', and we have a matrix equation
\begin{equation}\label{eqa:group-operation-3}
T^i_{code}\oplus_k T^j_{code}=T^\lambda_{code}
\end{equation} based on the zero $T^k_{code}$ and $\lambda=i+j-k~(\bmod~M)$. By the additive v-operation defined in (\ref{eqa:group-operation-1}) and (\ref{eqa:group-operation-2}), if we have
\begin{asparaenum}[(i) ]
\item \emph{Every-zero}. Each element $T^k_{code}$ can be regarded as the \emph{zero} such that each $T^i_{code}\in F_m$
$$T^i_{code}\oplus_k T^k_{code}=T^i_{code};$$
\item \emph{Closure and uniqueness}. If $T^i_{code}\oplus_k T^j_{code}=T^\lambda_{code}$, $T^i_{code}\oplus_k T^j_{code}=T^\mu_{code}$, then $\lambda=\mu$. And
 $$T^i_{code}\oplus_k T^j_{code}\in F_m;$$
\item \emph{Inverse}. Each $T^i_{code}$ has its own \emph{inverse} $T^{i^{-1}}_{code}$ such that $$T^i_{code}\oplus_k T^{i^{-1}}_{code}=T^k_{code}.$$
\item \emph{Associative law}. $$T^i_{code}\oplus_k [T^j_{code}\oplus_k T^s_{code}]=[T^i_{code}\oplus_k T^j_{code}]\oplus_k T^s_{code}.$$
\end{asparaenum}
Then we call $F_m$ to be an \emph{every-zero additive associative Topcode-matrix group} (Topcode$^+$-matrix group for short), denoted as $\{F_m;\oplus _k \}$. In general, we write ``$\oplus$'' to replace ``$\oplus_k$'' if there is no confusion,

\vskip 0.2cm

An every-zero additive associative Topcode$^+$-matrix group $\{F_m;\oplus \}$ is shown in Fig.\ref{fig:topcode-matrix-group}, where the additive v-operation ``$\oplus $'' is defined in the equations (\ref{eqa:group-operation-1}) and (\ref{eqa:group-operation-2}). Moreover, if each Topcode-matrix in an every-zero additive associative Topcode$^+$-matrix group $\{F_m;\oplus \}$ is graphicable with a graph $G$, we call $\{F_m;\oplus \}$ an \emph{every-zero graph group} and rewrite $\{F_m;\oplus \}=\{F_m(G);\oplus \}$, see an example shown in Fig.\ref{fig:graph-group}.

\begin{figure}[h]
\centering
\includegraphics[width=7.2cm]{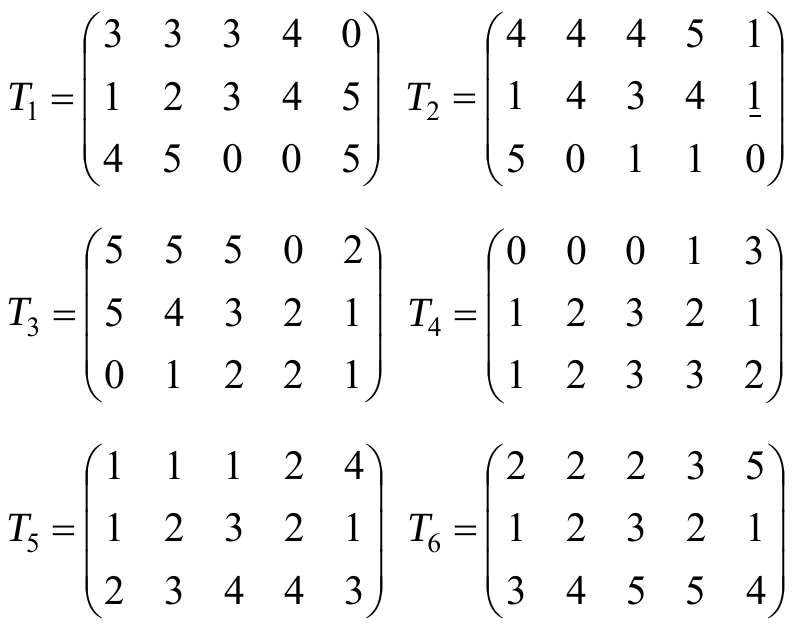}\\
\caption{\label{fig:topcode-matrix-group} {\small An every-zero additive associative Topcode$^+$-matrix group $\{F_6;\oplus \}$ with $F_6=\{T_1,T_2,T_3,T_4,T_5,T_6\}$ under modular $6$.}}
\end{figure}

\begin{figure}[h]
\centering
\includegraphics[width=8.6cm]{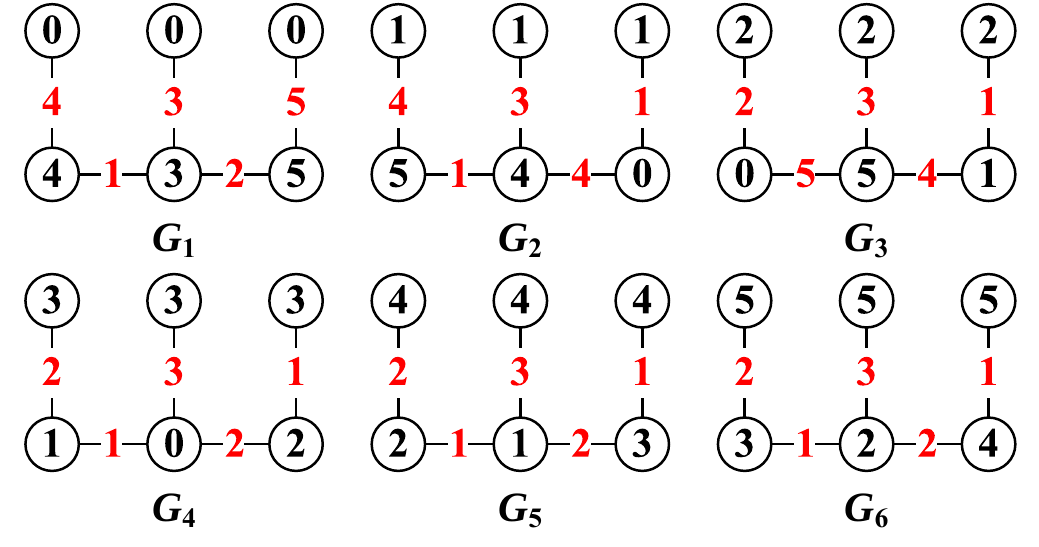}\\
\caption{\label{fig:graph-group} {\small An every-zero graph group $\{F_6(G);\oplus \}$ with $F_6(G)=\{G_i:i\in [1,6]\}$ under modular $6$, where each Topsnut-gpw $G_i$ corresponds a Topcode-matrix $T_i$ shown in Fig.\ref{fig:topcode-matrix-group}, $i\in [1,6]$.}}
\end{figure}

\subsection{Number string groups}

According to the Topcode-matrices $T_1,T_2,T_3,T_4,T_5,T_6$ shown in Fig.\ref{fig:topcode-matrix-group} and the rule Vo-1 shown in Fig.\ref{fig:4-rules}, we can write six TB-paws as follows
{\footnotesize
\begin{equation}\label{eqa:string-group-1}
\begin{array}{ccccc}
T_b(T_1)=333405432145005&T_b(T_2)=444511434150110\\
T_b(T_3)=555021234501221&T_b(T_4)=000131232112332\\
T_b(T_5)=111241232123443&T_b(T_6)=222351232134554
\end{array}
\end{equation}
}by the formula below
\begin{equation}\label{eqa:string-group-2}
{
\begin{split}
T_b(T_i)=&x_{i,1}x_{i,2}x_{i,3}x_{i,4}x_{i,5}e_{i,5}e_{i,4}e_{i,3}e_{i,2}e_{i,1}\\
&y_{i,1}y_{i,2}y_{i,3}y_{i,4}y_{i,5}
\end{split}}
\end{equation}
where $e_i=|x_i-y_i|$ with $i\in [1,6]$. The set $G_6=\{T_b(T_1)$, $T_b(T_2)$, $T_b(T_3)$, $T_b(T_4)$, $T_b(T_5)$, $T_b(T_6)\}$ forms an \emph{every-zero number string group} $\{G_6;\oplus \}$ based on the additive v-operation $\oplus$ defined in the equations (\ref{eqa:group-operation-1}) and (\ref{eqa:group-operation-2}). For example, we select randomly $T_b(T_3)$ as the zero, and write $[w_{a,j}+w_{b,j}-w_{3,j}]~(\bmod~6)=w_{\lambda,j}$ with $\lambda=a+b-3~(\bmod~6)$ in the following verification:

(1) $T_b(T_1)\oplus T_b(T_2)=T_b(T_6)$: $x_{1,j}+x_{2,j}-x_{3,j}=x_{6,j}~(\bmod~6)$ and $y_{1,j}+y_{2,j}-y_{3,j}=y_{6,j}~(\bmod~6)$ with $j\in [1,5]$.

(2) $T_b(T_1)\oplus T_b(T_3)=T_b(T_1)$.

(3) $T_b(T_1)\oplus T_b(T_4)=T_b(T_2)$: $x_{1,j}+x_{4,j}-x_{3,j}=x_{2,j}~(\bmod~6)$ and $y_{1,j}+y_{4,j}-y_{3,j}=y_{2,j}~(\bmod~6)$ with $j\in [1,5]$.

(4) $T_b(T_1)\oplus T_b(T_5)=T_b(T_3)$: $x_{1,j}+x_{5,j}-x_{3,j}=x_{3,j}~(\bmod~6)$ and $y_{1,j}+y_{5,j}-y_{3,j}=y_{3,j}~(\bmod~6)$ with $j\in [1,5]$.

(5) $T_b(T_1)\oplus T_b(T_6)=T_b(T_4)$: $x_{1,j}+x_{6,j}-x_{3,j}=x_{4,j}~(\bmod~6)$ and $y_{1,j}+y_{6,j}-y_{3,j}=y_{4,j}~(\bmod~6)$ with $j\in [1,5]$.

(6) $T_b(T_2)\oplus T_b(T_3)=T_b(T_2)$.

(7) $T_b(T_2)\oplus T_b(T_4)=T_b(T_3)$: $x_{2,j}+x_{4,j}-x_{3,j}=x_{3,j}~(\bmod~6)$ and $y_{2,j}+y_{4,j}-y_{3,j}=y_{3,j}~(\bmod~6)$ with $j\in [1,5]$.

(8) $T_b(T_2)\oplus T_b(T_5)=T_b(T_4)$: $x_{2,j}+x_{5,j}-x_{3,j}=x_{4,j}~(\bmod~6)$ and $y_{2,j}+y_{5,j}-y_{3,j}=y_{4,j}~(\bmod~6)$ with $j\in [1,5]$.

(9) $T_b(T_2)\oplus T_b(T_6)=T_b(T_5)$: $x_{2,j}+x_{5,j}-x_{3,j}=x_{5,j}~(\bmod~6)$ and $y_{2,j}+y_{5,j}-y_{3,j}=y_{5,j}~(\bmod~6)$ with $j\in [1,5]$.

(10) $T_b(T_3)\oplus T_b(T_r)=T_b(T_r)$ with $r\in [1,6]$.

(11) $T_b(T_4)\oplus T_b(T_5)=T_b(T_6)$: $x_{4,j}+x_{5,j}-x_{3,j}=x_{6,j}~(\bmod~6)$ and $y_{4,j}+y_{5,j}-y_{3,j}=y_{6,j}~(\bmod~6)$ with $j\in [1,5]$.

(12) $T_b(T_4)\oplus T_b(T_6)=T_b(T_1)$: $x_{4,j}+x_{6,j}-x_{3,j}=x_{1,j}~(\bmod~6)$ and $y_{4,j}+y_{6,j}-y_{3,j}=y_{1,j}~(\bmod~6)$ with $j\in [1,5]$.

(13) $T_b(T_5)\oplus T_b(T_6)=T_b(T_2)$: $x_{5,j}+x_{6,j}-x_{3,j}=x_{2,j}~(\bmod~6)$ and $y_{5,j}+y_{6,j}-y_{3,j}=y_{2,j}~(\bmod~6)$ with $j\in [1,5]$.

Thereby, we claim that the set $G_6$ is an every-zero number string group. It is not hard to obtain the algorithms for writing string groups from the rules shown in Fig.\ref{fig:4-rules} on matrix groups. But, there is no general way for random actions of writing string groups on matrix groups.

\vskip 0.4cm

\subsubsection{A technique for making number string groups}\quad We starting this technique from an initial number string $S_1=x^1_1x^1_2\cdots x^1_n$, where each $x^1_i$ is a non-negative integer with $i\in [1,n]$. We select randomly $x^1_{k_1},x^1_{k_2}, \dots, x^1_{k_m}$ from $S_1$, call them active numbers, and then make number string $S_i=x^i_1x^i_2\cdots x^i_n$, such that
$$
x^i_j=\left\{
\begin{array}{ll}
k+x^1_j~(\bmod~M),& \textrm{if}~ j\in \{k_1,k_2,\dots ,k_m\};\\
x^1_j,& \textrm{otherwise}.
\end{array}
\right.
$$ with $2\leq i$. So, we get a set $I_M=\{S_1, S_2,\dots ,S_M\}$. It is not hard to prove that $I_M$ forms an every-zero number string group by selecting randomly a zero $S_k$ and do the additive v-operation defined in the equations (\ref{eqa:group-operation-1}) and (\ref{eqa:group-operation-2}) to $I_M$, that is, $S_i \oplus_k S_j=S_\lambda$ defined by the following operation
\begin{equation}\label{eqa:number-string-group}
[x^i_{\alpha}+x^j_{\alpha}-x^k_{\alpha}]~(\bmod~M)=x^{\lambda}_{\alpha},
\end{equation}
where $\lambda=i+j-k~(\bmod~M)$ if $\alpha\in \{k_1,k_2,\dots ,k_m\}$, otherwise $x^{\lambda}_{\alpha}=x^1_{\alpha}$ for $\alpha\in [1,n]\setminus \{k_1,k_2,\dots ,k_m\}$. Hence, we get an \emph{every-zero number string group} $\{I_M;\oplus\}$.

The above technique of constructing every-zero number string groups contains probability. On the other hands, number string groups are easily used to encrypting different communities of a dynamic network at distinct time step, since we can make the groups mentioned here to some graphs, so it is called \emph{graph networking groups}.

\vskip 0.4cm

\subsubsection{Exchanging operations on number strings}\quad Let $T_b(T_{code})$ be a number string generated from a Topcode-matrix $T_{code}$ defined in Definition \ref{defn:Topcode-matrix}. So, we have another Topcode-matrix $T^*_{code}=C_{(c,l)(a,b)}(T_{code})$ by doing a series of column-exchanging operations $c_{(i_k,j_k)}$ with $k\in [1,a]$ and a series of XY-exchanging operations $l_{(i_s)}$ with $s\in [1,b]$ to the Topcode-matrix $T_{code}$. Thereby, the number string $T_b(T_{code})$ is changed by the operation $C_{(c,l)(a,b)}$, the resultant number string is denoted as $C_{(c,l)(a,b)}(T_b(T_{code}))$. Thereby, we can do the exchanging operations to an every-zero number string group $I^*_M=\{S^*_1, S^*_2,\dots, S^*_M\}$ with $S^*_i=C_{(c,l)(a,b)}(S_i)$ for $i\in [1,M]$.

\vskip 0.4cm

\subsection{Topcode$^-$-matrix groups defined by the subtractive v-operation}

Let $F_m=\{T^1_{code},T^2_{code},\dots, T^m_{code}\}$ be a set of Topcode-matrices with $T^i_{code}=(X_i~E_i~Y_i)^{T}$, $X_i=(x_{i,1}$ $x_{i,2}$ $\cdots$ $x_{i,q})$, $E_i=(e_{i,1}~e_{i,2}~\cdots~e_{i,q})$ and $Y_i=(y_{i,1}$ $y_{i,2}$ $\cdots$ $y_{i,q})$, $i\in [1,m]$. For a fixed Topcode-matrix $T^k_{code}\in F_m$, if there exists a constant $M$, such that
\begin{equation}\label{eqa:subtraction-group-operation-1}
(x_{i,r}-x_{j,r}+x_{k,r})~(\bmod~ M)=x_{\lambda,r}\in X_{\lambda}
\end{equation}
\begin{equation}\label{eqa:subtraction-group-operation-2}
(y_{i,r}-y_{j,r}+y_{k,r})~(\bmod~ M)=y_{\lambda,r}\in Y_{\lambda}
\end{equation}
where $\lambda=i-j+k~(\bmod~ M)\in [1,m]$ and, $T^{\lambda}_{code}=(X_{\lambda}~E_{\lambda}~Y_{\lambda})^{T}\in F_m$. Then two equations (\ref{eqa:subtraction-group-operation-1}) and (\ref{eqa:subtraction-group-operation-2}) defines a new operation, called the \emph{subtractive v-operation}, denoted as
$$T^i_{code}\ominus_k T^j_{code}=T^{\lambda}_{code}.$$ We can show:

(i) Each Topcode-matrix $T^k_{code}\in F_m$ is a zero of the subtractive v-operation, that is, $T^i_{code}\ominus_k T^k_{code}=T^{i}_{code}$.

(ii) $T^i_{code}\ominus_k T^j_{code}\in F_m$, and if $T^i_{code}\ominus_k T^j_{code}=T^{\lambda}_{code}$ and $T^i_{code}\ominus_k T^j_{code}=T^{\mu}_{code}$, then $\lambda =\mu$.

(iii) There exists $T^{i^{-1}}_{code}\in F_m$, such that $T^i_{code}\ominus_k T^{i^{-1}}_{code}=T^k_{code}$.

(vi) $[T^i_{code}\ominus_k T^j_{code}]\ominus_k T^{k}_{code}=T^i_{code}\ominus_k [T^j_{code}\ominus_k \ T^{k}_{code}]$.\\

Thereby, we get an \emph{every-zero subtractive Topcode-matrix group} (Topcode$^-$-matrix group) $\{F_m;\ominus\}$ based on the subtractive v-operation defined in the equations (\ref{eqa:subtraction-group-operation-1}) and (\ref{eqa:subtraction-group-operation-2}).

\vskip 0.4cm

Similarly, there are \emph{number string groups} or \emph{graphic groups} made by the subtractive v-operation. Other groups can be obtained by one of two equations (\ref{eqa:subtraction-group-operation-1}) and (\ref{eqa:subtraction-group-operation-2}), or one of two equations (\ref{eqa:group-operation-1}) and (\ref{eqa:group-operation-2}), such that a general matrix group $U_m=\{M(a^r_{i,j})_{m\times n}: ~r\in [1,m]\}$ holds true, where $M(a^r_{i,j})_{m\times n}$ is defined in (\ref{eqa:general-matrix}).

\section{Techniques for producing text-base codes}

In this section we introduce basic techniques for producing TB-paws from Topcode-matrices and Topsnut-matrices of Topsnut-gpws. As mentioned in the previous sections, the TB-paws made from those Topsnut-matrices can not rebuild up the original Topsnut-gpws, since Topsnut-gpws consist of topological structures and Topcode-matrices, but TB-paws are not related with topological structures.

\subsection{Basic rules for finding continuous fold lines}

Some basic rules shown in Fig.\ref{fig:4-rules} can be formed as algorithms. In general, we have a matrix $M(a_{i,j})_{m\times n}$ defined as:
\begin{equation}\label{eqa:general-matrix}
{
\begin{split}
M(a_{i,j})_{m\times n}=\left(\begin{array}{ccccc}
a_{1,1} & a_{1,2} & \cdots & a_{1,n}\\
a_{2,1} & a_{2,2} & \cdots & a_{2,n}\\
\cdots & \cdots & \cdots & \cdots\\
a_{m,1} & a_{m,2} & \cdots & a_{m,n}
\end{array}
\right)
\end{split}}
\end{equation}

\begin{figure}[h]
\centering
\includegraphics[width=8.2cm]{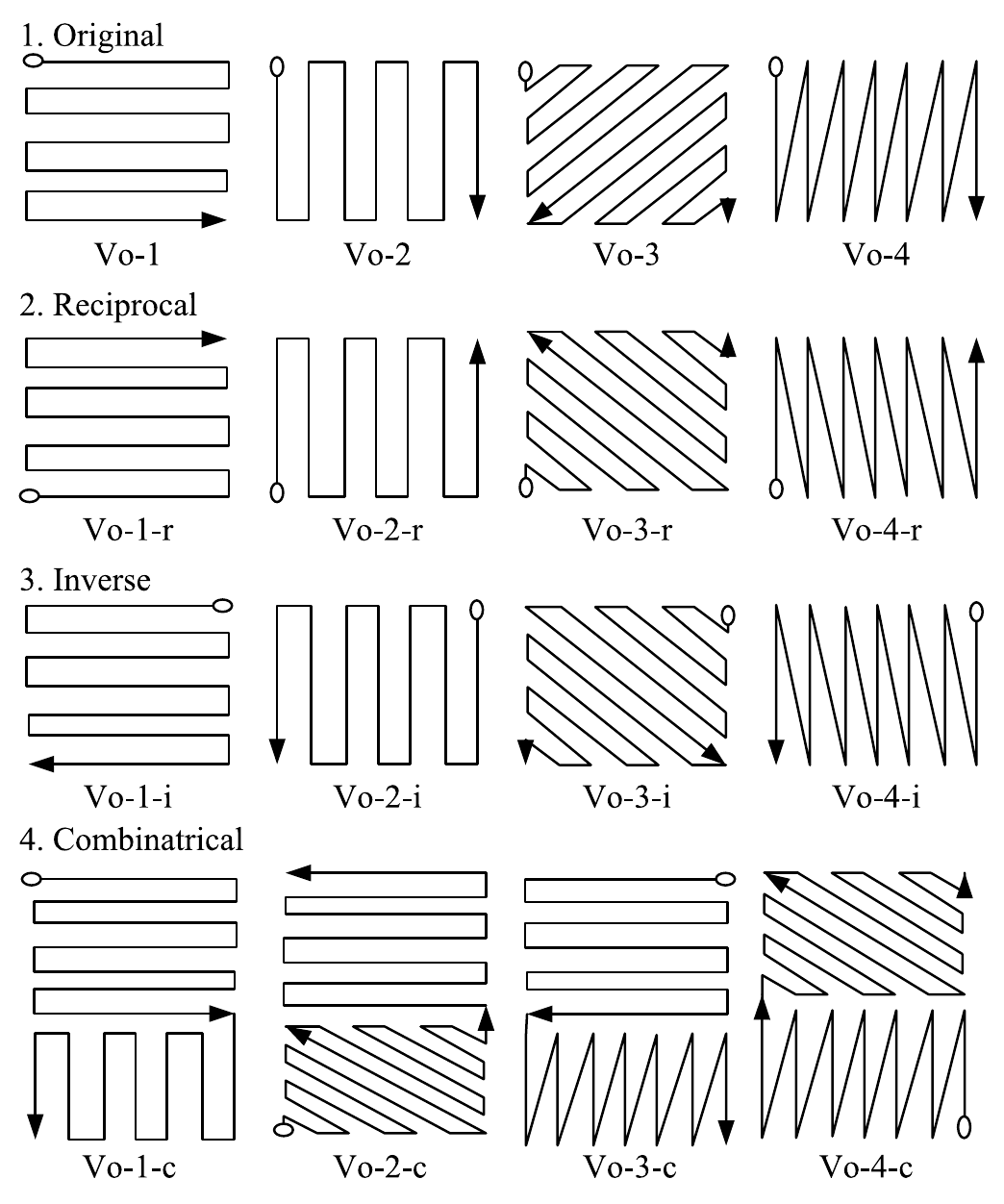}\\
\caption{\label{fig:4-rules} {\small Basic rules (also, adjacent TB-paw lines) for producing TB-paws from Topsnut-matrices, Topcode-matrices and Hanzi-GB2312-80 matrices.}}
\end{figure}
Based on the matrix (\ref{eqa:general-matrix}), the rule Vo-1 pictured in Fig.\ref{fig:4-rules} enables us to write out a TB-paw
\begin{equation}\label{eqa:Vo-1-1}
{
\begin{split}
&\quad T^e_b(M(a_{i,j})_{m\times n})=a_{1,1}a_{1,2} \cdots a_{1,n}\\
&a_{2,n}a_{2,n-1}\cdots a_{2,2}a_{2,1}a_{3,1}a_{3,2} \cdots a_{3,n}\\
&a_{4,n}a_{4,n-1}\cdots a_{4,2}a_{4,1}a_{5,1}a_{5,2}\cdots a_{5,n-1}a_{5,n}\\
& \cdots \cdots \cdots\\
& a_{m-1,1}a_{m-1,2} \cdots a_{m-1,n}a_{m,n}a_{m,n-1}\cdots a_{m,2}a_{m,1}
\end{split}}
\end{equation}
as $m$ is even, and furthermore we have another TB-paw
\begin{equation}\label{eqa:Vo-1-2}
{
\begin{split}
&\quad T^o_b(M(a_{i,j})_{m\times n})=a_{1,1}a_{1,2} \cdots a_{1,n}\\
&a_{2,n}a_{2,n-1}\cdots a_{2,2}a_{2,1}a_{3,1}a_{3,2} \cdots a_{3,n}\\
&a_{4,n}a_{4,n-1}\cdots a_{4,2}a_{4,1}a_{5,1}a_{5,2}\cdots a_{5,n-1}a_{5,n}\\
& \cdots \cdots \cdots\\
& a_{m,1}a_{m,2} \cdots a_{m,n}
\end{split}}
\end{equation}
as $m$ is odd. We can write a TB-paw line
\begin{equation}\label{eqa:TB-paw line}
L=(x_{1},y_{1})(x_{2},y_{2})\cdots (x_{mn},y_{mn}),
\end{equation} where $(x_{1},y_{1})$ is the \emph{initial point}, $(x_{mn},y_{mn})$ the \emph{terminal point} in $xOy$-plane, and $(x_{i},y_{i})\neq (x_{j},y_{j})$ if $i\neq j$. We have some particular TB-paw lines as:

(1) An \emph{adjacent TB-paw line} $L$ shown in (\ref{eqa:TB-paw line}) has: Two consecutive points $(x_{i},y_{i})(x_{i+1},y_{i+1})$ hold one of $|y_{i}-y_{i+1}|=1$ and $|x_{i}-x_{i+1}|=1$ (see Fig.\ref{fig:adjacent-TB-paw-line}(a)).

(2) A \emph{closed adjacent TB-paw line} is an adjacent TB-paw line with both the initial point and the terminal point are a point (see Fig.\ref{fig:adjacent-TB-paw-line}(b)).

(3) A \emph{closed TB-paw line} such that each point of the line is both the initial point and the terminal point (see Fig.\ref{fig:adjacent-TB-paw-line}(c)).

\begin{figure}[h]
\centering
\includegraphics[width=8.6cm]{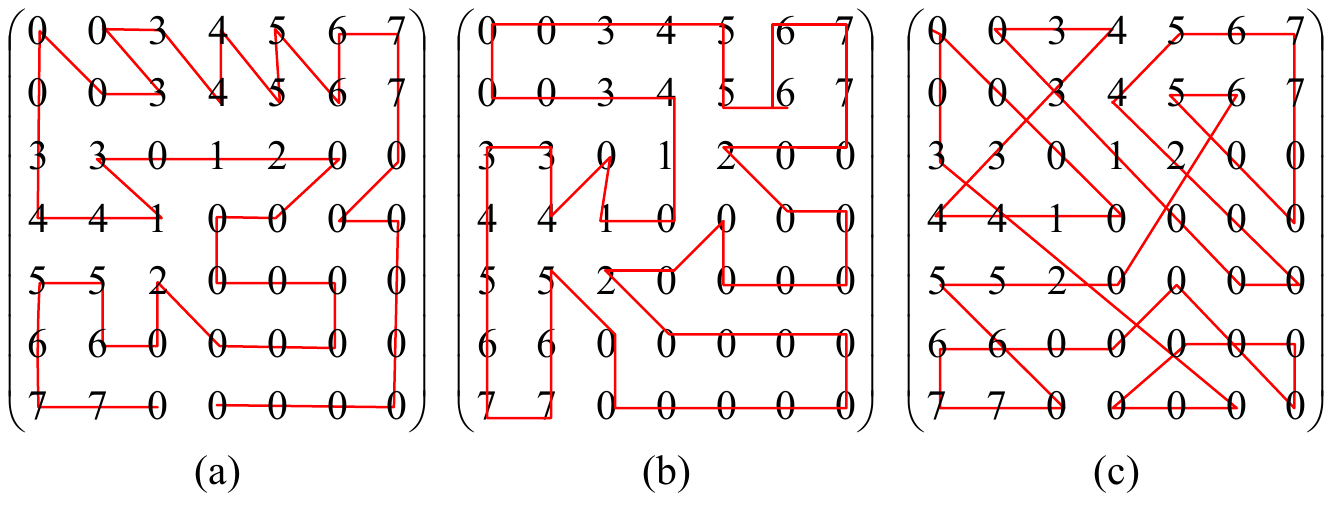}\\
\caption{\label{fig:adjacent-TB-paw-line} {\small (a) An adjacent TB-paw line; (b) a closed adjacent TB-paw line; (c) a closed TB-paw line, not adjacent.}}
\end{figure}

\begin{rem}\label{eqa:total-TB-paw-line}
(1) The TB-paw lines shown in Fig.\ref{fig:4-rules} can be written into algorithms of polynomial times.

(2) TB-paws made from the matrix (\ref{eqa:general-matrix}) corresponds a question, called ``\emph{Finding all total TB-paw lines in a lattice $P_m\times P_n$ for deriving TB-paws from matrices}'' stated exactly as: Let $P_m\times P_n$ be a lattice in $xOy$-plane. There are points $(i,j)$ on the lattice $P_m\times P_n$ with $i\in[1,m]$ and $j\in [1,n]$. If a fold-line $L$ with initial point $(a,b)$ and terminal point $(c,d)$ on $P_m\times P_n$ is internally disjoint and contains all points $(i,j)$ of $P_m\times P_n$, and each point $(i,j)$ appears in $L$ once only, we call $L$ a \emph{total TB-paw line}. We hope to find all possible total TB-paw lines of $P_m\times P_n$. Also, we can consider that $L$ consists of $L_1,L_2,\dots ,L_m$ with $m\geq 2$.

(3) There are random TB-paw lines based on probabilistic methods for deriving probabilistic TB-paws.
\end{rem}

\subsection{Topsnut- and Topcode-matrices related with graphs}

\vskip 0.4cm

\subsubsection{Topcode-matrices from public keys and private keys}\quad Since a Topsnut-gpw $G$ can be split into two parts: one is a \emph{public key} $G_{pub}$ and another one is a \emph{private key} $G_{pri}$, such that $G=G_1\cup G_2$ is an \emph{authentication}, see an example shown in Fig.\ref{fig:2-public-private-keys} for understanding this concept. Thereby, we have a Topsnut-matrix
$$A_{vev}(G)=A_{vev}(G_{pub})\uplus A_{vev}(G_{pri})$$
made by two Topsnut-matrices $A_{vev}(G_{pub})$ and $A_{vev}(G_{pri})$. On the other hands, these Topsnut-matrices are just three e-valued Topcode-matrices $T_{code}(G)$, $T_{code}(G_{pub})$ and $T_{code}(G_{pri})$. In general, we, according to the matrix equation (\ref{eqa:matrix-operation}), can write a Topcode-matrix $T_{code}$ defined in Definition \ref{defn:Topcode-matrix} in the following form
\begin{equation}\label{eqa:matrix-operation-xx}
{
\begin{split}
T_{code}=\uplus ^m_{i=1}T^i_{code}=T^1_{code}\uplus T^2_{code}\uplus \cdots \uplus T^m_{code}
\end{split}}
\end{equation}
where each $T^i_{code}$ can be regarded as a \emph{public key} or a \emph{private key} to an authentication $T_{code}$.

\begin{figure}[h]
\centering
\includegraphics[width=8cm]{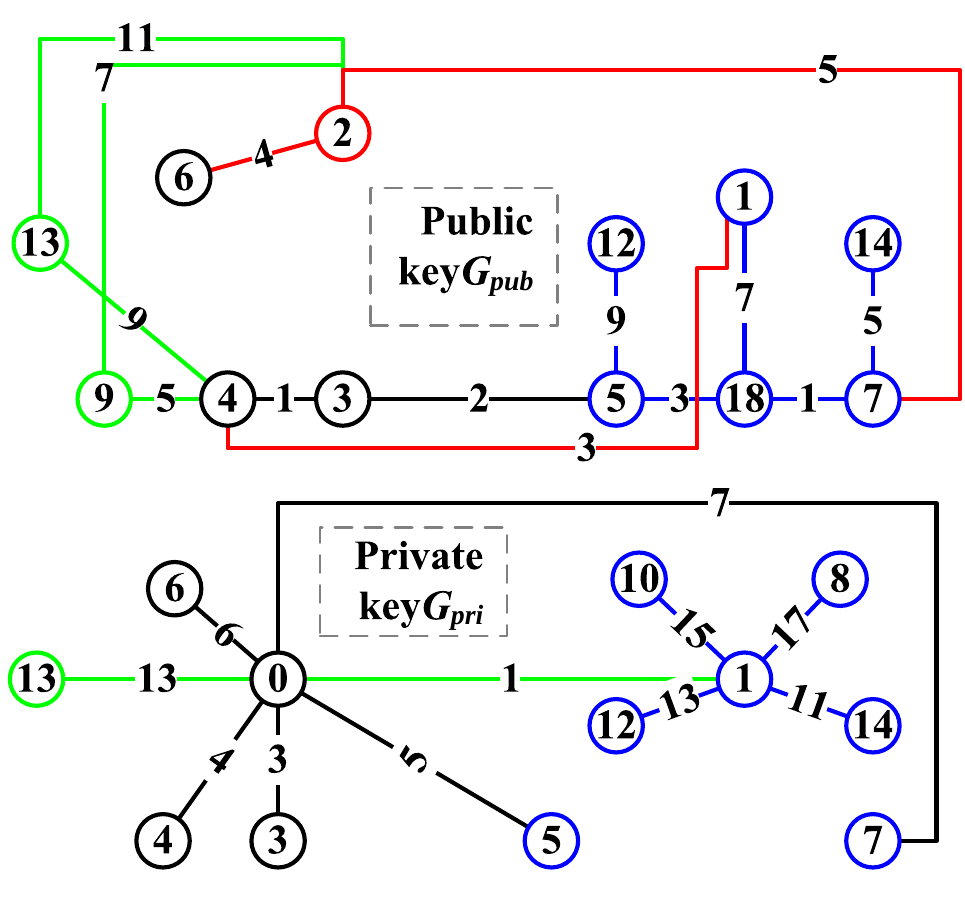}\\
\caption{\label{fig:2-public-private-keys} {\small A public key $G_{pub}$ and a private key $G_{pri}$ can be certified successfully by an authentication $G$ shown in Fig.\ref{fig:combinatorial-labellings}, that is, $G=G_{pub}\cup G_{pri}$, where the set $\{0,1,3,4,5,6,7,12,13,14\}$ is the common boundaries of $G_{pub}$ and $G_{pri}$.}}
\end{figure}

\vskip 0.4cm

\subsubsection{Complementary Topcode-matrices}\quad Let $T_{code}=(X~E~Y)^{T}$ and $\overline{T}_{code}=(\overline{X}~\overline{E}~\overline{Y})^{T}$ be two Topcode-matrices.

A Topcode-matrix $T_{code}$ defined in Definition \ref{defn:Topcode-matrix} is \emph{bipartite} if each graph corresponding $T_{code}$ has no odd-cycle.

We say $\overline{T}_{code}$ to be a \emph{complementary Topcode-matrix} of a Topcode-matrix $T_{code}$ if $T_{code}\uplus \overline{T}_{code}$ holds $E^*\cap \overline{E}^*=\emptyset$ and corresponds a complete graph $K_n$ with $(XY)^*=(\overline{X}\overline{Y})^*$ and $n=|(XY)^*|$. See some examples shown in Fig.\ref{fig:dual-twin} and \ref{fig:dual-twin-matrix}.
\begin{figure}[h]
\centering
\includegraphics[width=8.6cm]{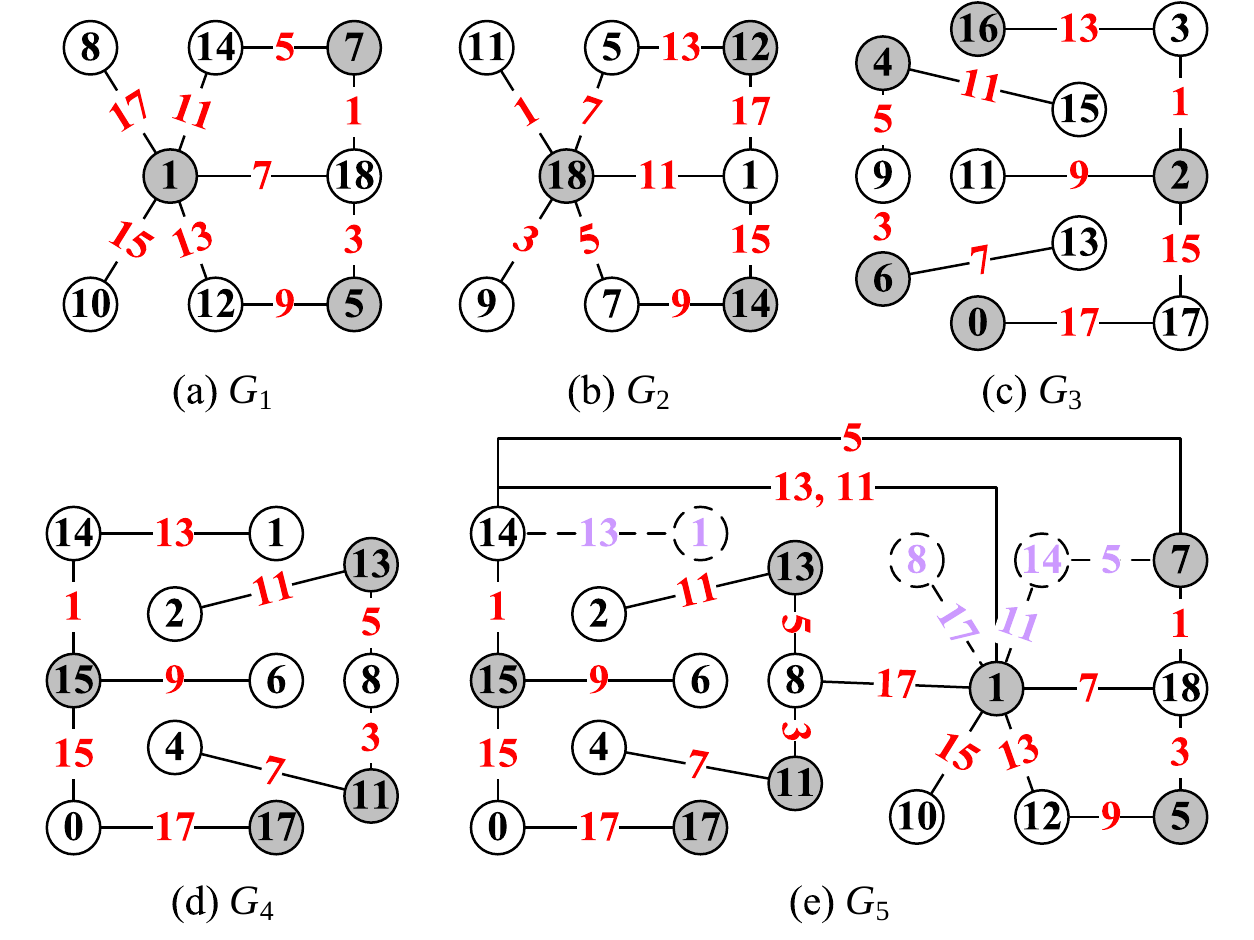}\\
\caption{\label{fig:dual-twin} {\small (a) and (b) are a pair of complementary Topsnut-gpws; (c) and (d) are a pair of complementary Topcode-matrices; (a) and (c) are a pair of twin Topsnut-gpws; and $G_5=G_1\cup G_2$.}}
\end{figure}

\begin{figure}[h]
\centering
\includegraphics[width=8.6cm]{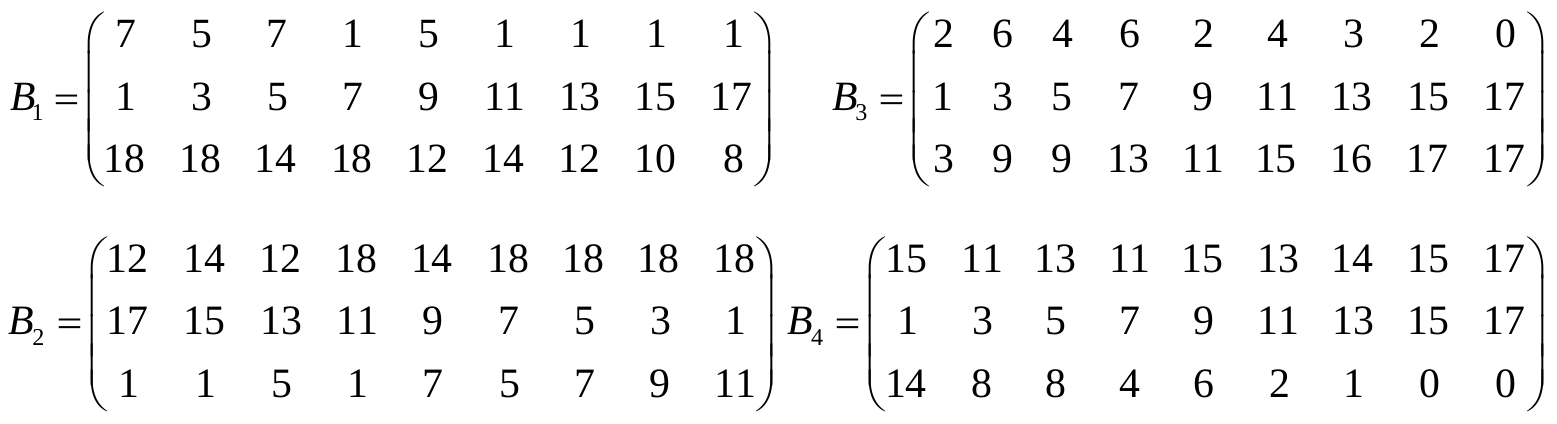}\\
\caption{\label{fig:dual-twin-matrix} {\small $B_1$ and $B_2$ are a pair of complementary Topcode-matrices; $B_1$ and $B_3$ are a pair of twin Topcode-matrices; $B_3$ and $B_4$ are a pair of complementary Topcode-matrices. Here, each Topcode-matrix $B_i$ corresponds a graph $G_i$ shown in Fig.\ref{fig:dual-twin} with $i\in [1,4]$.}}
\end{figure}

\vskip 0.4cm

\subsubsection{Twin odd-graceful Topcode-matrices}\quad As known, an odd-graceful Topcode-matrix $T_{code}=(X~E~Y)^{T}_{3\times q}$ holds $(XY)^*\subset [0,2q-1]$ and $e_i=|x_i-y_i|$ with $i\in [1,q]$, as well as $E^*=[1,2q-1]^o$. And there is another Topcode-matrix $\overline{T}_{code}=(\overline{X}~\overline{E}~\overline{Y})^{T}_{3\times q}$ holds $(\overline{X}\overline{Y})^*\subset [0,2q]$ and $e'_i=|x'_i-y'_i|$ for $i\in [1,q]$ such that $\overline{E}^*=[1,2q-1]^o$ holds true. If $(XY)^*\cup (\overline{X}\overline{Y})^*= [0,2q]$, where $(XY)^*$ and $(\overline{X}\overline{Y})^*$ are two sets of all different numbers in $X,Y,\overline{X}$ and $\overline{Y}$ respectively, then we say $(T_{code},\overline{T}_{code})$ a \emph{twin odd-graceful Topcode-matrix matching} (see examples shown in Fig.\ref{fig:twin-odd-matrices}).

\begin{figure}[h]
\centering
\includegraphics[width=8.6cm]{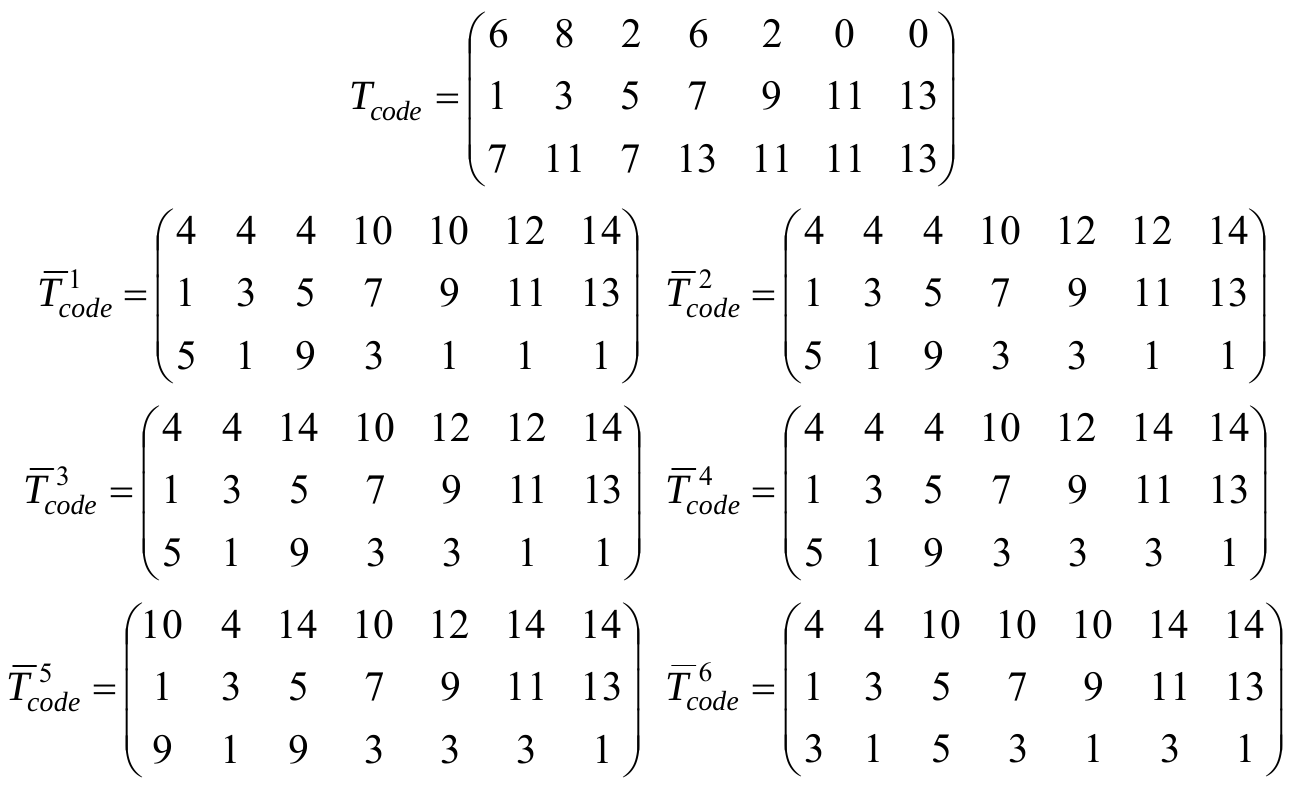}\\
\caption{\label{fig:twin-odd-matrices} {\small Twin odd-graceful Topcode-matrices $(T_{code},\overline{T}^k_{code})$ with $k\in [1,6]$.}}
\end{figure}

\vskip 0.4cm

\subsubsection{Topcode-matrices of line Topsnut-gpws}\quad We show the line graphs in Fig.\ref{fig:graph-group-matching}. For $i\in [1,6]$, each line Topsnut-gpw $G_{im}$ is the \emph{line graph} of the Topsnut-gpw $G_i$ shown in Fig.\ref{fig:graph-group}. It is not hard to write the Topcode-matrices of these line Topsnut-gpws. Furthermore, $(G_{i},G_{im})$ is a \emph{Topsnut-gpw matching}, or an \emph{encrypting authentication}.
\begin{figure}[h]
\centering
\includegraphics[width=8.6cm]{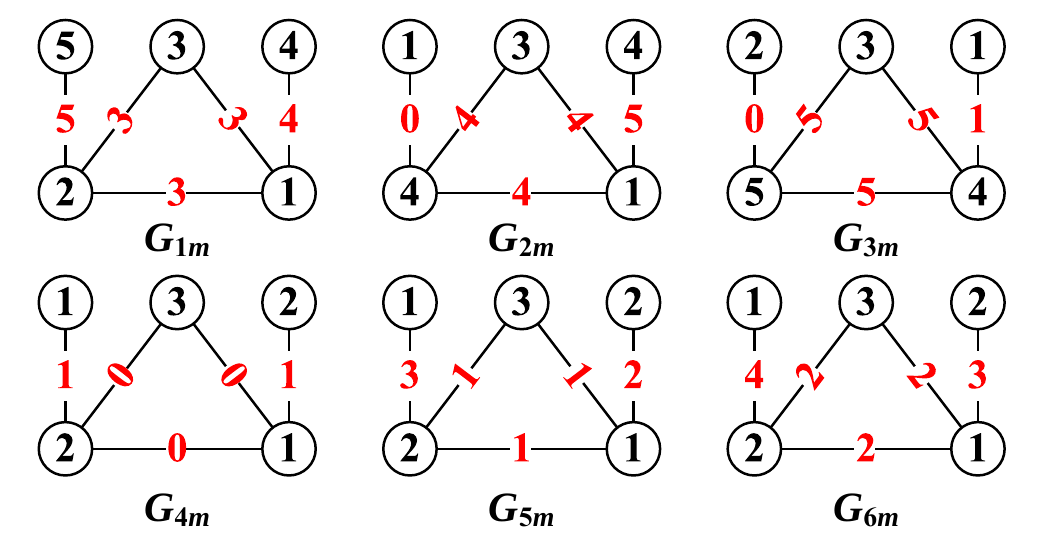}\\
\caption{\label{fig:graph-group-matching} {\small Each line Topsnut-gpw $G_{im}$ is the line graph of the Topsnut-gpw $G_i$ shown in Fig.\ref{fig:graph-group} with $i\in [1,6]$.}}
\end{figure}

A graceful Topcode-matrix $T_{code}$ corresponds a graceful graph $G$, and another graceful Topcode-matrix $T^L_{code}$ corresponds another graceful graph $H$. We call $(T_{code},T^L_{code})$ to be double graceful if $H$ is the \emph{line graph} of $G$.

\vskip 0.4cm

\subsubsection{Various $H$-complementary matchings based on Topcode-matrices}\quad Let $G$ be a connected graph having two proper subgraphs $H_1$ and $H_2$ with $|V(H_i)|\geq 2$ and $|E(H_i)|\geq 1$ for $i=1,2$. We have:
\begin{asparaenum}[\textrm{Condition}-1. ]
\item \label{eqa:same-v-set} $V(H)=V(H_1)=V(H_2)$;
\item \label{eqa:not-join-v-set} $V(H)=V(H_1)\cup V(H_2)$ and $V(H_1)\cap V(H_2)=\emptyset $;
\item \label{eqa:join-v-set} $V(H)=V(H_1)\cup V(H_2)$ and $V(H_1)\cap V(H_2)\neq \emptyset $;
\item \label{eqa:not-join-e-set} $E(H)=E(H_1)\cup E(H_2)$ and $E(H_1)\cap E(H_2)=\emptyset $;
\item \label{eqa:join-e-set} $E(H)=E(H_1)\cup E(H_2)$ and $E(H_1)\cap E(H_2)\neq \emptyset $.
\end{asparaenum}
Suppose that $G$ corresponds a Topcode-matrix $T_{code}$ defined in Definition \ref{defn:Topcode-matrix}, and $H_i$ corresponds a Topcode-matrix $T^i_{code}$ defined in Definition \ref{defn:Topcode-matrix} with $i=1,2$. There are the following $H$-complementary matchings:
\begin{asparaenum}[\textrm{ComMat}-1. ]
\item An e-proper $H$-complementary matching $(T^1_{code},T^2_{code})$ holds Condition-\ref{eqa:same-v-set}, Condition-\ref{eqa:not-join-e-set} and $T_{code}=T^1_{code}\uplus T^2_{code}$ true.
\item A v-joint e-proper $H$-complementary matching $(T^1_{code},T^2_{code})$ holds Condition-\ref{eqa:join-v-set}, Condition-\ref{eqa:not-join-e-set} and $T_{code}=T^1_{code}\uplus T^2_{code}$ true.
\item An e-joint $H$-complementary matching $(T^1_{code},T^2_{code})$ holds Condition-\ref{eqa:same-v-set}, Condition-\ref{eqa:join-e-set} and $T_{code}=T^1_{code}\uplus T^2_{code}$ true.
\item A ve-joint $H$-complementary matching $(T^1_{code},T^2_{code})$ holds Condition-\ref{eqa:join-v-set}, Condition-\ref{eqa:join-e-set} and $T_{code}=T^1_{code}\uplus T^2_{code}$ true.
\end{asparaenum}

\section{Pan-matrices}

A pan-matrix has its own elements from a set $Z$, for instance, $Z$ is a letter set, or a Hanzi (Chinese characters) set, or a graph set, or a poem set, or a music set, even a picture set, \emph{etc.}. In this section, we will discuss pan-matrices made by Hanzi, or Chinese idioms.

\subsection{Hanzi-matrices}

``Chinese character'' is abbreviated as ``Hanzi'' hereafter, and we use the Chinese code GB2312-80 in \cite{GB2312-80} to express a Hanzi, and write this Hanzi as $H_{abcd}$, where $abcd$ is a Chinese code in \cite{GB2312-80}.

\begin{defn}\label{defn:Topsnut-matrix}
\cite{Yao-Mu-Sun-Sun-Zhang-Wang-Su-Zhang-Yang-Zhao-Wang-Ma-Yao-Yang-Xie2019} A \emph{Hanzi-GB2312-80 matrix} (or Hanzi-matrix) $A_{han}(H)$ of a Hanzi-sentence $H=\{ H_{a_ib_ic_id_i}\}^m_{i=1}$ made by $m$ Hanzis $H_{a_1b_1c_1d_1}$, $H_{a_2b_2c_2d_2}$, $\dots$, $H_{a_mb_mc_md_m}$ is defined as
\begin{equation}\label{eqa:a-formula}
\centering
{
\begin{split}
A_{han}(H)&= \left(
\begin{array}{ccccc}
a_{1} & a_{2} & \cdots & a_{m}\\
b_{1} & b_{2} & \cdots & b_{m}\\
c_{1} & c_{2} & \cdots & c_{m}\\
d_{1} & d_{2} & \cdots & d_{m}
\end{array}
\right)=\left(\begin{array}{c}
A\\
B\\
C\\
D
\end{array} \right)\\
&=(A~B~C~D)^{T}_{4\times m}
\end{split}}
\end{equation}\\
where
\begin{equation}\label{eqa:three-vectors}
{
\begin{split}
&A=(a_1 ~ a_2 ~ \cdots ~a_m), B=(b_1 ~ b_2 ~ \cdots ~b_m)\\
&C=(c_1 ~ c_2 ~ \cdots ~c_m), D=(d_1 ~ d_2 ~ \cdots ~d_m)
\end{split}}
\end{equation}
where each Chinese code $a_ib_ic_id_i$ is defined in \cite{GB2312-80}.\qqed
\end{defn}

See two Hanzi-GB2312-80 matrices shown in Fig.\ref{fig:6-hanzi-matrices}. Another Hanzi-GB2312-80 matrix is as follows:

\begin{equation}\label{eqa:Hanzi-sentence-GB2312-80}
\centering
{
\begin{split}
A_{han}(G^*)&= \left(
\begin{array}{ccccccccc}
4 & 4 & 2 & 2 & 5 & 4 & 4 & 4 & 3\\
0 & 0 & 6 & 5 & 2 & 4 & 7 & 4 & 8\\
4 & 4 & 3 & 1 & 8 & 7 & 3 & 1 & 2\\
3 & 3 & 5 & 1 & 2 & 6 & 4 & 1 & 9
\end{array}
\right)
\end{split}}
\end{equation}
according to a Hanzi-sentence $G^*=H_{4043}$ $H_{4043}$ $H_{2635}$ $H_{2511}$ $H_{5282}$ $H_{4476}$ $H_{4734}$ $H_{4411}$ $H_{3829}$ shown in Fig.\ref{fig:ren-ren-hao-gong} (b)(1). This Hanzi-sentence $G^*$ induces another matrix as follows:

{\small
\begin{equation}\label{eqa:Hanzi-sentence-code}
\centering
{
\begin{split}
A_{han}(G^*)&= \left(
\begin{array}{ccccccccc}
4 & 4 & 5 & 5 & 5 & 5 & 4 & 5 & 5\\
E & E & 9 & 1 & 2 & 9 & E & 9 & E\\
B & B & 7 & 6 & 1 & 2 & 0 & 1 & 7\\
A & A & D & C & 9 & 9 & B & A & 3
\end{array}
\right)
\end{split}}
\end{equation}
}
by Chinese code of Chinese dictionary, see Fig.\ref{fig:ren-ren-hao-gong} (a).

\begin{figure}[h]
\centering
\includegraphics[width=8.6cm]{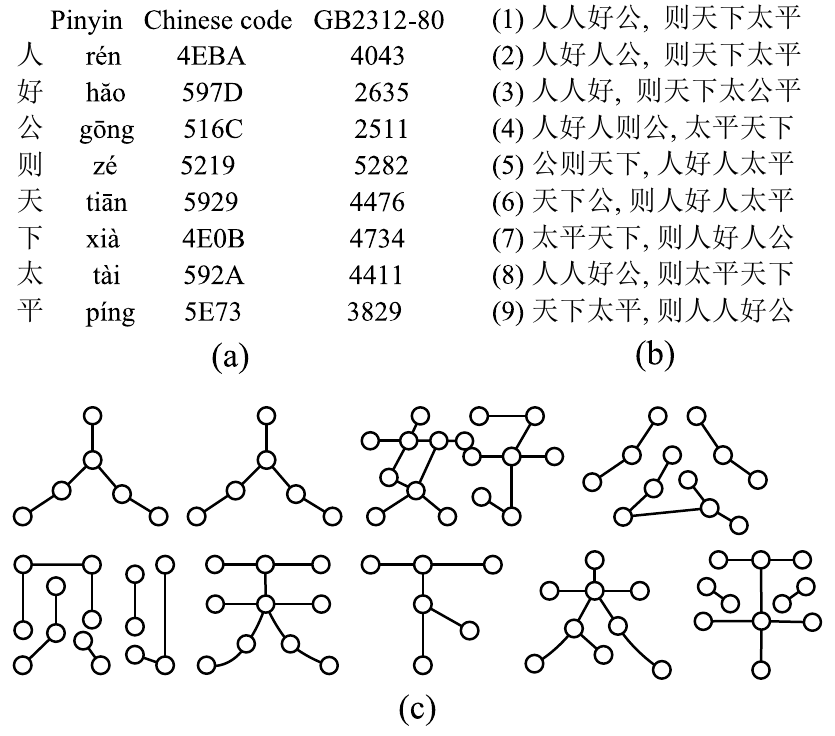}\\
\caption{\label{fig:ren-ren-hao-gong} {\small In \cite{Yao-Mu-Sun-Sun-Zhang-Wang-Su-Zhang-Yang-Zhao-Wang-Ma-Yao-Yang-Xie2019}: (a) Pinyin, Chinese code and Hanzi-GB2312-80 of Hanzis; (b) nine Hanzi-sentences made by nine Hanzis $H_{4043}$, $H_{4043}$, $H_{2635}$, $H_{2511}$, $H_{5282}$, $H_{4476}$, $H_{4734}$, $H_{4411}$ and $H_{3829}$ shown in (a); (c) a topological model of nine Hanzis, called \emph{Hanzi-graphs}.}}
\end{figure}

\subsection{Adjacent ve-value matrices}

An adjacent ve-value matrix $A(G)=(a_{i,j})_{(p+1)\times (p+1)}$ is defined on a colored $(p,q)$-graph $G$ with a coloring/labelling $g: V(G)\cup E(G)\rightarrow [a,b]$ for $V(G)=\{x_1,x_2,\dots ,x_p\}$ in the way:

(i) $a_{1,1}=0$, $a_{1,j+1}=g(x_j)$ with $j\in[1,p]$, and $a_{k+1,1}=g(x_k)$ with $k\in[1,p]$.

(ii) $a_{i+1,i+1}=0$ with $i\in[1,p]$.

(iii) For an edge $x_{ij}=x_ix_j\in E(G)$ with $i,j\in[1,p]$, then $a_{i+1,j+1}=g(x_{ij})$, otherwise $a_{i+1,j+1}=0$.

The Tosnut-gpw $G_2$ shown in Fig.\ref{fig:combinatorial-labellings} has its own adjacent ve-value matrix $A(G_2)$ shown in Fig.\ref{fig:5-other-rules}.

\begin{figure}[h]
\centering
\includegraphics[width=8.6cm]{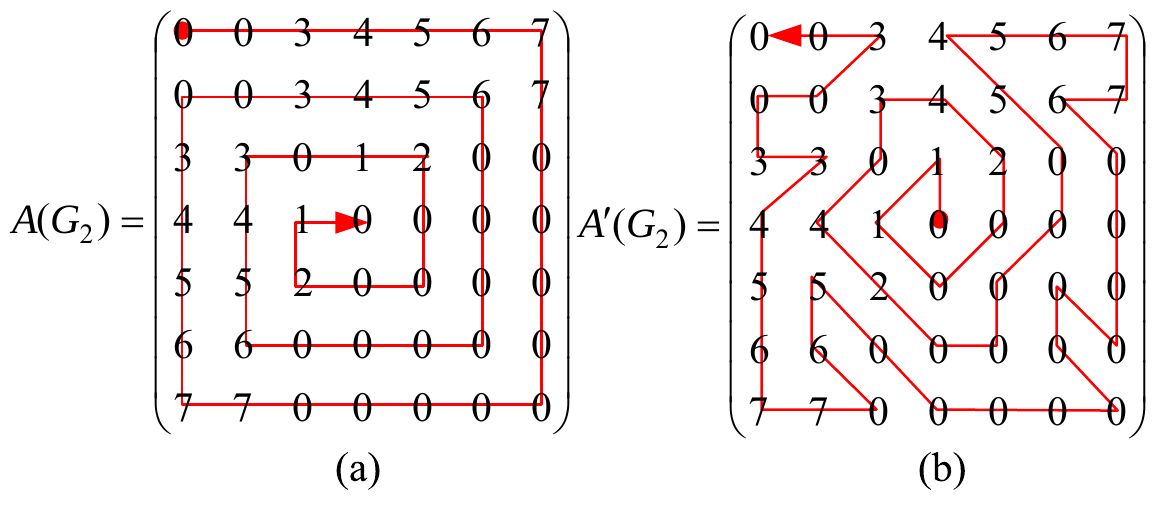}\\
\caption{\label{fig:5-other-rules} {\small Other rules on two adjacent ve-value matrices defined by the Tosnut-gpw $G_2$ shown in Fig.\ref{fig:combinatorial-labellings}.}}
\end{figure}

According to a red line $L$ under the adjacent ve-value matrix $A(G_2)$ shown in Fig.\ref{fig:5-other-rules}(a), we can write a TB-paw
$${
\begin{split}
T_b(A(G_2))=&0034567700000000077654300\\
&345600000006543012000210
\end{split}}
$$ and another TB-paw
$${
\begin{split}
T_b(A'(G_2))=&011002430420000054567760\\
&0000000000560776543300300
\end{split}}
$$ is obtained along a red line $L'$ under the adjacent ve-value matrix $A'(G_2)$ shown in Fig.\ref{fig:5-other-rules} (b).

We have a matrix set $\{A(G_2;k):~k\in [1,7]\}$ with $A(G_2;1)$ $=A(G_2)$ shown in Fig.\ref{fig:5-other-rules}(a), and each matrix $A(G_2;k)=$ $(a^k_{i,j})_{7\times 7}$ holds $a^k_{i,j}=k+a^1_{i,j}~(\bmod~7)$ with $k\in [1,7]$, and we can use $\{A(G_2;k):$~$k\in [1,7]\}$ to form an every-zero additive associative Topcode$^+$-matrix group by the additive v-operation defined in the equations (\ref{eqa:group-operation-1}) and (\ref{eqa:group-operation-2}). Correspondingly, each $A(G_2;k)$ derives a TB-paw $T^k_b(A(G_2))$ along the red line $L$ under the adjacent ve-value matrix $A(G_2)$ shown in Fig.\ref{fig:5-other-rules}(a), so the number string set $\{T^k_b(A(G_2)):~k\in [1,7]\}$, also, forms an every-zero additive associative number string group by the additive v-operation.

\begin{figure}[h]
\centering
\includegraphics[width=8.6cm]{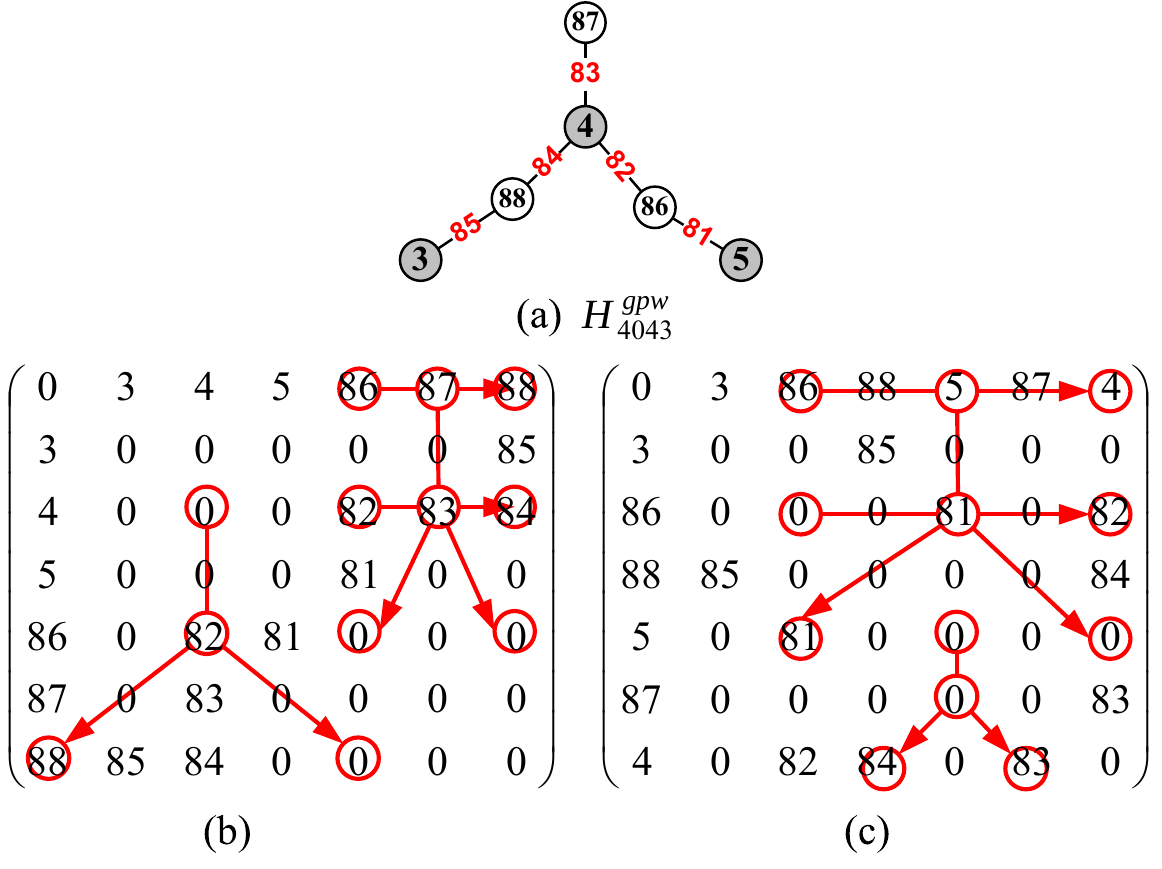}\\
\caption{\label{fig:ren-ve-value-matrix} {\small A Hanzi $H_{4043}$ with its Hanzi-gpw $H^{gpw}_{4043}$ and two adjacency ve-value matrices $A^{(1)}(H_{4043})$ and $A^{(2)}(H_{4043})$ cited from \cite{Yao-Mu-Sun-Sun-Zhang-Wang-Su-Zhang-Yang-Zhao-Wang-Ma-Yao-Yang-Xie2019}.}}
\end{figure}

By an adjacency ve-value matrix $A^{(1)}(H_{4043})$ shown in Fig.\ref{fig:ren-ve-value-matrix}, we have a TB-paw $T_b(11)\uplus T_b(12)$, where $$T_b(11)=868788828384870830830$$ and $T_b(12)=00820888200$. Another adjacency ve-value matrix $A^{(2)}(H_{4043})$ gives us another TB-paw $T_b(21)\uplus T_b(22)$, where $T_b(21)=86885874008108250810818100$ and $T_b(22)=0084083$. In this example, we have shown a technique based on adjacent ve-value matrices for producing TB-paws by Hanzis, and it is possible to generalize this technique to be other more complex cases.

\subsection{Linear system of Hanzi equations} Let us see an example shown in Fig.\ref{fig:Hanzi-linear-system-1} and Fig.\ref{fig:Hanzi-linear-system-2}. Using a Hanzi-string $T=H_{4476}H_{4734}H_{4662}H_{4311}$ (it is a \emph{Chinese idiom}), we set up a Hanzi-GB2312-80 matrix (also, coefficient matrix) $A(T)$ shown in Fig.\ref{fig:Hanzi-linear-system-1}, and get a system $Y=A(T)X$ of linear Hanzi equations. Next, we put $x_1,x_2,x_3,x_4$ into $X$ with $x_1=2,x_2=0,x_3=1,x_4=6$ of a Hanzi $H_{2016}$ from \cite{GB2312-80}, and then solve another Hanzi $H_{6532}$ by the system $Y=A(T)X$, and moreover $Y=X$ when $A(T)$ is the unit matrix, and $X=A^{-1}(T)Y$ as if the determinant $|A(T)|\neq 0$.

\begin{figure}[h]
\centering
\includegraphics[width=6.8cm]{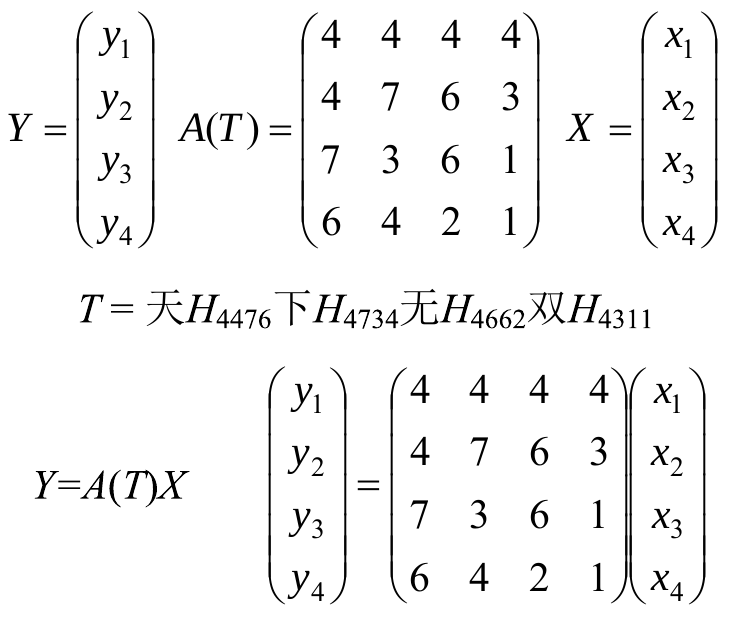}\\
\caption{\label{fig:Hanzi-linear-system-1} {\small A Hanzi-string $T=H_{4476}H_{4734}H_{4662}H_{4311}$, and a system $Y=A(T)X$ of linear Hanzi equations made by a known Hanzi $H_{x_1x_2x_3x_4}$ and a unknown Hanzi $H_{y_1y_2y_3y_4}$ based on \cite{GB2312-80}.}}
\end{figure}

\begin{figure}[h]
\centering
\includegraphics[width=7.2cm]{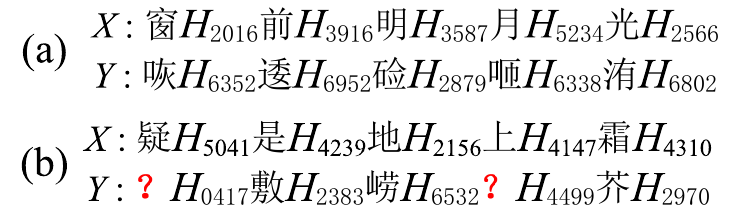}\\
\caption{\label{fig:Hanzi-linear-system-2} {\small Based on the system $Y=A(T)X$ shown in Fig.\ref{fig:Hanzi-linear-system-1}, some $H_{y_1y_2y_3y_4}$ in (b) are unknown in \cite{GB2312-80}.}}
\end{figure}
In Fig.\ref{fig:Hanzi-linear-system-2}, we have
\begin{equation}\label{eqa:c3xxxxx}
y_i=\beta_{i,1}x_1+\beta_{i,2}x_2+\beta_{i,3}x_3+\beta_{i,4}x_4~(\bmod~10)
\end{equation}
for $i\in [1,4]$, where $(\beta_{1,1}~\beta_{1,2}~\beta_{1,3}~\beta_{1,4})=(4~4~4~4)$, $(\beta_{2,1}~\beta_{2,2}~\beta_{2,3}~\beta_{2,4})=(4~7~6~3)$, $(\beta_{3,1}~\beta_{3,2}~\beta_{3,3}~\beta_{3,4})=(7~3~6~1)$ and $(\beta_{4,1}~\beta_{4,2}~\beta_{4,3}~\beta_{4,4})=(6~4~2~1)$. From Fig.\ref{fig:Hanzi-linear-system-2}(a), let
$${
\begin{split}
&X_1=(2~0~1~6)^{T},X_2=(3~9~1~6)^{T},X_3=(3~5~8~7)^{T},\\
&X_4=(5~2~3~4)^{T},X_5=(2~5~6~6)^{T};
\end{split}}
$$
so we have solved
$${
\begin{split}
&Y_1=(6~3~5~2)^{T},Y_2=(6~9~5~2)^{T},Y_3=(2~8~7~9)^{T},\\
&Y_4=(6~3~3~8)^{T},Y_5=(6~8~0~2)^{T},
\end{split}}
$$
and get five equations
\begin{equation}\label{eqa:Hanzi-linear-system-3}
Y_i=A(T)X_i,~i\in [1,5].
\end{equation}
Finally, we get a matrix equation by the \emph{union-addition operation} ``$\uplus$'' as:
\begin{equation}\label{eqa:Hanzi-linear-system-33}
\uplus^5_{i=1}Y_i=\uplus^5_{i=1}A(T)X_i=A(T)\uplus^5_{i=1}X_i.
\end{equation}

Through the matrix equation (\ref{eqa:Hanzi-linear-system-33}), we have translated a Chinese sentence $C_{pub}=H_{2016}H_{3916}H_{3587}H_{5234}H_{2566}$ (as a public key) into another Chinese sentence $C_{pri}=H_{6352}H_{6952}H_{2879}H_{6338}H_{6802}$ (as a private key), see Fig.\ref{fig:Hanzi-linear-system-3}. From these two sentences $C_{pub}$ and $C_{pri}$, we get two TB-paws
$$T_b(C_{pub})=20163916358752342566$$
and
$$T_b(C_{pri})=63526952287963386802.$$
Clearly, for a fixed public key $C_{pub}$, we may have many private keys $C_{pri}$ from different Hanzi-GB2312-80 matrices.
\begin{figure}[h]
\centering
\includegraphics[width=8cm]{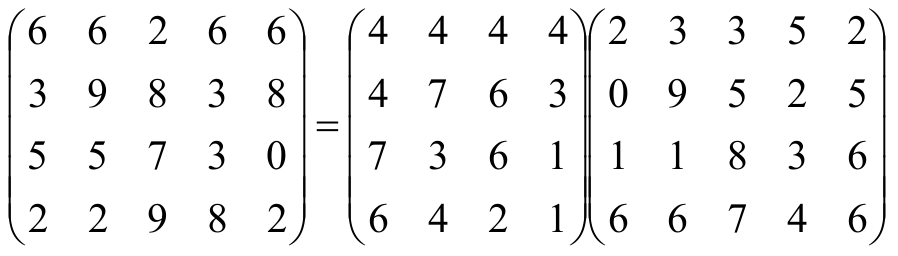}\\
\caption{\label{fig:Hanzi-linear-system-3} {\small A matrix equation based on the matrix equation (\ref{eqa:Hanzi-linear-system-33}).}}
\end{figure}

We generalize the matrix equation (\ref{eqa:Hanzi-linear-system-33}) by setting matrices $Y'_i=(y_{i,1}~y_{i,2}~ y_{i,3}~y_{i,4})^{T}_{4\times 1}$, $X'_i=(x_{i,1}~x_{i,2}~ x_{i,3}~x_{i,4})^{T}_{4\times 1}$ and $(a_{i,j})_{4\times 4}$ with $0\leq a_{i,j}\leq 9$, such that $X_{4\times m}=\uplus^m_{i=1}X'_i$ and $Y_{4\times m}=\uplus^m_{i=1}Y'_i$. We define the dot product $y_{i,j}$ of two vectors $(a_{i,1}~a_{i,2}~a_{i,3}~a_{i,4})$ and $(x_{j,1}~x_{j,2}~ x_{j,3}~x_{j,4})^{T}$ as
$$
{
\begin{split}
y_{i,j}&=(a_{i,1}~a_{i,2}~a_{i,3}~a_{i,4})\bullet (x_{j,1}~x_{j,2}~ x_{j,3}~x_{j,4})^{T}\\
&=a_{i,1}x_{j,1}+a_{i,2}x_{j,2}+a_{i,3}x_{j,3}+a_{i,4}x_{j,4}~(\bmod~10),
\end{split}}
$$ so we get a matrix equation below
\begin{equation}\label{eqa:generalize-Hanzi-linear-system}
Y_{4\times m}=(a_{i,j})_{4\times 4}X_{4\times m},
\end{equation}
where $Y_{4\times m}$ is an \emph{unknown matrix}, $(a_{i,j})_{4\times 4}$ is a \emph{coefficient matrix} and $X_{4\times m}$ is a \emph{known matrix}.

\subsection{Equations based on Hanzi-matrices}

We define another type of Hanzi-matrices as follows:

\begin{equation}\label{eqa:han-matrix-1}
\centering
{
\begin{split}
A^r_{han}&= \left(
\begin{array}{ccccc}
a_{1} & b_{1} & c_{1} & d_{1}\\
a_{2} & b_{2} & c_{2} & d_{2}\\
\cdots & \cdots& \cdots &\cdots\\
a_{m} & b_{m} & c_{m} & d_{m}
\end{array}
\right)=\left(\begin{array}{c}
X_1\\
X_2\\
\cdots\\
X_m
\end{array} \right)\\
&=(X_1~X_2~\cdots ~X_m)^{T}_{m\times 4}
\end{split}}
\end{equation}
where $X_k=(a_{k}~ b_{k}~ c_{m}~ d_{k})$ with $k\in [1,m]$ corresponds a Hanzi $H_{a_{k}b_{k}c_{m}d_{k}}$ in \cite{GB2312-80}, and
\begin{equation}\label{eqa:han-matrix-11}
\centering
{
\begin{split}
A^c_{han}&= \left(
\begin{array}{ccccc}
x_{1} & x_{2} & \cdots & x_{n}\\
y_{1} & y_{2} & \cdots & y_{n}\\
z_{1} & z_{2} & \cdots & z_{n}\\
w_{1} & w_{2} & \cdots & w_{n}
\end{array}
\right)\\
&=(Y_1~Y_2~\cdots ~Y_n)_{4\times n}
\end{split}}
\end{equation}
where $Y_j=(x_{j}~ y_{j}~ z_{j}~ w_{j})^{T}$ with $j\in [1,n]$ corresponds a Hanzi $H_{x_{j}y_{j}z_{j}w_{j}}$ in \cite{GB2312-80}. Thereby, we define
\begin{equation}\label{eqa:han-matrix-12}
\centering
{
\begin{split}
&\quad A^{r(\bullet) c}_{han}=A^r_{han}(\bullet) A^c_{han}\\
&=(X_1~X_2~\cdots ~X_m)^{T}_{m\times 4}(\bullet) (Y_1~Y_2~\cdots ~Y_n)_{4\times n}
\end{split}}
\end{equation}
where $A^{r\times c}_{han}=(\alpha_{i,j})_{m\times n}$ and ``$(\bullet)$'' is an abstract operation. We define two operations that differ from the additive v-operation defined in (\ref{eqa:group-operation-1}) and (\ref{eqa:group-operation-2}):

\vskip 0.2cm

(i) \emph{\textbf{Multiplication ``$\bullet $'' on components of two vectors}.} We define
\begin{equation}\label{eqa:han-matrix-multiplication}
{
\begin{split}
X_k\bullet Y_j&=(a_{k}~ b_{k}~ c_{k}~ d_{k})\bullet (x_{j}~ y_{j}~ z_{j}~ w_{j})^{T}\\
&=a_{k,j}b_{k,j}c_{k,j}d_{k,j}
\end{split}}
\end{equation}
where $a_{k,j}=a_{k}\cdot x_{j}~(\bmod~10)$, $b_{k,j}=b_{k}\cdot y_{j}~(\bmod~10)$, $c_{k,j}=c_{k}\cdot z_{j}~(\bmod~10)$ and $d_{k,j}=d_{k}\cdot w_{j}~(\bmod~10)$. So $A^{r\bullet c}_{han}=A^r_{han}\bullet A^c_{han}$ (see an example shown in Fig.\ref{fig:6-hanzi-matrices} and Fig.\ref{fig:7-hanzi-matrices}).

(ii) \emph{\textbf{Addition ``$\oplus$'' on components of two vectors}.} We define
\begin{equation}\label{eqa:han-matrix-addition}
{
\begin{split}
X_k\oplus Y_j&=(a_{k}~ b_{k}~ c_{m}~ d_{k})\oplus (x_{j}~ y_{j}~ z_{j}~ w_{j})^{T}\\
&=\alpha_{k,j}\beta_{k,j}\gamma_{k,j}\delta_{k,j}
\end{split}}
\end{equation}
where $\alpha_{k,j}=a_{k}+x_{j}~(\bmod~10)$, $\beta_{k,j}=b_{k}+y_{j}~(\bmod~10)$, $\gamma_{k,j}=c_{m}+z_{j}~(\bmod~10)$ and $\delta_{k,j}=d_{k}+w_{j}~(\bmod~10)$, that is $A^{r\oplus c}_{han}=A^r_{han}\oplus A^c_{han}$ (see an example shown in Fig.\ref{fig:6-hanzi-matrices} and Fig.\ref{fig:7-hanzi-matrices}).

\vskip 0.2cm

\begin{figure}[h]
\centering
\includegraphics[width=8.6cm]{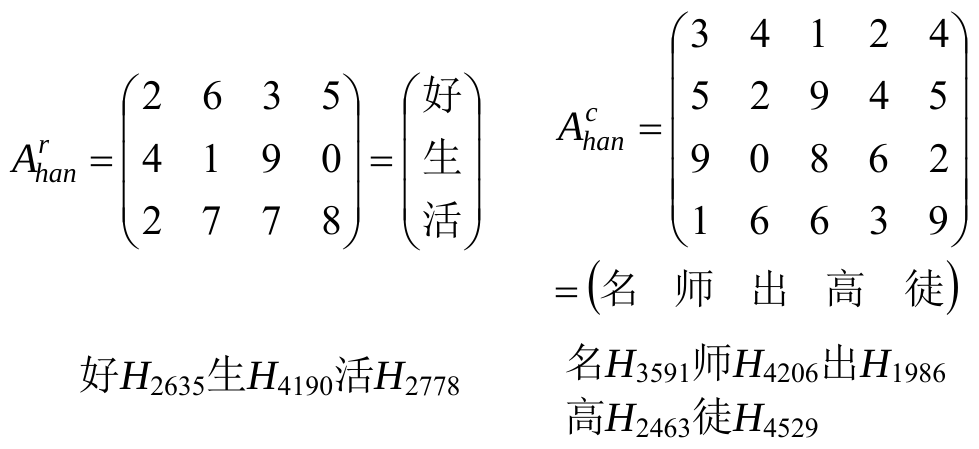}\\
\caption{\label{fig:6-hanzi-matrices} {\small Examples for the multiplication and the addition on components of two vectors.}}
\end{figure}

\begin{figure}[h]
\centering
\includegraphics[width=8.6cm]{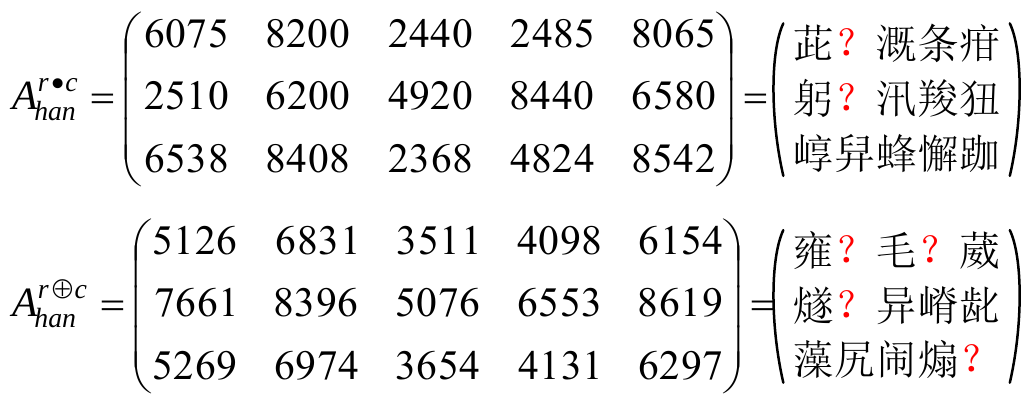}\\
\caption{\label{fig:7-hanzi-matrices} {\small Two matrices obtained from Fig.\ref{fig:6-hanzi-matrices}.}}
\end{figure}

Observe Fig.\ref{fig:7-hanzi-matrices}, there are several ``?'' in two matrices $A^{r\bullet c}_{han}=(\alpha_{i,j})_{3\times 5}$ and $A^{r\oplus c}_{han}=(\alpha_{i,j})_{3\times 5}$, since no Hanzi with Chinese codes in \cite{GB2312-80} corresponds to these ``?''.

Thereby, each element of the matrix $A^{r\bullet c}_{han}=(\alpha_{i,j})_{m\times n}$ (or $A^{r\oplus c}_{han}=(\alpha_{i,j})_{m\times n}$) is a number of four bytes which corresponds a Hanzi in \cite{GB2312-80}, so we call $A^{r\bullet c}_{han}=(\alpha_{i,j})_{m\times n}$ (or $A^{r\oplus c}_{han}$) a \emph{Hanzi-matrix}. On the other hands, we can regard $A^{r\bullet c}_{han}=(\alpha_{i,j})_{m\times n}$ (or $A^{r\oplus c}_{han}$) as an \emph{authentication} for the \emph{public key} $A^r_{han}$ and the \emph{private key} $A^c_{han}$.

\begin{figure}[h]
\centering
\includegraphics[width=8cm]{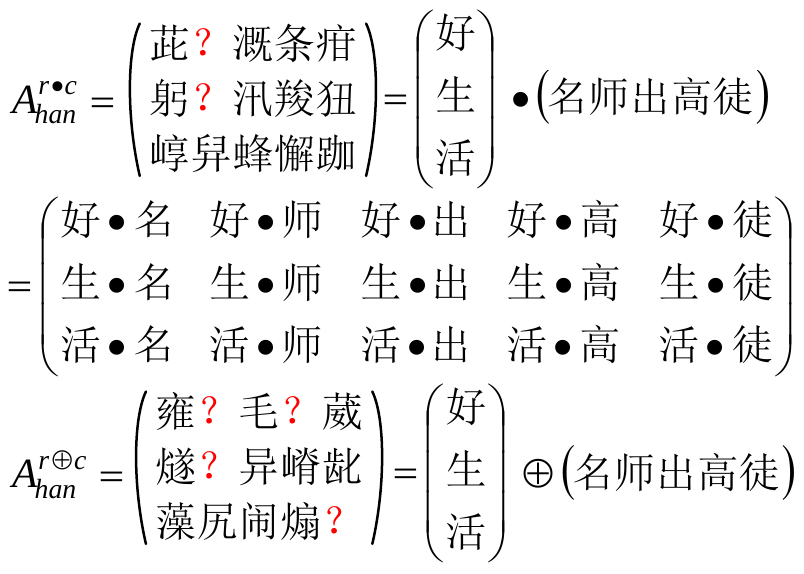}\\
\caption{\label{fig:8-hanzi-matrices} {\small Two pure Hanzi matrices from Fig.\ref{fig:7-hanzi-matrices}.}}
\end{figure}

\begin{figure}[h]
\centering
\includegraphics[width=8.6cm]{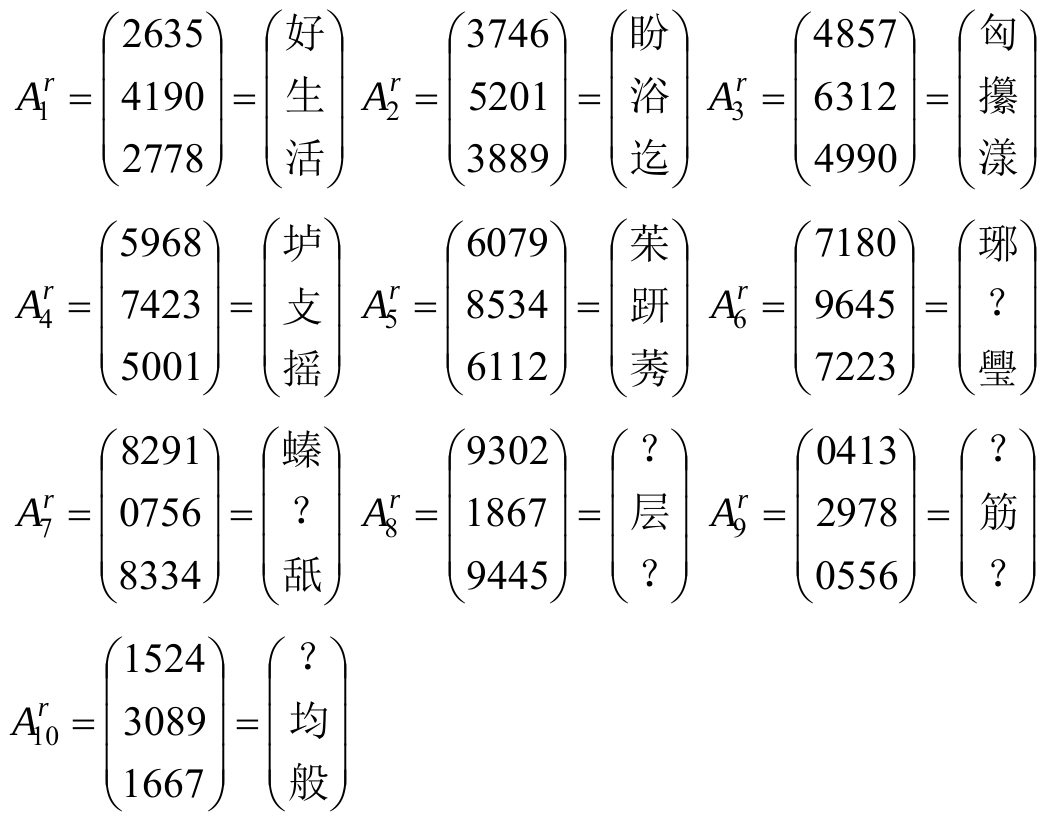}\\
\caption{\label{fig:hao-sheng-huo-group} {\small A Hanzi-group based on the additive v-operation under $\bmod~10$.}}
\end{figure}

If we have an encoding of Chinese characters, which contains $10^4$ different Chinese characters, then our Hanzi-matrices contain no ``?''.

\subsection{Matrices with elements to be graphs or Topsnut-gpws}

\vskip 0.4cm

\subsubsection{Coefficient matrices are number matrices}\quad A graph-matrix is one with every element to be a graph or a Topsnut-gpw. Let
\begin{equation}\label{eqa:coefficient-matrix}
\centering
{
\begin{split}
A_{coe}= \left(
\begin{array}{ccccc}
a_{1,1} & a_{1,2} & \cdots & a_{1,n}\\
a_{2,1} & a_{2,2} & \cdots & a_{2,n}\\
\cdots & \cdots & \cdots & \cdots \\
a_{m,1} & a_{m,2} & \cdots & a_{m,n}\\
\end{array}
\right)
\end{split}}
\end{equation}
and two graph vectors $Y(H)=(H_1~H_2~\cdots ~H_m)^{T}$ and $X(G)=(G_1$ $G_2$ $\cdots $ $G_n)^{T}$, where $G_i$ and $H_j$ are colored graphs with $i\in [1,n]$ and $j\in [1,m]$. We have a system of graph equations
\begin{equation}\label{eqa:graph-equation}
\centering
Y(H)=A_{coe}\cup X(G)
\end{equation}
with $H_i=\bigcup^n_{j=1}a_{i,j}G_j$, where ``$\bigcup$'' is the union operation on graphs of graph theory, and $a_{i,j}G_j$ is a disconnected graph having $a_{i,j}$ components that are isomorphic to $G_i$, that is
{\footnotesize
\begin{equation}\label{eqa:graph-equations}
\centering
{
\begin{split}
\left(\begin{array}{c}
H_1\\
H_2\\
\cdots\\
H_m
\end{array} \right)= \left(
\begin{array}{ccccc}
a_{1,1} & a_{1,2} & \cdots & a_{1,n}\\
a_{2,1} & a_{2,2} & \cdots & a_{2,n}\\
\cdots & \cdots & \cdots & \cdots \\
a_{m,1} & a_{m,2} & \cdots & a_{m,n}\\
\end{array}
\right)\bigcup \left(\begin{array}{c}
G_1\\
G_2\\
\cdots\\
G_n
\end{array} \right)
\end{split}}
\end{equation}
}

\vskip 0.4cm

\subsubsection{Coefficient matrices have elements being graphs}\quad We define a \emph{graph coefficient matrix} $A_{graph}$ by
\begin{equation}\label{eqa:graph-coefficient-matrix}
\centering
{
\begin{split}
A_{graph}= \left(
\begin{array}{ccccc}
G_{1,1} & G_{1,2} & \cdots & G_{1,q}\\
G_{2,1} & G_{2,2} & \cdots & G_{2,q}\\
\cdots & \cdots & \cdots & \cdots \\
G_{p,1} & G_{m,2} & \cdots & G_{p,q}\\
\end{array}
\right)_{p\times q}
\end{split}}
\end{equation}
and two \emph{graph vectors} $Y(L)=(L_1~L_2~\cdots ~L_p)^{T}$ and $X(T)=(T_1$ $T_2$ $\cdots $ $T_q)^{T}$, where $T_i$ and $L_j$ are colored graphs with $i\in [1,q]$ and $j\in [1,p]$. Thereby, we have a system of graph equations
\begin{equation}\label{eqa:graph-coefficient-matrix-equation}
\centering
Y(L)=A_{graph}(\ast) X(T)
\end{equation}
with $L_i=\bigcup^q_{j=1}[G_{i,j}(\ast)T_j]$, where ``$(\ast)$'' is a graph operation on colored graphs in graph theory (Ref. \cite{Bang-Jensen-Gutin-digraphs-2007, Bondy-2008}).

\vskip 0.4cm

\subsubsection{Topcode-matrices with variables}\quad We show such a matrix in Fig.\ref{fig:pan-matrix-000}, and get a TB-paw $T_b(H_{4043};s,i)$ from the Topcode-matrix $T(H^{gpw}_{4043})$ made by a Topsnut-gpw $H^{gpw}_{4043}$ based on a Hanzi $H_{4043}$ as follows:
$${
\begin{split}
&T_b(H_{4043};s,i)=(s-1)(s-i-2)(i+1)(s-i-1)\\
&s(s-i)i(i+1)(s-i-3)(s-2)(s-i-4)(i+2).
\end{split}}$$

\begin{figure}[h]
\centering
\includegraphics[width=8.6cm]{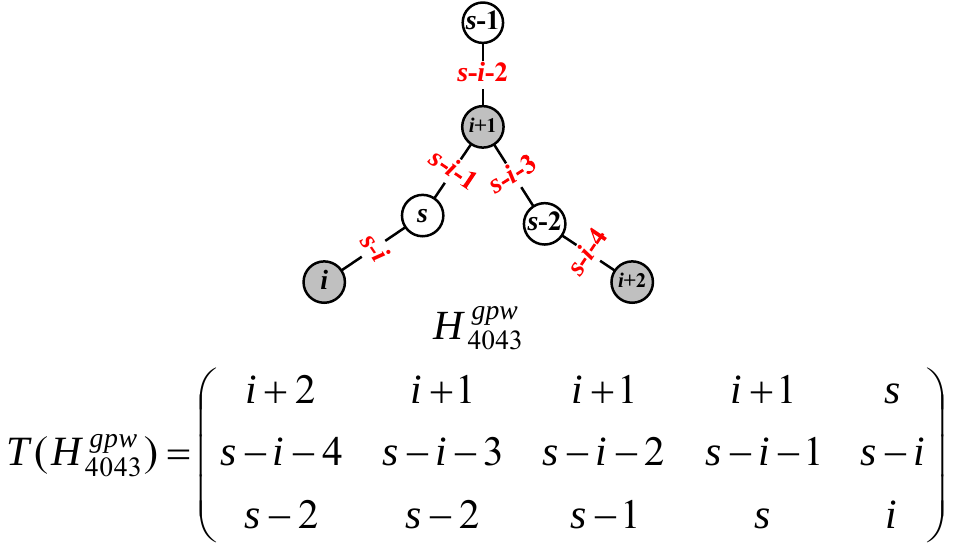}\\
\caption{\label{fig:pan-matrix-000} {\small A Topcode-matrix with elements of variables $s$ and $i$ cited from \cite{Yao-Mu-Sun-Sun-Zhang-Wang-Su-Zhang-Yang-Zhao-Wang-Ma-Yao-Yang-Xie2019}.}}
\end{figure}

\vskip 0.4cm

\subsubsection{Topcode-matrices from analytic Hanzis}\quad In Fig.\ref{fig:analytic-hanzi-matrix}, a Hanzi $H_{4585}$ was put into xOy-plane, so we get a new type of matrices, called \emph{analytic matrices}.

\begin{figure}[h]
\centering
\includegraphics[width=8.6cm]{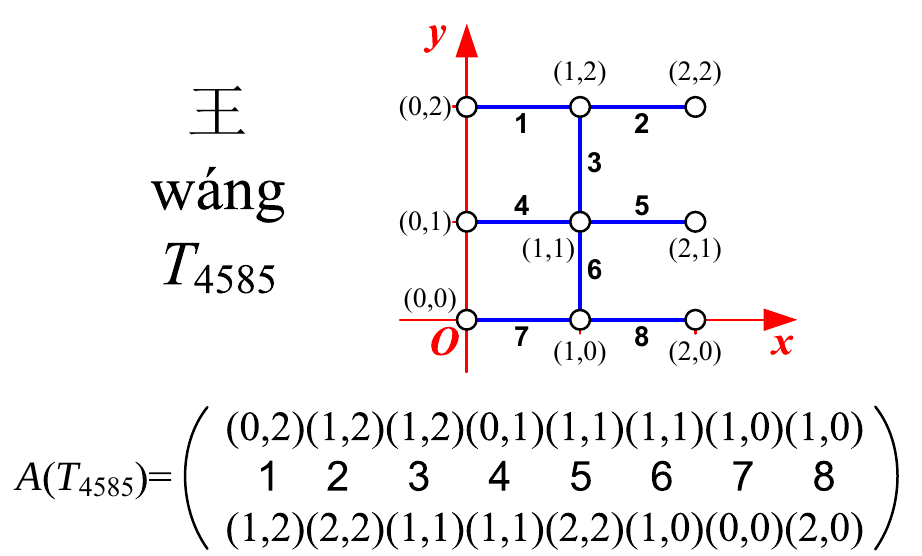}\\
\caption{\label{fig:analytic-hanzi-matrix} {\small A Hanzi $H_{4585}$ and its analytic graph, as well as its Topcode-matrix with elements to be numbers and coordinates in xOy-plane cited from \cite{Yao-Mu-Sun-Sun-Zhang-Wang-Su-Zhang-Yang-Zhao-Wang-Ma-Yao-Yang-Xie2019}.}}
\end{figure}

\vskip 0.4cm

\subsubsection{Dynamic Topcode-matrices from dynamic networks}\quad Let $\mathcal {N}(t)$ be a dynamic network at time step $t$. Thereby, $\mathcal {N}(t)$ has an its own \emph{dynamic Topcode-matrix} $T_{code}(\mathcal {N}(t))$ at each time step $t$, where $T_{code}(\mathcal {N}(t))=(X(t)~E(t)~Y(t))^{T}$ with $X(t)=(x^t_1$ $x^t_2$ $\cdots $ $x^t_{q(t)})$, $E(t)=(e^t_1~e^t_2~\cdots ~e^t_{q(t)})$ and $Y(t)=(y^t_1~y^t_2~\cdots ~y^t_{q(t)})$. Conversely, $T_{code}(\mathcal {N}(t))$ may corresponds two or more dynamic networks at time step $t$, and these dynamic networks match each other (Ref. \cite{Li-Gu-Dullien-Vinyals-Kohli-arXiv2019}). Obviously, there are interesting topics on dynamic Topcode-matrices.

\section{Connections between matrices and groups}

As to answer ``What applications do Topcode-matrices have?'' proposed in the previous section, we will try to join our matrices by graphical methods in this section. Conversely, some contents are colored/labelled graphs with the matrices introduced here. We will deal with these two problems: Label a graph $H$ by the elements of a graph group $\{F(G);\oplus\}=\{G_{x,i}$, $G_{e,j}$, $G_{y,k}:1\leq i,j,k\leq q\}$, conversely, the graph $H$ is made by the graph group $\{F(G);\oplus\}$. Thereby, we get $H$'s Topcode-matrix as follows
\begin{equation}\label{eqa:Topcode-matrix}
\centering
{
\begin{split}
M_{atrix}(H)&= \left(
\begin{array}{ccccc}
G_{x,1} & G_{x,2} & \cdots & G_{x,q}\\
G_{e,1} & G_{e,2} & \cdots & G_{e,q}\\
G_{y,1} & G_{y,2} & \cdots & G_{y,q}
\end{array}
\right)\\
&=
\left(\begin{array}{c}
G_X\\
G_E\\
G_Y
\end{array} \right)=(G_X~G_E~G_Y)^{T}
\end{split}}
\end{equation}where three graph vectors $G_X=(G_{x,1}~G_{x,2}~\cdots ~G_{x,q})$, $G_E=(G_{e,1}$ $G_{e,2}$ $\cdots ~G_{e,q})$ and $G_Y=(G_{y,1}~G_{y,2}~\cdots ~G_{y,q})$. We say $M_{atrix}(H)$ to be \emph{graphicable}.

\subsection{Connections between particular Topcode-matrices}

\begin{thm}\label{thm:equivalent-Topcode-matrices}
A Topcode-matrix $T_{code}$ Definition \ref{defn:Topcode-matrix} and corresponding a tree is a set-ordered graceful Topcode-matrix if and only if it is equivalent with

(1) a set-ordered odd-graceful Topcode-matrix.

(2) a set-ordered edge-magic total Topcode-matrix.

(3) a 6C-Topcode-matrix (resp. a 6C-odd-Topcode-matrix).
\end{thm}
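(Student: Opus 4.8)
The plan is to establish each of the three equivalences by exhibiting an explicit, invertible recolouring of the entries of $T_{code}$ that leaves the underlying tree $T$ fixed (it never changes which two numbers are the ends $x_i,y_i$ of a column, only their values), and then checking the relevant Cond-$*$ and Comp-$*$ restrictions by direct substitution. The starting point is a normal form: if $T_{code}=(X~E~Y)^{T}_{3\times q}$ corresponds to a tree and is a set-ordered graceful Topcode-matrix, then Theorem~\ref{thm:tree-equivalent-Topcode-matrices} gives $|(XY)^*|=q+1$; since Cond-\ref{asparae:graceful-no-same} forces $(XY)^*\subseteq[0,q]$, in fact $(XY)^*=[0,q]$, and Cond-\ref{asparae:set-ordered} splits it as $X^*=[0,s-1]$, $Y^*=[s,q]$ with $s=|X^*|$, while Cond-\ref{asparae:graceful} gives $E^*=[1,q]$ and, after an XY-exchanging operation if needed so that $x_i<y_i$ in every column, $e_i=y_i-x_i$. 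All verifications below are carried out on this canonical form; in the reverse directions the first step will be a matching normal-form lemma for the target class, after which the substitution is simply undone.

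For (1), apply $x\mapsto 2x$ to the $X$-row and $y\mapsto 2y-1$ to the $Y$-row; a column with $x_i<y_i$ then acquires $e$-entry $(2y_i-1)-2x_i=2e_i-1$, so the new $e$-row ranges over $[1,2q-1]^o$, the new $v$-entries lie in $[0,2q-1]$, and set-orderedness persists because $2(s-1)<2s-1$; this is a set-ordered odd-graceful Topcode-matrix. Conversely, given a set-ordered odd-graceful Topcode-matrix of a tree $T$: connectedness forces all $X$-entries to have one fixed parity and all $Y$-entries the other (any $X$--$X$ path in $T$ has even length, so its endpoints share parity), and the presence of the edge value $2q-1$ forces $0\in X^*$ and $2q-1\in Y^*$, hence the $X$-entries are even and the $Y$-entries odd; halving the $X$-row and applying $y\mapsto(y+1)/2$ to the $Y$-row recovers a set-ordered graceful Topcode-matrix. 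This is the tree case of the classical graceful/odd-graceful correspondence (Definition~\ref{defn:set-ordered-odd-graceful-labelling}, \cite{Gallian2018}), illustrated by Fig.~\ref{fig:example-equivalent-1}(a)(b).

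For (2), from the canonical form put $\overline{x}_i=x_i+1$, $\overline{y}_i=q+s+1-y_i$, $\overline{e}_i=e_i+q+1$. Then $\overline{X}^*=[1,s]$, $\overline{Y}^*=[s+1,q+1]$, so with $p=q+1$ one has $(\overline{X}\,\overline{Y})^*=[1,p]$ and $\overline{E}^*=[p+1,p+q]$, giving Cond-\ref{asparae:edge-magic-total-1}, and $\overline{x}_i+\overline{e}_i+\overline{y}_i=(x_i+1)+(e_i+q+1)+(q+s+1-y_i)=2q+s+3$ is constant because $y_i-x_i=e_i$, giving Cond-\ref{asparae:edge-magic-total-2}; with Cond-\ref{asparae:set-ordered} this is a set-ordered (indeed super) edge-magic total Topcode-matrix. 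For the converse, a set-ordered edge-magic total Topcode-matrix of a tree satisfies $(XY)^*\cup E^*=[1,p+q]$ with the two sets disjoint of sizes $p$ and $q$; using the set-ordered hypothesis together with the constant-sum condition one shows it must in fact be super edge-magic, with $(XY)^*=[1,p]$ and $E^*=[p+1,p+q]$, and then the inverse substitution yields a set-ordered graceful Topcode-matrix. See Fig.~\ref{fig:example-equivalent-1}(a)(c).

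For (3), keep the difference data of the canonical form but push the vertices above the edges while complementing the $e$-row: put $\widehat{x}_i=x_i+q+1$, $\widehat{y}_i=y_i+q+1$, $\widehat{e}_i=q+1-e_i$. Then $(\widehat{X}\,\widehat{Y})^*=[q+1,2q+1]$ and $\widehat{E}^*=[1,q]$, so $(\widehat{X}\,\widehat{Y})^*\cup\widehat{E}^*=[1,p+q]$ with $\min(\widehat{X}\,\widehat{Y})^*>\max\widehat{E}^*$, yielding Comp-\ref{asparae:total-graceful} and Comp-\ref{asparae:EV-ordered}; also $\widehat{e}_i+|\widehat{x}_i-\widehat{y}_i|=(q+1-e_i)+e_i=q+1$, yielding Comp-\ref{asparae:edge-plus-difference}. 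Pairing column $i$ with the unique column $j$ with $e_j=q+1-e_i$ gives $\widehat{e}_i=|\widehat{x}_j-\widehat{y}_j|$ (Comp-\ref{asparae:relaxed}) and, writing $s(\widehat{e}_i)=|\widehat{x}_i-\widehat{y}_i|-\widehat{e}_i=2e_i-q-1$, the balance $s(\widehat{e}_i)+s(\widehat{e}_j)=0$ (Comp-\ref{asparae:ee-balanced}); finally $\widehat{E}^*\cup(\widehat{X}\,\widehat{Y})^*=[1,2q+1]$ is symmetric about $q+1=\lfloor(|(XY)^*|+q+1)/2\rfloor$, so taking $k''=2q+2$ pairs each $\widehat{e}_i\in[1,q]$ with $2q+2-\widehat{e}_i\in(\widehat{X}\,\widehat{Y})^*$ and conversely, yielding Comp-\ref{asparae:ve-matching}. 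Together with Cond-\ref{asparae:set-ordered} this is a $6$C-Topcode-matrix, and composing first with the map of (1) (so that the differences run over $[1,2q-1]^o$) gives the odd-$6$C version. The main obstacle will be this third case together with the reverse directions: one must meet all seven Comp/Cond-restrictions of the $6$C-matrix simultaneously, and, going backwards, must first extract the canonical form above from a $6$C-matrix (or an edge-magic total matrix) of a tree --- ruling out the spurious alternatives in the disjunctive conditions Comp-\ref{asparae:EV-ordered} and Comp-\ref{asparae:ee-balanced} by means of the tree hypothesis and the disjoint-union count --- before the explicit substitutions can be inverted.
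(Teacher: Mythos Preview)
Your approach is correct and essentially the same as the paper's: explicit invertible recolourings of the $X,E,Y$ rows, followed by direct verification of the relevant Cond-/Comp- restrictions, with the converse obtained by inverting the substitution. The only noteworthy variant is in (3): the paper shifts vertices down and edges up ($x^3_i=x_i+1$, $y^3_i=y_i+1$, $e^3_i=e_{q-i+1}+p$, so $\max(XY)^*<\min E^*$), whereas you shift vertices up and complement edges ($\widehat{x}_i=x_i+q+1$, $\widehat{e}_i=q+1-e_i$, so $\min(XY)^*>\max E^*$); both hit a branch of the disjunctive EV-ordered condition and the remaining $6$C checks go through identically, and your parity argument in the converse of (1) and your flagging of the super-edge-magic reduction in (2) are in fact more careful than the paper, which simply asserts these steps.
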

\begin{proof} Let $T_{code}=(X~E~Y)^{T}_{3\times q}$ be a set-ordered graceful Topcode-matrix, where \emph{v-vector} $X=(x_1 ~ x_2 ~ \cdots ~x_q)$ and \emph{v-vector} $Y=(y_1 ~ y_2 ~\cdots ~ y_q)$ hold $\max X< \min Y$ true such that $1\leq e_i=y_i-x_i\leq q$, and moreover $E=(e_1 ~ e_2 ~ \cdots ~e_q)=(1 ~ 2 ~ \cdots ~q)$. For the purpose of statement, we write $\{x^k_i:i\in[1,q]\}=X^k=(x^k_1 ~ x^k_2 ~ \cdots ~x^k_q)$, $\{e^k_i:i\in[1,q]\}=E^k=(e^k_1 ~ e^k_2 ~ \cdots ~e^k_q)$ and $\{y^k_i:i\in[1,q]\}=Y^k=(y^k_1 ~ y^k_2 ~ \cdots ~y^k_q)$ with $k\geq 0$, where $X^0=X$, $E^0=E$ and $Y^0=Y$. By the property of a set-ordered graceful Topcode-matrix, we have $(XY)^*=\{x_1,x_2,\dots,x_s\}\cup \{y_1,y_2,\dots,y_t\}$ with $s+t=p$, and moreover $0=x_1<x_i<x_{i+1}\leq s-1$ for $i\in[2,s-1]$ and $s\leq y_j<y_{j+1}<y_t=p$ for $j\in[1,t-2]$. We have $y_{t-j+1}=s+p-y_j$ with $j\in[1,t]$

\vskip 0.4cm

\textbf{Proof of claim (1). }\emph{The proof of ``if''.} We make a set-ordered odd-graceful Topcode-matrix $T^1_{code}=(X^1$ $E^1$ $Y^1)^{T}_{3\times q}$ in the way: For $i\in [1,q]$, set up $x^1_i=2x_i$ for $x^1_i\in X^1$ and $y^1_i=2y_i-1$ for $y^1_i\in Y^1$, as well as $e^1_i=|y^1_i-x^1_i|$. Clearly, $\max X^1< \min Y^1$, and $1\leq e^1_i=y^1_i-x^1_i$, such that $e^1_i=y^1_i-x^1_i=2(y_i-x_i)-1\in [1,2q-1]^o$. We claim that $T^1_{code}$ is a set-ordered odd-graceful Topcode-matrix. Write this procedure as $F_1$, so we have $T^1_{code}=F_1(T_{code})$, a transformation on two Topcode-matrices.

\emph{The proof of ``only if''.} Suppose that $T^1_{code}=(X^1$ $E^1$ $Y^1)^{T}_{3\times q}$ is a set-ordered odd-graceful Topcode-matrix with $X^1=\{x^1_i:i\in[1,q]\}$, $Y^1=\{y^1_i:i\in[1,q]\}$ and $e^1_i=y^1_i-x^1_i\in E^1=[1,2q-1]^o$, here $\max X^1< \min Y^1$. So, each $x^1_i$ must be even, and each $y^1_i$ must be odd. We make a transformation $f$: $x_i=\frac{1}{2}x^1_i$ and $y_i=\frac{1}{2}(y^1_i+1)$. Then we get $x_i=\frac{1}{2}x^1_i\in X=(x_1 ~ x_2 ~ \cdots ~x_q)$ and $y_i=\frac{1}{2}(y^1_i+1)\in Y=(y_1 ~ y_2 ~\cdots ~ y_q)$, as well as
$${
\begin{split}
e_i&=y_i-x_i=\frac{1}{2}(y^1_i+1)-\frac{1}{2}x^1_i\\
&=\frac{1}{2}(y^1_i+1-x^1_i)\in [1,q],
\end{split}}$$ which means $E=(1 ~ 2 ~ \cdots ~q)$, so $T_{code}=(X~E~Y)^{T}_{3\times q}$ is a set-ordered graceful Topcode-matrix.

\vskip 0.4cm

\textbf{Proof of claim (2). }\emph{Necessary.} We have a matrix $T^2_{code}=(X^2$ $E^2$ $Y^2)^{T}_{3\times q}$ defined as: $x^2_i=x_i+1$ for $x^2_i\in X^2$ and $i\in [1,s]$, and $y^2_j=y_{t-j+1}+1$ for $y^2_i\in Y^2$ with $j\in [1,t]$, as well as $e^2_i=e_i+p$ with $i\in [1,q]$. Clearly, $\max X^2=\max \{x^2_i:i\in [1,q]\}< \min Y^2=\min \{y^2_i:i\in [1,q]\}$, and
$${
\begin{split}
x^2_i+ e^2_i+y^2_i&=x_i+1+ e_i+p+y_{t-j+1}+1\\
&=x_i+1+y_i-x_i+p+s+p-y_i+1\\
&=s+2p+2
\end{split}}$$ a constant, so we claim that $T^2_{code}$ to be a set-ordered edge-magic total Topcode-matrix (also, a super edge-magic total Topcode-matrix). Write this transformation as $F_2$, so $T^2_{code}=F_2(T_{code})$.

\emph{Sufficiency.} Let $T^2_{code}=(X^2$ $E^2$ $Y^2)^{T}_{3\times q}$ be a set-ordered edge-magic total Topcode-matrix, and $x^2_i+ e^2_i+y^2_i=s+2p+2$ for $i\in [1,q]$, and $E^2=[p+1,p+q]$, where $(X^2Y^2)^*$ is the set of all different numbers of $X^2$ and $Y^2$, and $(X^2Y^2)^*=\{x^2_1,x^2_2,\dots,x^2_s\}\cup \{y^2_1,y^2_2,\dots,y^2_t\}$ with $p=s+t$. Notice that $T^2_{code}$ holds $\max X^2=\max \{x^2_i:i\in [1,q]\}< \min Y^2=\min \{y^2_i:i\in [1,q]\}$. Now, we set $x_i=x^2_i-1$ with $i\in [1,s]$ and $y_j=y^2_{t-j+1}-1$ with $j\in [1,t]$, and set $e_i=e^2_i-p$ with $i\in [1,q]$. Then we get a matrix $T_{code}=(X$ $E$ $Y)^{T}_{3\times q}$. Clearly, $\{e_i:i\in [1,q]\}=[1,q]$ and furthermore
$${
\begin{split}
e_i&=e^2_i-p=(s+2p+2)-(x^2_i+y^2_i)-p\\
&=s+p+2-(x_i+1+s+p-y^2_{t-i+1}+1)\\
&=y_i-x_i
\end{split}}$$ and $E^*=[1,q]$. Thereby, $T_{code}$ is a set-ordered graceful Topcode-matrix, as desired.

\vskip 0.4cm

\textbf{Proof of claim (3). }\emph{The proof of ``if''.} A Topcode-matrix $T^3_{code}$ is a \emph{6C-Topcode-matrix} if there are the following facts:

(0) $(X^3Y^3)^*\cup E^3=[1,p+q]$.

\emph{Solution. } We, for $i\in [1,q]$, set
\begin{equation}\label{eqa:6c-matrix-vs-set-ordered-graceful-matrix-11}
x^3_i=x_i+1,\quad y^3_i=y_i+1
\end{equation}
which indicates $(X^3Y^3)^*=[1,p]$; and
\begin{equation}\label{eqa:6c-matrix-vs-set-ordered-graceful-matrix-22}
e^3_i=e_{q-i+1}+p,
\end{equation}
which means $E^3=[p+1,p+q]$.

(1) (e-magic) There exists a constant $k$ such that $e^3_i+|x^3_i-y^3_i|=k$ for $i\in [1,q]$.

\emph{Solution. } We have
$${
\begin{split}
e^3_i+|x^3_i-y^3_i|&=e_{q-i+1}+p+y_i+1-(x_i+1)\\
&=e_{q-i+1}+p+e_i\\
&=p+q+1,
\end{split}}$$
 since $e_i+e_{q-i+1}=q+1$.

(2) $e^3_i=|x^3_j-y^3_j|$ or $e^3_i=2M-|x^3_j-y^3_j|$ for each $i\in [1,q]$ and some $j\in [1,q]$, and $M=\theta(q,|(XY)^*|)$.

\emph{Solution. } From
$${
\begin{split}
e^3_i&=e_{q-i+1}+p=p+q+1-e_i\\
&=p+1+e_j=|(X^3Y^3)^*|+1+y_j-x_j,
\end{split}}$$
this holds true.

(3) (ee-balanced) Let $s(e^3_i)=|x^3_i-y^3_i|-e^3_i$ for $i\in [1,q]$. There exists a constant $k'$ such that each $e^3_i$ with $i\in [1,q]$ matches with another $e^3_j$ holding $s(e^3_i)+s(e^3_j)=k'$ (or $2(q+|(X^3Y^3)^*|)+s(e^3_i)+s(e^3_j)=k'$, or $(|(X^3Y^3)^*|+q+1)+s(e^3_i)+s(e^3_j)=k'$) true.

\emph{Solution. } We have
$${
\begin{split}
s(e^3_i)&=|x^3_i-y^3_i|-e^3_i=y_i-x_i-(e_{q-i+1}+p)\\
&=e_i-(q+1-e_i+p)\\
&=2e_i-(p+q+1),
\end{split}}$$
and
$${
\begin{split}
s(e^3_{q-i+1})&=|x^3_{q-i+1}-y^3_{q-i+1}|-e^3_{q-i+1}\\
&=y_{q-i+1}-x_{q-i+1}-(e_i+p)\\
&=e_{q-i+1}-e_i-p\\
&=q+1-2e_i-p,
\end{split}}$$
as well as $s(e^3_i)+s(e^3_{q-i+1})=q+1-p-(p+q+1)=-2p$.

(4) (EV-ordered) $\min (X^3Y^3)^*>\max E^3$, or $\max (X^3Y^3)^*$ $<\min E^3$, or $(X^3Y^3)^*$ $\subseteq E^3$, or $E^3$ $\subseteq(X^3Y^3)^*$, or $(X^3Y^3)^*$ is an odd-set and $E^3$ is an even-set.

\emph{Solution. } Clearly, $\max (X^3Y^3)^*=p<p+1=\min E^3$.

(5) (ve-matching) there exists a constant $k''$ such that each $e^3_i\in E^3$ matches with $w\in (X^3Y^3)^*$ such that $e^3_i+w=k''$, and each vertex $x^3_i\in (X^3Y^3)^*$ matches with $e^3_t\in E^3$ such that $x^3_i+e^3_t=k''$.

\emph{Solution. } Notice that $(X^3Y^3)^*=[1,p]$, $E^3=[1,q]$. Take $w=e_i$, $e^3_i+w=e_{q-i+1}+p+e_i=q+1+p$, also, we have $w$ matches with $e^3_t=e^3_i$.

(6) $\max X^3<\min Y^3$ holds true.

\emph{Solution. } It is obvious, since $x^3_i=x_i+1<y^3_i=y_i+1$.

Write the above transformation as $F_3$, we get $T^3_{code}=F_3(T_{code})$.

\vskip 0.2cm

\emph{The proof of ``only if''.} Since a 6C-Topcode-matrix $T^3_{code}=(X^3~E^3~Y^3)^{T}_{3\times q}$ holds the facts shown in the proof of ``if'' above, so we, by (\ref{eqa:6c-matrix-vs-set-ordered-graceful-matrix-11}) and (\ref{eqa:6c-matrix-vs-set-ordered-graceful-matrix-22}), can set $x_i=x^3_i-1$, $y_i=y^3_i-1$, so $(XY)^*=[0,p-1]$ and $e_i=e^3_{q-i+1}-p=e_i+p-p$, so $E^*=[1,q]$; and $\max X<\min Y$ from $x_i=x^3_i-1<y^3_i-1=y_i$. Thereby, $T^3_{code}$ is equivalent to a set-ordered graceful Topcode-matrix $T_{code}=(X~E~Y)^{T}_{3\times q}$.

The proof of ``a 6C-odd-Topcode-matrix'' is similar with that of ``a 6C-Topcode-matrix'', so we omit it.
\end{proof}

\begin{rem} \label{rem:xxxxx}
Let $G(T_{code})$ be the set of graphs $G_1,G_2,\dots, G_m$ derived by a Topcode-matrix $T_{code}$ defined in Definition \ref{defn:Topcode-matrix}.

1. If $T_{code}$ is equivalent to each of Topcode-matrices $T^i_{code}$ defined in Definition \ref{defn:Topcode-matrix} with $G(T^i_{code})=\{G^i_1,G^i_2,\dots, G^i_m\}$ and $i\in [1,n]$, then there exists a transformation $f_i$ such that $G^i_j=f_i(G_j)$ with $j\in [1,m]$. Thereby, we claim that the set $G(T_{code})$ is equivalent to each set $G(T^i_{code})$ according to $T^i_{code}=f_i(T_{code})$ with $i\in [1,n]$.

2. Refereing \cite{Li-Gu-Dullien-Vinyals-Kohli-arXiv2019}, it shows that each graph $G_j$ matches with the graph $G^i_j$ from $T^i_{code}=f_i(T_{code})$, $j\in [1,m]$.

3. Each graph of $G(T_{code})$ is similar with others of $G(T_{code})$ by the Topcode-matrix $T_{code}$ of view (Ref. \cite{Li-Gu-Dullien-Vinyals-Kohli-arXiv2019}).

4. If $H^*\in G(T_{code})$ holds that $|V(H^*)|\leq |V(G_i)|$, then $H^*$ is unique.
\end{rem}

\subsection{Groups to graph networks}

Translating groups into ``graph networks'' here are really topological connections between the groups, rather than ``Graph networks'' investigated in \cite{Battaglia-27-authors-arXiv1806-01261v2}. Using the additive v-operation defined in (\ref{eqa:group-operation-1}) and (\ref{eqa:group-operation-2}) for Topcode$^+$-matrix groups, we can mapping all elements of a Topcode$^+$-matrix group to some graph network $H$ with evaluated colorings on its vertices and edges, see an example is shown in Fig.\ref{fig:group-vs-graphical}, in which $H_{\textrm{under}}$ is the underlying graph of the graph network $H$, and each of (a), (b) and (c) is $H$, where (a) and (b) are called \emph{ve-graceful evaluated colorings} of $H$, since there exists a function $f:V(H)\rightarrow F_6=\{T_i:i\in [1,6]\}$ such that $f(V(H))=F_6$ and $f(E(H))=F_6\setminus \{T_6\}$; however, (c) has no vertices were labelled with $T_1$ and $T_3$, so we call (c) an \emph{e-graceful evaluated coloring} of $H$. We can show the following result:

\begin{thm}\label{thm:e-graceful-weighted-value}
Given a tree $T$ of $n$ vertices and an every-zero Topcode$^+$-matrix group $\{F_n;\oplus\}$, then $T$ admits an e-graceful evaluated coloring on $\{F_n;\oplus\}$.
\end{thm}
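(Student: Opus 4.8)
The plan is to unwind the definitions down to a statement about the cyclic group $\mathbb Z_n$ and then produce the coloring by a single greedy pass over a search ordering of the edges of $T$. After choosing a zero $T^k$ of the every-zero group, I would identify $\{F_n;\oplus\}$ with $(\mathbb Z_n,+)$ through the isomorphism $T^i\mapsto i-k\pmod n$ (indices taken in $[1,n]$ with $n\equiv 0$); under it the operation $\oplus_k$ becomes ordinary addition modulo $n$, $T^k$ becomes $0$, and an ``e-graceful evaluated coloring of $T$ on $\{F_n;\oplus\}$'' translates to: a map $a\colon V(T)\to\mathbb Z_n$ such that the induced edge values $a(u)+a(v)\pmod n$, $uv\in E(T)$, are pairwise distinct and avoid $0$. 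Since a tree on $n$ vertices has exactly $q=n-1$ edges, ``pairwise distinct and avoiding $0$'' is the same as ``equal, as a set, to $\mathbb Z_n\setminus\{0\}$'', i.e. to $\{T_1,\dots,T_{n-1}\}=F_n\setminus\{T^k\}$ back in the group.

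Then I would fix a root $r\in V(T)$ and order $E(T)$ as $e_1,\dots,e_{n-1}$ so that every initial segment $\{e_1,\dots,e_j\}$ spans a subtree containing $r$ — a breadth-first (or depth-first) discovery order does this — so each $e_j=u_jv_j$ has a ``parent'' end $u_j$ occurring among $e_1,\dots,e_{j-1}$ or equal to $r$, and a new ``child'' end $v_j$. Next, pick any enumeration $\ell_1,\dots,\ell_{n-1}$ of $\mathbb Z_n\setminus\{0\}$ as target edge values, set $a(r):=0$, and process the edges in order: with $a(u_j)$ already defined, put $a(v_j):=\ell_j-a(u_j)\pmod n$. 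Each non-root vertex is the child end of exactly one $e_j$, so $a$ is well defined on all of $V(T)$; and by construction the value of $e_j$ is $a(u_j)+a(v_j)\equiv\ell_j$, so the edge values are exactly $\{\ell_1,\dots,\ell_{n-1}\}=\mathbb Z_n\setminus\{0\}$, all distinct. Transporting $a$ back through the isomorphism yields $f\colon V(T)\to F_n$ with $f(uv)=f(u)\oplus f(v)$ and $f(E(T))=F_n\setminus\{T^k\}$, which is precisely an e-graceful evaluated coloring.

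I do not expect a genuine obstacle here; the point to keep in view is that only the \emph{e}-graceful (not the \emph{ve}-graceful) conclusion is asked for, so $f$ need not be injective on $V(T)$ — repeated vertex colors are permitted, which is exactly the \emph{splitting} situation recalled in the preliminaries. That slack is what keeps the greedy step from ever failing: every new vertex gets a forced value with no consistency condition to verify, and $a(u_j)+a(v_j)\equiv\ell_j\neq 0$ automatically keeps the zero off the edges. Had the theorem required every group element to appear on a vertex, one would be facing an elegant/harmonious-type labelling question for trees, which is hard and open in general; so the only care needed in writing this up is to record that non-injectivity is used, and to keep the modular bookkeeping consistent with whichever element of the every-zero group plays the role of the zero.
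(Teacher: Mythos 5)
Your core construction is correct, and it takes a genuinely different route from the paper: the paper states Theorem \ref{thm:e-graceful-weighted-value} with no proof at all, and the only argument it supplies in this direction is for Theorem \ref{thm:ve-graceful-evaluated-coloring}, where the extra hypothesis that $T$ admits a set-ordered graceful labelling (Definition \ref{defn:set-ordered-odd-graceful-labelling}) is transferred into group indices via $g(x_i)=1+f(x_i)$, $g(y_j)=1+f(y_{t-j+1})$ and reduction of edge indices modulo $p-1$. Your rooted greedy pass needs no labelling hypothesis: ordering $E(T)$ so that each edge attaches exactly one new vertex makes that vertex's value forced by the prescribed edge value, so any enumeration of $\mathbb{Z}_n\setminus\{0\}$ is realized on the $n-1$ edges, for every tree. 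What the paper's heavier route buys is the stronger ve-graceful conclusion (all of $F_n$ appearing on vertices); what yours buys is full generality, and an actual proof where the paper offers only an assertion. Your mod-$n$ reading of the operation $\oplus_k$ also matches the paper's worked example in Fig.\ref{fig:group-vs-graphical}.

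Two bookkeeping points keep your last sentence from being literally true against the paper's own definition (the remark following the theorem). First, the edge color set is required there to be $F_n\setminus\{T_n\}$, i.e.\ the missing element is specifically $T_n$, not the chosen zero $T^k$; since the group is every-zero you may simply take $k=n$, but you should say so. Second, and more substantively, ``e-graceful'' is defined there by demanding that the vertex color set be a \emph{proper} subset of $F_n$, and your greedy pass does not guarantee a missing vertex color: for the star $K_{1,n-1}$ rooted at its center, every enumeration colors the $n$ vertices with all $n$ group elements. So ``every new vertex gets a forced value with no consistency condition to verify'' is true for realizing the edge values, but the claim ``which is precisely an e-graceful evaluated coloring'' asserts a condition you never checked. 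The repair is easy for $n\geq 3$ (e.g.\ root the star at a leaf and pick the first edge value $\ell_1$ with $2\ell_1\not\equiv 0\pmod n$; in general, if the vertex coloring comes out bijective, swapping the values on two leaf edges with distinct parents creates a repeated vertex color), or one can simply record that a surjective outcome gives the stronger ve-graceful coloring; note also that for $n=2$ the proper-subset reading cannot be met at all, which is a defect of the paper's definition rather than of your argument, but it is worth flagging rather than silently passing over.
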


\begin{rem}\label{eqa:e-graceful-weighted-value}
\begin{asparaenum}[1. ]
\item Theorem \ref{thm:e-graceful-weighted-value} tells us that there are two or more trees of $n$ vertices having e-graceful evaluated colorings based on an every-zero Topcode$^+$-matrix group $\{F_n;\oplus\}$. Conversely, there are two or more every-zero Topcode$^+$-matrix groups such that a tree has two or more e-graceful evaluated colorings based on them.
\item ``Graph network'' used here is not equivalent with that defined in \cite{Battaglia-27-authors-arXiv1806-01261v2}, roughly speaking, our graph network has: (i) Attribute: properties that can be encoded as a vector, set, or even another graph;
(ii) Attributed: edges and vertices have attributes associated with them; (iii) Global attribute: a graph-level attribute.
\item A graph network $H$ with its underlying graph to be a tree of $n$ vertices admits a \emph{ve-graceful evaluated coloring} $f$ if there is a Topcode$^+$-matrix group $\{F_n;\oplus\}$ (resp. a graph group, or a number string group), where $F_n=\{T_i:i\in[1,n]\}$, such that $f:V(H)\rightarrow \{F_n;\oplus\}$, and the vertex coloring set $f(V(H))=F_n$ and the edge coloring set $f(E(H))=F_n\setminus \{T_n\}$. If $f(V(H))$ is a proper set of $F_n$ and $f(E(H))=F_n\setminus \{T_n\}$, we call $f$ an \emph{e-graceful evaluated coloring}.

\quad In Fig.\ref{fig:group-vs-graphical}, the graph network (a) has its own Topcode-matrix $T_{code}(a)$ as follows
\begin{equation}\label{eqa:Topcode-matrix-in-matrices-a}
\centering
{
\begin{split}
T_{code}(a)= \left(
\begin{array}{ccccc}
T_{3} & T_{1} &T_{6} & T_{1}& T_{1}\\
T_{1} & T_{2} &T_{3} & T_{4}& T_{5}\\
T_{5} & T_{2} &T_{4} & T_{4}& T_{5}
\end{array}
\right)
\end{split}}
\end{equation} and moreover (b) and (c) have their own Topcode-matrices $T_{code}(b)$ and $T_{code}(c)$ as follows
\begin{equation}\label{eqa:Topcode-matrix-in-matrices-b}
\centering
{
\begin{split}
T_{code}(b)= \left(
\begin{array}{ccccc}
T_{6} & T_{3} &T_{5} & T_{3}& T_{3}\\
T_{1} & T_{2} &T_{3} & T_{4}& T_{5}\\
T_{4} & T_{2} &T_{1} & T_{4}& T_{5}
\end{array}
\right)
\end{split}}
\end{equation}
and
\begin{equation}\label{eqa:Topcode-matrix-in-matrices-c}
\centering
{
\begin{split}
T_{code}(c)= \left(
\begin{array}{ccccc}
T_{2} & T_{6} &T_{5} & T_{6}& T_{6}\\
T_{1} & T_{2} &T_{3} & T_{4}& T_{5}\\
T_{5} & T_{2} &T_{4} & T_{4}& T_{5}
\end{array}
\right)
\end{split}}
\end{equation}
Notice that all elements of three Topcode-matrices $T_{code}(a)$, $T_{code}(b)$ and $T_{code}(c)$ are matrices too, so these three are Topcode-matrices in Topcode-matrices. In general, the elements of a Topcode-matrix defined here may be number strings, or graphs, or Topsnut-gpws, or Chinese characters, \emph{etc.}, this is the reason we call them \emph{pan-matrices}.

\item We can generalize the definition of ve-graceful evaluated coloring to over one hundred colorings/labellings in \cite{Gallian2018}. For example, if there exists a coloring $h$ on a tree $G$ of $n$ vertices and Topcode$^+$-matrix group $\{F_{2n};\oplus\}$ with $F_{2n}=\{G_i:i\in [1,2n]\}$, such that $h$: $V(T)\rightarrow F_{2n}$ holds the vertex coloring set $h(V(G))\subset F_{2n}$ with $|h(V(G))|=n$ and the edge coloring set $h(E(G))=\{G_1$, $G_{3}$, $G_{5}$, $\dots $, $G_{2n-1}\}$, then we call $h$ a \emph{ve-odd-graceful evaluated coloring}, and moreover $h$ is called an \emph{e-odd-graceful evaluated coloring} if $|h(V(G))|<n$ and $h(E(G))=\{G_1$, $G_{3}$, $G_{5}$, $\dots $, $G_{2n-1}\}$.
\end{asparaenum}
\end{rem}

\begin{figure}[h]
\centering
\includegraphics[width=7.6cm]{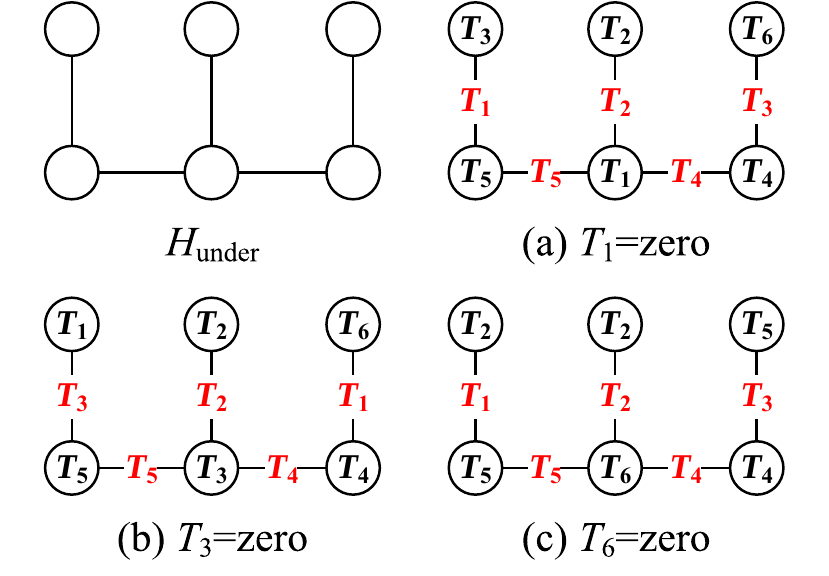}\\
\caption{\label{fig:group-vs-graphical} {\small Three graph networks (a), (b) and (c) are based on an every-zero Topcode$^+$-matrix group $\{T_i:i\in [1,6]\}$ shown in Fig.\ref{fig:topcode-matrix-group} by the additive v-operation defined in (\ref{eqa:group-operation-1}) and (\ref{eqa:group-operation-2}).}}
\end{figure}

\begin{thm}\label{thm:ve-graceful-evaluated-coloring}
Let $\{F_p(G),\oplus\}$ be an every-zero $\varepsilon$-group, where $F_p(G)=\{G_1,G_2,\dots, G_p\}$, and $\varepsilon$-group is one of every-zero Topcode$^+$-matrix groups (Topcode-groups), every-zero number string groups, every-zero Topsnut-gpw groups and Hanzi-groups. If a tree $T$ of $p$ vertices admits a set-ordered graceful labelling defined in Definition \ref{defn:set-ordered-odd-graceful-labelling}, then $T$ admits a ve-graceful evaluated coloring on $\{F_p(G),\oplus\}$.
\end{thm}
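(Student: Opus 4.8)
The plan is to start from the given set-ordered graceful labelling $h$ of $T$ and manufacture the required vertex colouring by hand, choosing the ``zero'' of $\{F_p(G),\oplus\}$ so that the \emph{difference} rule of graceful labellings lines up with the \emph{additive} rule of the every-zero group. Since $h$ is graceful on a tree of $p$ vertices (so $q=p-1$), the $p$ vertex labels are distinct and lie in $[0,q]$, which forces $h(V(T))=[0,p-1]$; writing $(X,Y)$ for the bipartition and $s=|X|$, the set-ordered condition $\max h(X)<\min h(Y)$ of Definition \ref{defn:set-ordered-odd-graceful-labelling} then gives $h(X)=[0,s-1]$ and $h(Y)=[s,p-1]$, while the edge labels $|h(u)-h(v)|$ exhaust $[1,p-1]$.

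Next I would define $f:V(T)\to F_p(G)$ by $f(x)=G_{s-h(x)}$ for $x\in X$ and $f(y)=G_{h(y)+1}$ for $y\in Y$. Since $h$ is injective, $x\mapsto s-h(x)$ is injective with image $[1,s]$ and $y\mapsto h(y)+1$ is injective with image $[s+1,p]$; these ranges are disjoint and cover $[1,p]$, so $f$ is a bijection onto $F_p(G)$, i.e. $f(V(T))=F_p(G)$. I then fix the zero $G_{s+1}$ of the every-zero group and extend $f$ to edges by the evaluated rule $f(uv)=f(u)\oplus_{s+1}f(v)$. For an edge $xy$ with $x\in X$, $y\in Y$, the index arithmetic of $\oplus_{s+1}$ (taken mod $p$ on $\{1,\dots,p\}$) yields $(s-h(x))+(h(y)+1)-(s+1)=h(y)-h(x)$, and since $h(y)\ge s>s-1\ge h(x)$ this equals the graceful edge label $|h(x)-h(y)|\in[1,p-1]$, so no modular reduction occurs. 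As $xy$ ranges over $E(T)$ these values exhaust $[1,p-1]$, hence $f(E(T))=\{G_1,\dots,G_{p-1}\}=F_p(G)\setminus\{G_p\}$, which is precisely a ve-graceful evaluated colouring of $T$ on $\{F_p(G),\oplus\}$.

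Finally I would observe that the construction used only the group order $p$ and the index arithmetic modulo $p$ on $\{1,\dots,p\}$, which is common to every-zero Topcode$^+$-matrix groups, every-zero Topsnut-gpw groups, every-zero number string groups and Hanzi-groups alike; via the correspondences in Remark \ref{eqa:group-Topcode-matrix} (every-zero graphic groups and every-zero number-string groups both arising from every-zero Topcode-matrix groups, Hanzi-groups being treated the same way) the same $f$ works verbatim for each type of $\varepsilon$-group, so a single argument covers all cases in the statement.

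The main obstacle is exactly the clash between the difference $|h(u)-h(v)|$ that defines graceful edge labels and the ``sum minus zero'' $\mathrm{idx}(u)+\mathrm{idx}(v)-k$ that defines the group operation: a naive re-indexing $v\mapsto G_{h(v)+1}$ fails because no single zero $k$ turns a sum into a difference. The resolution is to ``fold'' the smaller class via $x\mapsto s-h(x)$ while merely shifting the larger class via $y\mapsto h(y)+1$, after which the choice $k=s+1$ converts $\mathrm{idx}(x)+\mathrm{idx}(y)-k$ into $h(y)-h(x)$ on the nose; this is precisely where the set-ordered hypothesis (so that $X$ carries exactly $[0,s-1]$ and $Y$ exactly $[s,p-1]$) is indispensable, and the only point demanding real care is verifying that the two shifted index ranges $[1,s]$ and $[s+1,p]$ tile $[1,p]$.
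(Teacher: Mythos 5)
Your proposal is correct and follows essentially the same route as the paper: both start from the set-ordered graceful labelling with $h(X)=[0,s-1]$, $h(Y)=[s,p-1]$, re-index one side of the bipartition by a reversal and the other by a shift so that the every-zero group's index rule $i+j-k$ reproduces the graceful differences, giving vertex colors all of $F_p(G)$ and edge colors $F_p(G)\setminus\{G_p\}$. The only (harmless) difference is cosmetic: you fold the $X$-side and fix the zero $k=s+1$ so no modular reduction is needed, whereas the paper folds the $Y$-side, keeps a generic zero $k$, and verifies the edge-index set $[1,p-1]$ by a reduction modulo $p-1$.
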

\begin{proof} Let $T$ be a tree having $p$ vertices and the bipartition $(X,Y)$, where $X=\{x_i:~i\in [1,s]\}$ and
$Y=\{y_j:j\in [1,t]\}$ with $s+t=p$. By the hypothesis of the theorem, $T$ admits a set-ordered graceful labelling $f$ with $f(x_i)=i-1$ for $i\in [1,s]$ and $f(y_j)=s-1+j$ for $j\in [1,t]$, and $f(x_iy_j)=f(y_j)-f(x_i)=s+j-i-2$ for each edge $x_iy_j\in E(T)$. Clearly, $f(X)<f(Y)$, and moreover $f(y_j)+f(y_{t-j+1})=2s+t-1=s+p-1$ for $j\in [1,t]$.
We define a labelling $g$ for $T$ as follows:
$g(x_i)=1+f(x_i)$ for $i\in [1,s]$, $g(y_j)=1+f(y_{t-j+1})$ for
$j\in [1,t]$. Notice that the tree $T$ has $|E(T)|=p-1$ edges, for each edge $x_iy_j\in E(T)$,
\begin{equation}\label{eqa:graph-group-trees}
{
\begin{split}
g(x_iy_j)=&g(x_i)+g(y_j)=2+f(x_i)+f(y_{t-j+1})\\
&=f(x_i)+s+p+1-f(y_j)\\
&=s+p+1-\big [f(y_j)-f(x_i)\big ]\\
&=s+p+1-f(x_iy_j),
\end{split}}
\end{equation} we obtain two sets $S_1=\{s+p,s+p-1,\dots
,s+p-1-(s-1),s+p+1-(s+1)\}$ and $S_2=\{p-1,p-2,\dots ,s+1\}$ from (\label{eqa:graph-group-trees}). Under modulo
$(p-1)$, the set $S_1$ distributes a set $S\,'_1=[1,s]$. Thereby,
$g(E(T))=\{g(x_iy_j)=g(x_i)+g(y_j)~ (\bmod ~p-1): x_iy_j\in
E(T)\}=[1,p-1]$. Moreover,

If $g(x_i)+g(y_j)-k\leq 0$, thus, we have a set $\{1-k,2-k,\dots ,0\}$, take modulo $(p-1)$, immediately, we get a set $\{p-k,p-k+1,\dots ,p-1\}$; if $g(x_i)+g(y_j)-k\geq 0$, then we get $\{1,2,3,\dots ,p-k\}$. Thereby,
\begin{equation}\label{eqa:ve-graceful-evaluated-coloring}
{
\begin{split}
\quad &\{g(x_iy_j)-k~ (\bmod ~p-1): x_iy_j\in E(T)\}\\
=&\{g(x_i)+g(y_j)-k~ (\bmod ~p-1): x_iy_j\in E(T)\}\\
=&[1,p-1].
\end{split}}
\end{equation}

We have a coloring $h: V(T)\rightarrow \{F(G),\oplus\}$, such that $h(x_i)=G_{g(x_i)}$, $h(y_j)=G_{g(y_j)}$, and such that $h(x_iy_j)=G_{g(x_iy_j)-k}\in \{F_p(G),\oplus\}$ by the operation (\ref{eqa:ve-graceful-evaluated-coloring}). So, $h$ is a ve-graceful evaluated coloring of $T$ on $\{F_p(G),\oplus\}$.
\end{proof}

\begin{figure}[h]
\centering
\includegraphics[width=8.6cm]{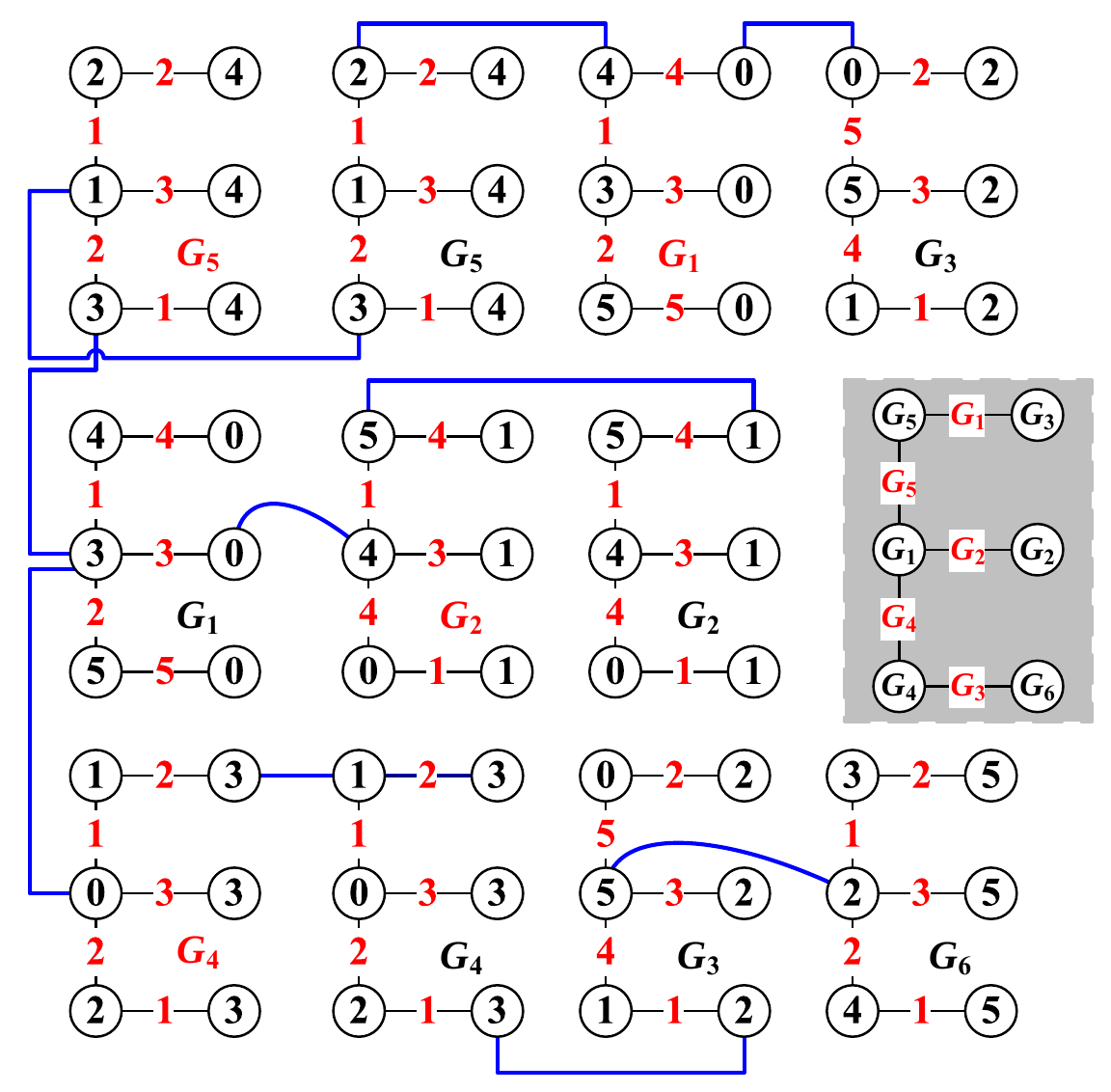}\\
\caption{\label{fig:label-with-graphs} {\small The underlying graph $H_{\textrm{under}}$ shown in Fig.\ref{fig:group-vs-graphical} is labelled with an every-zero graph group shown in Fig.\ref{fig:graph-group} to producing a Tosnut-gpw $G$ on gray rectangle, and $G$ is equivalent to $H$ shown in Fig.\ref{fig:group-vs-graphical} (a).}}
\end{figure}

\vskip 0.4cm

Observe the Tosnut-gpw $G$ shown in Fig.\ref{fig:label-with-graphs}, we can see there are two or more Tosnut-gpws like $G$ based on different blue joining-edges, so we collect these Tosnut-gpws like $G$ into a set $N_{graph}(G)$. It may be interesting to find some relationships between the graphs in $N_{graph}(G)$. By the similar method, we can show the following result:

\begin{thm}\label{thm:e-odd-graceful-weighted-value}
Given a tree $T$ of $n$ vertices and an every-zero Topcode$^+$-matrix group $\{F_{2n};\oplus\}$, then $T$ admits an e-odd-graceful evaluated coloring on $\{F_{2n};\oplus\}$. Moreover, if $T$ admits a set-ordered graceful labelling defined in Definition \ref{defn:set-ordered-odd-graceful-labelling}, then $T$ admits a ve-odd-graceful evaluated coloring on $\{F_{2n};\oplus\}$.
\end{thm}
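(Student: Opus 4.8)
The plan is to handle the two assertions separately, both times adapting the argument in the proof of Theorem~\ref{thm:ve-graceful-evaluated-coloring}: produce a suitable integer vertex-labelling, push it through the index map $i\mapsto G_i$ into the group, and read off the edge colors via $\oplus$, but now keeping everything inside a fixed parity class so that the induced edge indices come out odd.

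First I would prove the unconditional part (an e-odd-graceful evaluated coloring always exists) by a direct spanning-tree construction. Choose the zero of $\{F_{2n};\oplus\}$ to be an element $G_k$ with $k$ odd; then $G_a\oplus_k G_b=G_c$ with $c\equiv a+b-k\pmod{2n}$ has odd index precisely when $a$ and $b$ have the same parity, so giving every vertex an even-indexed color automatically forces every edge color to have odd index. Root $T$ and list its vertices $v_1,\dots,v_n$ so that each $v_i$ with $i\ge 2$ has a unique earlier neighbour $p(i)$, choosing a leaf for $v_n$. Fix $h(v_1)$ to be any even-indexed element; for $i=2,\dots,n-1$ prescribe $n-2$ pairwise-distinct odd targets $o_i$ and solve the single congruence $\mathrm{idx}\,h(v_i)\equiv o_i+k-\mathrm{idx}\,h(v_{p(i)})\pmod{2n}$, whose solution is again even; finally pick the last target $o_n$, from the odd indices still unused, so that the forced value $\mathrm{idx}\,h(v_n)\equiv o_n+k-\mathrm{idx}\,h(v_{p(n)})$ coincides with $\mathrm{idx}\,h(v_1)$ — possible because $o_n\mapsto o_n+k-\mathrm{idx}\,h(v_{p(n)})$ runs over all even residues mod $2n$ as $o_n$ runs over the odd ones. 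Then the $n-1$ edges of $T$ carry the $n-1$ distinct colors $G_{o_2},\dots,G_{o_n}$, arranged to be the prescribed odd-indexed set, while $h(v_1)=h(v_n)$ gives $|h(V(T))|<n$; hence $h$ is an e-odd-graceful evaluated coloring.

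For the second part, suppose $T$ has bipartition $(X,Y)$ with $X=\{x_i:i\in[1,s]\}$, $Y=\{y_j:j\in[1,t]\}$, $s+t=n$, and a set-ordered graceful labelling $f$ as in Definition~\ref{defn:set-ordered-odd-graceful-labelling}, so $f(x_i)=i-1$, $f(y_j)=s-1+j$, $f(x_iy_j)=f(y_j)-f(x_i)$ and $f(y_j)+f(y_{t-j+1})=s+n-1$. Following the graceful-to-odd-graceful constructions of Theorem~\ref{thm:equivalent-Topcode-matrices}, set $g(x_i)=2f(x_i)$ (even) and $g(y_j)=2f(y_{t-j+1})+1$ (odd); then $g$ is injective on $V(T)$, $\max g(X)<\min g(Y)$, and a short computation gives $g(x_i)+g(y_j)=2(n+s-1)+1-2f(x_iy_j)$, so the $n-1$ edge-sums form a block of $n-1$ consecutive odd integers. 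Reducing mod $2n$ they stay distinct (there being $n$ odd residues and only $n-1$ in use) and, after absorbing a suitable every-zero shift into the choice of zero $G_k$ with $k$ odd, they become exactly $1,3,\dots,2n-3$. Indexing by $h(v)=G_{g(v)+1}$ and reading edge colors off $\oplus_k$ then yields $h:V(T)\to\{F_{2n};\oplus\}$ with $|h(V(T))|=n$ and $h(E(T))=\{G_1,G_3,\dots,G_{2n-3}\}$, i.e. a ve-odd-graceful evaluated coloring.

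The spanning-tree ordering and the linear congruences are routine; the two delicate points are (i) in the first part, securing the strict inequality $|h(V(T))|<n$, which is why one end of the last edge is pre-aimed at the color of the root, and (ii) in the second part, the modular bookkeeping — keeping the $s$-dependent block of $n-1$ consecutive odd integers distinct mod $2n$ and translating it onto $\{1,3,\dots,2n-3\}$ while the vertex colors stay pairwise distinct. I expect (ii), together with the off-by-one between the $n$ vertices, the $n-1$ edges and the nominal index set $\{1,3,\dots,2n-1\}$, to be the spot needing most care; once the modulus is pinned at $2n$ and the zero is chosen with odd index, everything else falls out exactly as in Theorem~\ref{thm:ve-graceful-evaluated-coloring}.
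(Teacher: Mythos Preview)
The paper gives no proof for this theorem beyond the single phrase ``by the similar method,'' pointing back to Theorem~\ref{thm:ve-graceful-evaluated-coloring}. Your second part is exactly that adaptation and is the intended route, but you have a parity slip: with $g(x_i)$ even and $g(y_j)$ odd the sum $g(x_i)+g(y_j)$ is odd, so after your shift $h(v)=G_{g(v)+1}$ the edge index $(g(x_i)+1)+(g(y_j)+1)-k$ is odd precisely when $k$ is \emph{even}, not odd. Taking $k=2s+2$ lands the edge indices on $\{1,3,\dots,2n-3\}$; otherwise the argument is fine.

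Your first part (for which the paper offers nothing at all) contains a genuine gap. The required last target is forced to be $o_n\equiv \mathrm{idx}\,h(v_1)+\mathrm{idx}\,h(v_{p(n)})-k$, and your bijection remark only shows this value is odd, not that it avoids $o_2,\dots,o_{n-1}$. In fact it can fail: for the path $v_1\,$--$\,v_2\,$--$\,v_3$ rooted at the endpoint $v_1$ one computes $\mathrm{idx}\,h(v_2)=o_2+k-\mathrm{idx}\,h(v_1)$ and hence $o_3=\mathrm{idx}\,h(v_1)+\mathrm{idx}\,h(v_2)-k=o_2$ identically, so the two edges receive the same color whenever you force $h(v_3)=h(v_1)$. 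The repair is short but must be stated: root $T$ at a vertex adjacent to a leaf and take $v_n$ to be \emph{that} leaf, so that $p(n)=1$ and $o_n=2\,\mathrm{idx}\,h(v_1)-k$ depends only on the initial choices; then select $o_2,\dots,o_{n-1}$ afterwards from the remaining $n-1$ odd residues. With that adjustment (and the parity fix above) your construction goes through and actually supplies more than the paper does.
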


\subsection{Overall security mechanism to networks}

We can set up a \emph{protection mechanism} to a colored network $\mathcal {N}(t)$ with a graph group $\{F_p(G);\oplus\}$ as follows: Each vertex $w$ of $\mathcal {N}(t)$ has its own neighbor set $N_{ei}(w)=\{w_1,w_2,\dots ,w_d\}$, where $d=\textrm{deg} (w)$. If a vertex $u$ out of $\mathcal {N}(t)$ will join with the vertex $w$ of $\mathcal {N}(t)$ (i.e. $u$ wants to visit $w$), then $u$ must to get admitted permit $per(w_i)$ of each vertex $w_i\in N_{ei}(w)$ with $i\in [1,d]$. The set $\{per(w_i):w_i\in N_{ei}(w)\}$ is the key for the vertex $u$ out of $\mathcal {N}(t)$ visiting the vertex $w$ of $\mathcal {N}(t)$. We call the protection mechanism as an \emph{overall security mechanism} of $\mathcal {N}(t)$.

The methods to realize our overall security mechanism to encrypt networks are:

(i) Encrypting the vertices and edges of a network by various groups introduced here, such as every-zero number string groups, every-zero Topcode$^+$-matrix groups, every-zero Topcode$^-$-matrix groups, pan-Topcode-matrix groups and graph groups (Abelian additive groups on Topsnut-gpws), and so on. Using the elements of a group encrypts the vertices and edges of a network.

(ii) If a graph set $G(T_{code})$, in which each graph has its Topcode-matrix $T_{code}$, admits a connection between elements of $G(T_{code})$, so this connection and $G(T_{code})$ can be used to encrypt a network.

(iii) Use a group set $\mathcal {F}$ with its elements are various groups introduced here to encrypt a network $\mathcal {N}(t)$, any pair of two elements of $\mathcal {F}$ is equivalent to each other, such as equivalent Abelian additive groups on Topsnut-gpws.

\section{Conclusion and discussion}

Motivated from convenient, useful and powerful matrices used in computation and investigation of today's networks, we have introduced \emph{Topcode-matrices} that differ from popular matrices applied in linear algebra and computer science. Topcode-matrices can use numbers, letters, Chinese characters, sets, graphs, algebraic groups \emph{etc.} as their elements. One important thing is that Topcode-matrices of numbers can derive easily number strings, since number strings are TB-paws used in information security. And we have obtained several operations on Topcode-matrices, especially, our additive operation and subtractive operation help us to build up every-zero Topcode-matrix groups and every-zero number string groups, and so on.

Topsnut-matrices, Topcode-matrices, Hanzi-matrices, adjacency ve-value matrices and pan-Topcode-matrices have been discussed. Particular Topcode-matrices introduced here are graceful Topcode-matrix, set-ordered graceful Topcode-matrix, odd-graceful Topcode-matrix, set-ordered odd-graceful Topcode-matrix, elegant Topcode-matrix, edge-magic total Topcode-matrix, odd-edge-magic total Topcode-matrix, edge sum-difference Topcode-matrix, strongly graceful Topcode-matrix, strongly odd-graceful Topcode-matrix, $(k,d)$-graceful Topcode-matrix, and so on. Some connections between these Topcode-matrices have been proven. The definition of a topological graphic password (Topsnut-gpw) is: \emph{A Topsnut-gpw consists of a no-colored $(p,q)$-graph $G$ and an evaluated Topcode-matrix $T_{code}$ such that $G$ can be colored by $T_{code}$}. No doubt, Topcode-matrices will provide us more new things in operations, parameters and topics of graph theory.

For further investigation of Topcode-matrices we propose the following questions:

\begin{asparaenum}[\textbf{\textrm{Question}} 1. ]
\item Characterize a Topcode-matrix $T_{code}$ such that each simple graph $G$ derived by $T_{code}$ (that is $A_{vev}(G)=T_{code}$) is connected.
\item Each graceful Topcode-matrix $T_{code}$ (resp. each odd-graceful Topcode-matrix) corresponds a tree admitting a graceful labelling (resp. an odd-graceful labelling) when $|(XY)^*|=q+1=|E^*|+1$. Conversely, each tree corresponds a graceful Topcode-matrix (resp. an odd-graceful Topcode-matrix). We conjecture: \emph{Each tree with diameter less than three admits a splitting graceful (resp. odd-graceful) coloring}. Some trees were shown to support this conjecture in \cite{Yao-Mu-Sun-Zhang-Yang-Wang-Wang-Su-Ma-Sun-2019}.
\item Find a set of Topcode-matrices $T^1_{code}$, $T^2_{code}$, $\dots$, $T^m_{code}$, such that there exists a transformation $F_{i,j}$ holding $T^j_{code}=F_{i,j}(T^i_{code})$ for any pair of $i,j\in [1,m]$, see some examples shown in the proof of Theorem \ref{thm:equivalent-Topcode-matrices}.
\item A Topcode-matrix $T_{code}$ defined in Definition \ref{defn:Topcode-matrix} corresponds a graph. We have a transformation $T'_{code}=f(T_{code})$ by a function $f:$ $x'_i=f(x_i)$, $e'_i=f(e_i)$ and $y'_i=f(y_i)$ with $i\in [1,q]$ such that $x'_i,e'_i,y'_i$ are non-negative integers, where $T'_{code}=(X'~E'~Y')^{T}$. There are: (c-1) $x'_i\neq y'_i$ for each $(x'_i\quad e'_i\quad y'_i)^{T}$; (c-2) $e'_i\neq e'_j$ if both $e'_i,e'_j$ have a common end. We call $T'_{code}$ a \emph{v-proper non-negative integer Topcode-matrix} of $T_{code}$ if (c-1) holds true, an \emph{e-proper non-negative integer Topcode-matrix} of $T_{code}$ if (c-2) holds true, a \emph{ve-proper non-negative integer Topcode-matrix} of $T_{code}$ if both (c-1) and (c-2) hold true. Let $\Delta(T_{code})=\max \{|N_{ei}(w_i)|:w_i\in (XY)^*\}$, find:

\quad (1-1) The number $\chi=\min\max (X'Y')^*$ over all v-proper non-negative integer Topcode-matrices of $T_{code}$. It was conjectured that $\chi\leq \lceil \frac{\Delta(T_{code})+|(X'Y')^*|}{2}\rceil$ (Ref. \cite{Bondy-2008}), where $T'_{code}=(X'~E'~Y')^{T}$ is a maximum clique Topcode-matrix of $T_{code}$.

\quad (1-2) The number $\min\max (X'Y')^*$ over all v-distinguishing Topcode-matrices that are v-proper non-negative integer Topcode-matrices of $T_{code}$.

\quad (1-3) The number $\min\max (X'Y')^*$ over all adjacent v-distinguishing Topcode-matrices that are v-proper non-negative integer Topcode-matrices of $T_{code}$.

\quad (2-1) The number $\min\max (E')^*$ over all e-proper non-negative integer Topcode-matrices of $T_{code}$.

\quad (2-2) The number $\chi'_{as}=\min\max (E')^*$ over all adjacent e-distinguishing Topcode-matrices which are e-proper non-negative integer Topcode-matrices of $T_{code}$. It was conjectured that $\chi'_{as}\leq \Delta(T_{code})+2$ (Ref. \cite{Zhang-Liu-Wang-Applied-Mathematics-2002}).

\quad (3-1) The number $\chi''=\min\max (X'Y')^*\cup (E')^*$ over all ve-proper non-negative integer Topcode-matrices of $T_{code}$. It was conjectured $\chi''\leq \Delta(T_{code})+2$.

\quad (3-2) The number $\chi''_{as}=\min\max (X'Y')^*\cup (E')^*$ over all adjacent total-distinguishing Topcode-matrices which are ve-proper non-negative integer Topcode-matrices of $T_{code}$. It was conjectured that $\chi''_{as}\leq \Delta(T_{code})+3$.

\item Let $G(T_{code})$ be the set of graphs corresponding to a graceful/odd-graceful Topcode-matrix $T_{code}$. (i) There exists at least a simple graph (having no multiple-edge and self-edge) in $G(T_{code})$. (ii) If there is a graph $H\in G(T_{code})$ holding $|V(H)|\geq |V(G)|$ for any $G\in G(T_{code})$, so $H$ is a tree. (iii) What connections are there among the graphs of $G(T_{code})$? (vi) If $|V(H^*)|\leq |V(G)|$ for a graph $H^*\in G(T_{code})$ and any $G\in G(T_{code})$, then $H^*$ has no two vertices $u,v$ holding $f(u)=f(v)$ true, where $f$ is a graceful/odd-graceful labelling of $H^*$.
\item Find conditions to two Topcode-matrices $T_{code}(G)$ corresponding a graph $G$ and $T'_{code}(H)$ corresponding a graph $H$, such that $G\cong H$ if $C_{(c,l)(a,b)}(T_{code}(G))=T'_{code}(H)$ defined in (\ref{eqa:graphs-isomorphic-Topcode-matrices}).
\item \label{problem:infinite-sets} \cite{Yao-bing2019-8-4-8-6} For three given positive integers $M,m,n$, any infinite non-negative integer string $\{a_p\}^{\infty}_1$ contains $m~(\geq 4)$ non-negative integer string segments $H_i=a_{i,1}a_{i,2}\dots a_{i,n}$ with $i\in[1,m]$, and these non-negative integer strings form a number string group based on the additive v-operation: There are a sequence $r_1,r_2,\dots,r_b$ and a fixed $k\in [1,m]$ such that
\begin{equation}\label{eqa:string-group}
[a_{i,r_t}+a_{j,r_t}-a_{k,r_t}]~(\bmod~ M)=a_{\lambda,r_t}
\end{equation}
where $\lambda=i+j-k~(\bmod~ M)\in [1,m]$ and $t\in [1,b]$.

\quad (\ref{problem:infinite-sets}.1) \cite{Yao-bing2019-8-4-8-6} Does any Chinese paragraph appear in an infinite string of Chinese characters? We can prove a simple result about this problem:

\begin{thm}\label{thm:lemma-equivalent-proof}
There exists an infinite number string of numbers $0$, $1$, $2$, $3$, $4$, $5$, $6$, $7$, $8$, $9$ such that it contains any number string $k(1)$ $k(2)$ $\cdots k(m_k)$ of numbers $0$, $1$, $2$, $3$, $4$, $5$, $6$, $7$, $8$, $9$, each $k(j)=$ $a_{k(j)}b_{k(j)}c_{k(j)}d_{k(j)}$ with $j\in [1,m_k]$ is the subscript of a Chinese character $H_{a_{k(j)}b_{k(j)}c_{k(j)}d_{k(j)}}$ in \cite{GB2312-80}.
\end{thm}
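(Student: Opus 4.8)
The plan is to exhibit a single explicit infinite digit string that is \emph{universal} (disjunctive), in the sense that every finite string over the alphabet $\{0,1,\dots,9\}$ occurs in it as a contiguous substring. Once such a string is available, the assertion follows immediately: any concatenation $k(1)k(2)\cdots k(m_k)$ in which each $k(j)=a_{k(j)}b_{k(j)}c_{k(j)}d_{k(j)}$ is a four-digit GB2312-80 code is, after forgetting the grouping into blocks of four, just one particular finite string of digits over $\{0,1,\dots,9\}$.

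First I would fix a bijective enumeration $w_1,w_2,w_3,\dots$ of all finite nonempty strings over $\{0,1,\dots,9\}$, ordered by length and, within each length, lexicographically, and then define the infinite string
\begin{equation}\label{eqa:universal-string-lemma}
S=w_1w_2w_3\cdots w_n\cdots
\end{equation}
obtained by concatenating the blocks $w_n$ in order. (Champernowne's string $0123456789101112\cdots$, the concatenation of the decimal expansions of $0,1,2,\dots$, would serve equally well and is perhaps more quotable.)

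Next I would check universality of $S$: given an arbitrary finite digit string $u$, choose the index $n$ with $w_n=u$; by (\ref{eqa:universal-string-lemma}) the block $w_n$ appears in $S$ at the known offset $1+\sum_{i<n}|w_i|$, so $u$ is a substring of $S$. Specializing $u$ to the digit string $a_{k(1)}b_{k(1)}c_{k(1)}d_{k(1)}\cdots a_{k(m_k)}b_{k(m_k)}c_{k(m_k)}d_{k(m_k)}$ attached to any Chinese paragraph $k(1)k(2)\cdots k(m_k)$ encoded through \cite{GB2312-80} then completes the argument, so $S$ is the desired infinite number string.

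I do not anticipate a genuine obstacle here; the statement is essentially the elementary fact that disjunctive sequences exist, and the only point that deserves a word of care is the translation step — a sequence of Chinese characters becomes, via the GB2312-80 code, a sequence of four-digit blocks and hence a single finite digit string, which the enumeration is guaranteed to list. If a stronger conclusion were desired (each such paragraph occurring infinitely often, or with positive lower density), I would instead observe that every $w_n$ recurs infinitely often in $S$ because the enumeration contains arbitrarily long strings having $w_n$ as a factor; that refinement is also routine.
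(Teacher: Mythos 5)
Your proposal is correct and follows essentially the same route as the paper: the paper also proves existence by explicitly planting every paragraph's $4m_k$-digit code string inside an infinite digit string (it phrases this as inserting the strings $k(1)k(2)\cdots k(m_k)$ into an arbitrary infinite string $T_b(Hanzi)$), which is the same universality idea as your concatenation of an enumeration of all finite digit strings. Your Champernowne-style construction is, if anything, cleaner and more explicit than the paper's insertion argument, but there is no substantive difference in method.
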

\begin{proof} As known, ``GB2312-80 Encoding of Chinese characters'' in \cite{GB2312-80} contains $M=6763$ Chinese characters, where each Chinese character $H_{abcd}$ has its own code $abcd$ of numbers $0$, $1$, $2$, $3$, $4$, $5$, $6$, $7$, $8$, $9$. Let $T_b(Hanzi)$ be an infinite string of numbers $0$, $1$, $2$, $3$, $4$, $5$, $6$, $7$, $8$, $9$, and let $G_k=$ $H_{k(1)}H_{k(2)}\cdots H_{k(m_k)}$ be a paragraph written by $m_k$ Chinese characters defined in \cite{GB2312-80}, we call $m_k$ to be the length of the Chinese paragraph $G_k$, where $k(j)=$ $a_{k(j)}b_{k(j)}c_{k(j)}d_{k(j)}$ with $j\in [1,m_k]$. So, we have $2^{M\cdot m_k}$ paragraphs with length $m_k$, each $G_k$ can be expressed as a number string
$${
\begin{split}
&\quad k(1)k(2)\cdots k(m_k)=a_{k(1)}b_{k(1)}c_{k(1)}d_{k(1)}\\
&a_{k(2)}b_{k(2)}c_{k(2)}d_{k(2)}\cdots a_{k(m_k)}b_{k(m_k)}c_{k(m_k)}d_{k(m_k)}
\end{split}}$$ with $4m_k$ bytes. We insert $k(1)$ $k(2)$ $\cdots $ $k(m_k)$ into $T_b(Hanzi)$, the new number string contains all Chinese paragraphs written by Chinese characters $H_{abcd}$ with code $abcd$ defined in \cite{GB2312-80}, also, contains $T_b(Hanzi)$. The proof of this theorem is complete
\end{proof}

\begin{thm}\label{thm:infinite-sequence}
There exists an infinite sequence of numbers $0$, $1$, $2$, $3$, $4$, $5$, $6$, $7$, $8$, $9$ such that it contains each number string group based on the additive v-operation ``$\oplus$'' defined in (\ref{eqa:group-operation-1}) and (\ref{eqa:group-operation-2}), or the subtractive v-operation ``$\ominus$'' defined in (\ref{eqa:subtraction-group-operation-1}) and (\ref{eqa:subtraction-group-operation-2}).
\end{thm}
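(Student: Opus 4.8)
The plan is to mimic the proof of Theorem \ref{thm:lemma-equivalent-proof}, using the observation that every number string group is a \emph{finite} combinatorial object, so that the collection of all of them is countable and a single infinite sequence can be produced by concatenating a list that enumerates all of them.

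First I would make precise the finite data attached to a number string group. Following the technique described earlier for constructing number string groups, a group $\{I_M;\oplus\}$ (resp. $\{I_M;\ominus\}$) is completely determined by the modulus $M$, the common length $n$ of its members, a designated zero index $k\in[1,M]$, and the $M$ strings $S_1,\dots,S_M$ with $S_i=x^i_1x^i_2\cdots x^i_n$ over the alphabet $\{0,1,\dots,9\}$. Fix an injective encoding $\sigma$ of the positive integers by finite digit-strings (with length markers so that the strings can be reparsed), and encode the whole group by the single finite number string $W=\sigma(M)\sigma(n)\sigma(k)S_1S_2\cdots S_M$, which is again a string over $\{0,1,\dots,9\}$.

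Next, since the set of all finite strings over $\{0,1,\dots,9\}$ is countable, so is the sub-collection $\mathcal{W}$ of those $W$ encoding a valid number string group under $\oplus$ or under $\ominus$; validity (the every-zero, closure-and-uniqueness, inverse and associativity conditions governed by the equations (\ref{eqa:group-operation-1}) and (\ref{eqa:group-operation-2}), resp. (\ref{eqa:subtraction-group-operation-1}) and (\ref{eqa:subtraction-group-operation-2})) is a checkable property of $W$. List $\mathcal{W}$ as $W_1,W_2,W_3,\dots$ and take the required infinite sequence to be the concatenation $W_1W_2W_3\cdots$, an infinite string over $\{0,1,\dots,9\}$. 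For any number string group, its code $W$ equals some $W_j$, so $W$ --- and hence each member string $S_i$, as well as the concatenation $S_1S_2\cdots S_M$ --- occurs as a contiguous substring of $W_1W_2W_3\cdots$; this is the sense in which the sequence ``contains'' the group, parallel to Theorem \ref{thm:lemma-equivalent-proof}. One may interleave the $W_j$ diagonally instead of concatenating them, but this refinement is not needed.

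The only genuine obstacle is definitional rather than mathematical: one must fix once and for all what ``contains a number string group'' should mean --- whether the member strings appearing individually as substrings, or the concatenated code appearing --- and fix $\sigma$ so that distinct groups receive distinct, unambiguously parseable codes and the boundaries between consecutive $W_j$ are recoverable. Once this bookkeeping is pinned down, the result follows by the same countability-plus-concatenation device already used for Theorem \ref{thm:lemma-equivalent-proof}, applied uniformly to both the additive and the subtractive operations; no new combinatorial content is required.
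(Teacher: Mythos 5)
Your proposal is correct and rests on the same device the paper uses: since every number string group is a finite collection of finite digit strings (hence the totality of them is countable), one builds the infinite sequence by inserting/concatenating them all, exactly as in the paper's remark following the theorem, which takes an arbitrary infinite sequence $S_{eque}$ and inserts each member string of every every-zero number string group into it. Your version is merely more explicit about the encoding and enumeration bookkeeping, which the paper leaves informal.
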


\quad In fact, let $S_{eque}=\{x_i\}^\infty_1$ be an infinite sequence with $x_i\in [0,9]$, and let $S_i=a_{i}a_{i+1}a_{i+2}\cdots a_{i+m}$ be in an every-zero number string group $\{F_m;\oplus\}$ (resp. $\{F_m;\ominus\}$). We insert each $S_i$ into $S_{eque}$ to get a new infinite sequence $S^*_{eque}$.

\quad (\ref{problem:infinite-sets}.2) Does any infinite number string $S^*$ of umbers $0$, $1$, $2$, $3$, $4$, $5$, $6$, $7$, $8$, $9$ contains any number string $S_\Delta=k(1)$ $k(2)$ $\cdots$ $k(m_k)$, that is $S_\Delta\subset S^*$, where each $$k(j)=a_{k(j)}b_{k(j)}c_{k(j)}d_{k(j)}$$ with $j\in [1,m_k]$ to be a Chinese character defined in \cite{GB2312-80}?

\quad (\ref{problem:infinite-sets}.3) Is any infinite triangular planar graph 4-colorable? Does any infinite triangular planar graph admitting a 4-coloring contain any planar graph having finite number of vertices? Here, an infinite triangular planar graph has infinite vertices, no outer face (or, its outer face degrades into the \emph{infinite point}) and each inner face is a triangle. See an infinite triangular planar graph with a 4-coloring is shown in Fig.\ref{fig:4-colorable}, so any infinite triangular planar graph admitting a 4-coloring can be tiled fully by four triangles of Fig.\ref{fig:4-colorable}(a). It may be interesting to study infinite planar graphs, or other type of infinite graphs.

\begin{figure}[h]
\centering
\includegraphics[width=8.6cm]{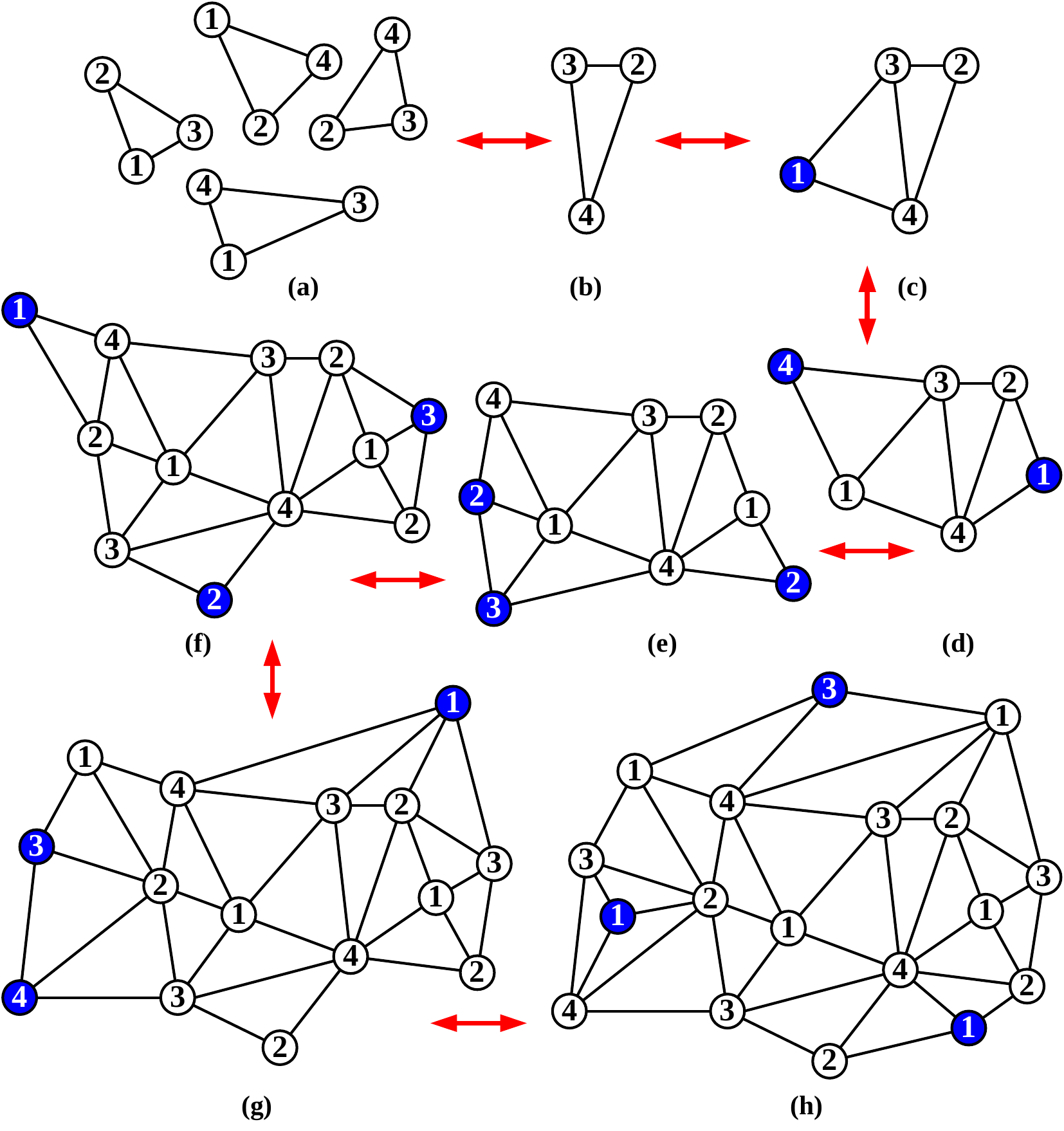}\\
\caption{\label{fig:4-colorable} {\small A process of constructing an infinite triangular planar graph admitting a 4-coloring.}}
\end{figure}

\item For a set $\{x_{i,j}:i\in[1,m], j\in [1,n]\}$, we write a string $S=x_{i_1,j_1}x_{i_1,j_1}\cdots x_{i_p,j_p}$ with $p=mn$, and $x_{i_s,j_s}\neq x_{i_t,j_t}$ if one of $i_s\neq i_t$ and $j_s\neq j_t$ to be true. If any pair of two consecutive elements $x_{i_k,j_k}x_{i_{k+1},j_{k+1}}$ holds $|i_k-i_{k+1}|\leq 1$ and $|j_k-j_{k+1}|\leq 1$ true, we call $S$ an \emph{adjacent string}. Find all possible adjacent strings for the given set $\{x_{i,j}:i\in[1,m], j\in [1,n]\}$.
\item Every tree $T$ of $p$ vertices corresponds an every-zero Topcode$^+$-matrix group $\{F_p;\oplus\}$ with $p\geq 1$, such that $T$ has a ve-graceful evaluated coloring on $\{F_p;\oplus\}$.
\item Suppose that a graceful Topcode-matrix $T_{code}$ corresponds a set $F(T_{code})$ of graphs, where $F(T_{code})=\{G:~A_{vev}(G)=T_{code}\}$. Find some connection among the graphs of $F(T_{code})$.
\item For a given odd-graceful Topcode-matrix $T_{code}$, find all Topcode-matrices $T^k_{code}$ such that $(T_{code},\overline{T}_{code})$ to be pairs of twin odd-graceful Topcode-matrices.
\item If the elements of the graph coefficient matrix $A_{graph}$ defined in (\ref{eqa:graph-coefficient-matrix}) forms a graph group, how about the graph vector $Y(L)$ in (\ref{eqa:graph-coefficient-matrix-equation})?
\item As known, a graph $G$ with a coloring/labelling has its own Topcode-matrix $A_{vev}(G)$, and $A_{vev}(G)$ derives $M$ number strings $T^j_b(G)=a_1a_2\cdots a_n$ with $j\in [1,M]$. Conversely, each number string $T^j_b(G)=a_1a_2\cdots a_n$ should reconstruct the original graph $G$. Does any number string produce a graph?
\item Observe Fig.\ref{fig:contract-splitting}, we define a multiple-edge complete graph $K^*_n$ of $n$ vertices in the way: vertex set $V(K^*_n)=\{x_i:~i\in [1,n]\}$, and edge set
$$E(K^*_n)=\bigcup_{i,j\in[1,n],i\neq j} E_{ij},$$ where $E_{ij}=\{e^s_{ij}=x_ix_j:~s\in[1,m_{ij}]\}$ with $m_{ij}\geq 1$. Thereby, a multiple-edge complete graph $K^*_n$ of $n$ vertices has $M$ edges in total, where $M=\sum_{i,j\in[1,n],i\neq j} m_{ij}$. How to split the vertices of a multiple-edge complete graph $K^*_n$ of $M$ edges into a previously given $n$-colorable simple graph $G$ of $M$ edges? Conversely, it is nor hard to v-contract the vertices of a $n$-colorable simple graph to obtain a multiple-edge complete graph of $n$ vertices. (1) Suppose that $K^*_n$ is a public key, then it is not easy to find a private key in coding theory. (2) A maximal planar graph can be contracted into a multiple-edge complete graph $K^*_4$, so it may be interesting to find conditions for $K^*_4$, such that $K^*_4$ can be v-split into any given maximal planar graph. If this problem holds true, we get a method to prove 4CC (Ref. \cite{Jin-Xu-2019-1-978-7-03-060377-7}).

\begin{figure}[h]
\centering
\includegraphics[width=8.6cm]{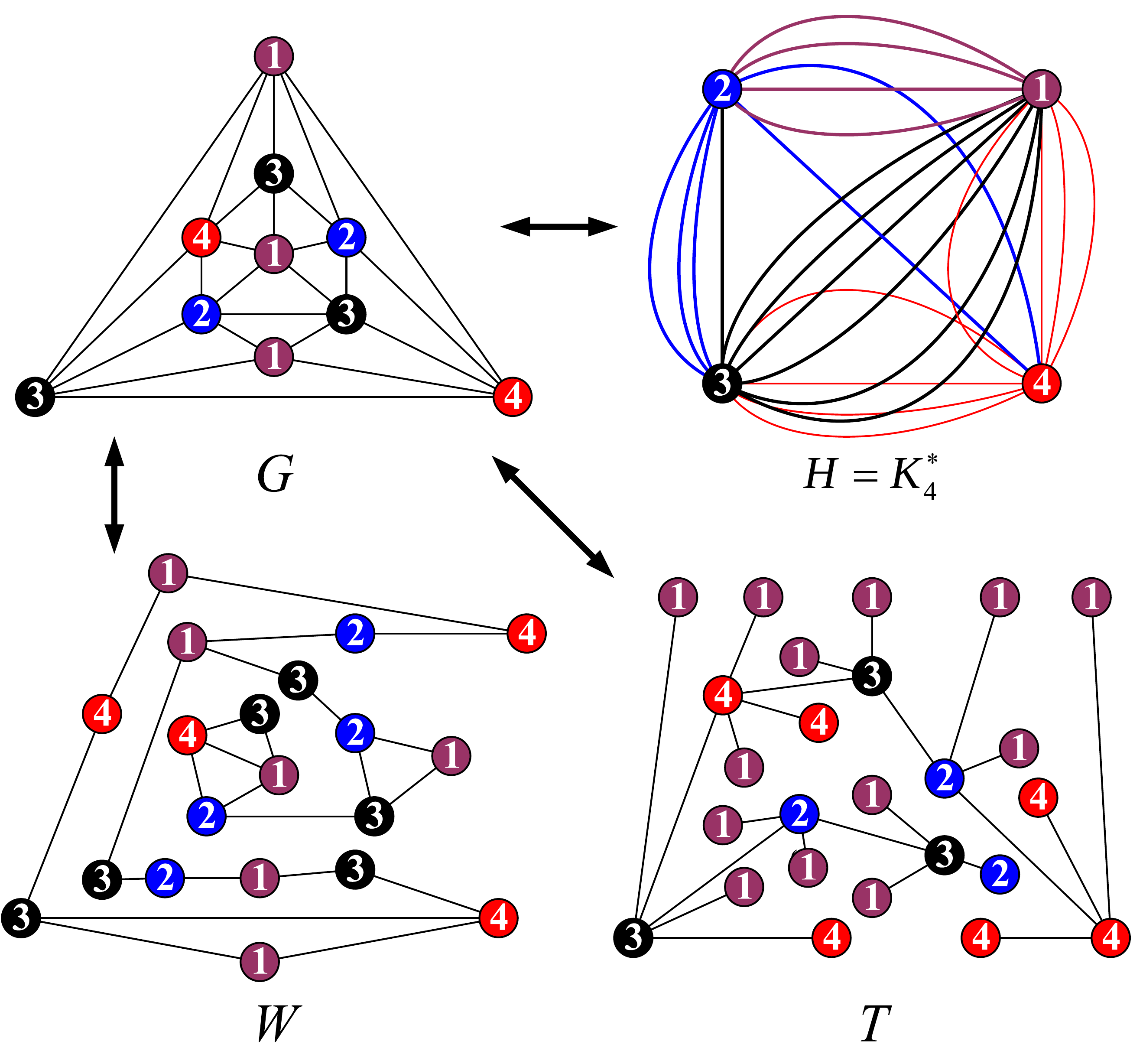}\\
\caption{\label{fig:contract-splitting} {\small A 4-colorable maximal planar graph $G$ can be contracted into a multiple-edge complete graph $K^*_4$, conversely, $K^*_4$ can be v-split into $G$. The graph $W$ is obtained from $G$ by the 2-degree v-splitting operation, and $T$ obtained from $G$ by a series of v-splitting operations is a tree with maximum leaves.}}
\end{figure}

\quad Let $P=xyz$ be a path of 2 length in a planar graph $G$. We split $y$ into two $y'$ and $y''$, such that $N_{ei}(y)=N_{ei}(y')\cup N_{ei}(y'')$ with $x,z\in N_{ei}(y')$ (it is allowed that $N_{ei}(y'')=\emptyset$), and add three edges $y'y''$, $xy''$ and $zy''$, the resultant graph $(G\wedge y)+\{xy'',y'y'',zy''\}$ is still a planar graph, and has a rhombus $xy'zy''$ (see Fig.\ref{fig:Lingxing-operation}). So we have another way for showing possibly 4CC the rhombus expanded-contracted operation system introduced in \cite{wang-hong-yu-2018-doctor-thesis}. One example about the rhombus expanded-contracted operation system is shown in Fig.\ref{fig:Lingxing-operation-WHY}.

\begin{figure}[h]
\centering
\includegraphics[width=8.6cm]{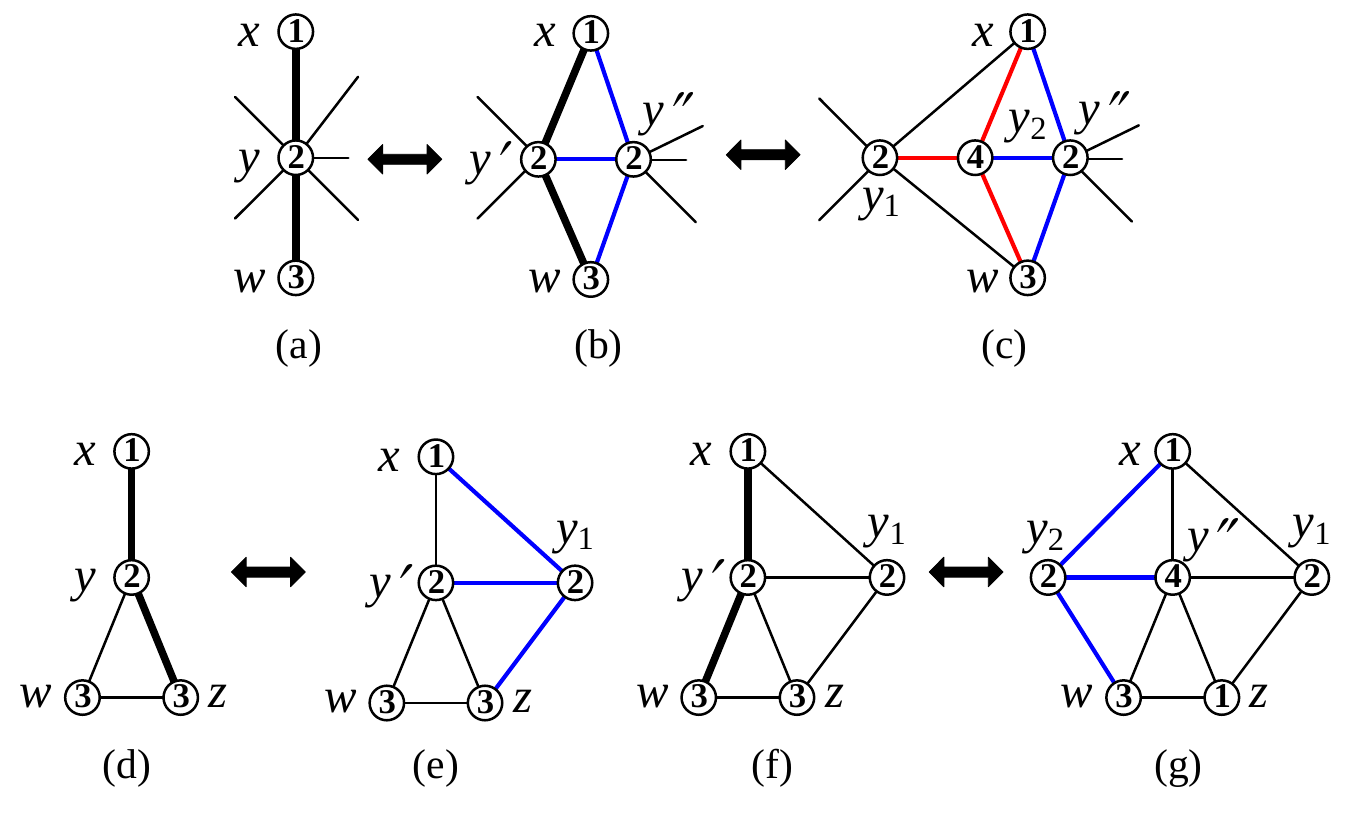}\\
\caption{\label{fig:Lingxing-operation} {\small The rhombus expanded-contracted operation system introduced in \cite{wang-hong-yu-2018-doctor-thesis}, where the vertex $y$ is split into $y'$ and $y''$ from (a) to (b), and the vertex $y'$ is split into $y_1$ and $y_2$ from (b) to (c); conversely, identifying two vertices $y_1$ and $y_2$ into one vertex $y'$ from (c) to (b), identifying two vertices $y'$ and $y''$ into one vertex $y$ from (b) to (a). }}
\end{figure}

\begin{figure}[h]
\centering
\includegraphics[width=8.6cm]{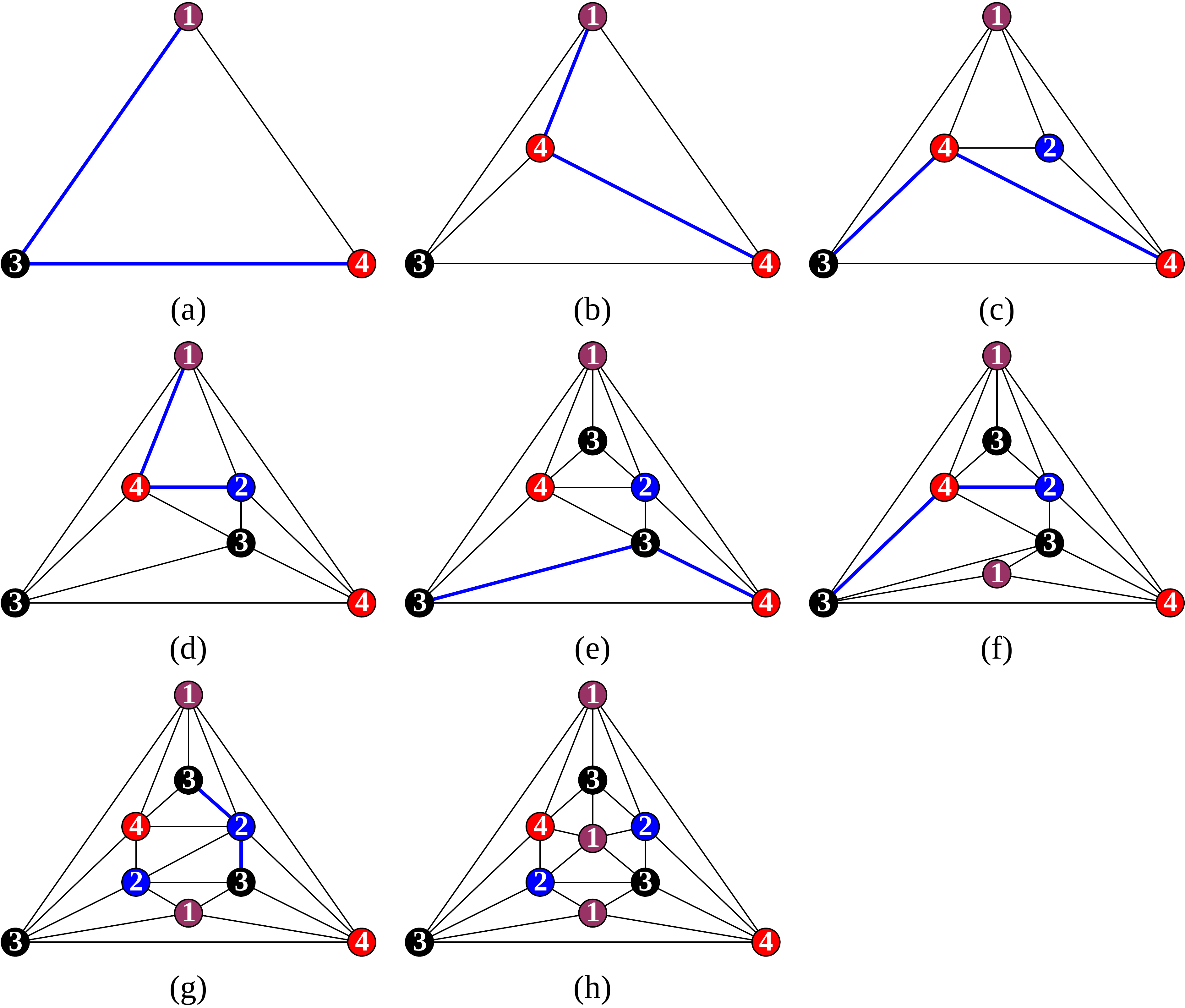}\\
\caption{\label{fig:Lingxing-operation-WHY} {\small A maximal triangular planar graph admitting a 4-coloring obtained by the rhombus expanded-contracted operation system introduced in \cite{wang-hong-yu-2018-doctor-thesis}.}}
\end{figure}

\end{asparaenum}

\vskip 0.4cm

We have known that a Topcode-matrix may corresponds two or more colored graphs, so Theorem \ref{thm:equivalent-Topcode-matrices} means that a collection of colored graphs corresponding an $A$-type Topcode-matrix is equivalent to another collection of colored graphs corresponding a $B$-type Topcode-matrix. Topcode-matrices will distribute us more applications for security of networks and information, since a Topcode-matrix may corresponds many colored/labelled graphs, also, Topsnut-gpws. Topcode-matrices have contained many unsolved mathematical problems, so they can force attackers to give up breaking down networks as the sizes of Topcode-matrices are enough larger.

\section*{Acknowledgment}

The research was supported by the National Natural Science Foundation of China under grants No. 61163054, No. 61363060 and No. 61662066.



%

\end{CJK}

\end{document}